\newif\ifreport\reporttrue
\newtheorem{theorem}{Theorem}
\newtheorem{corollary}{Corollary}
\newtheorem{definition}{Definition}
\newtheorem{lemma}{Lemma}
\def\blue{\color{blue}}
\def\red{\color{red}}
\def\violet{\color{violet}}
\begin{document}

\title{How Does Data Freshness Affect Real-time Supervised Learning?}
        
\author{Md Kamran Chowdhury Shisher,~\IEEEmembership{Student~Member,~IEEE,}
        Yin~Sun,~\IEEEmembership{Senior~Member,~IEEE}
\IEEEcompsocitemizethanks{\IEEEcompsocthanksitem M.K.C. Shisher and Y. Sun are with the Department of Electrical and Computer Engineering, Auburn University, Auburn, AL, 36849. This paper is accepted in part at ACM MobiHoc 2022 \cite{ShisherMobihoc}. This work was supported in part by the NSF grant CCF-1813078 and the ARO grant W911NF-21-1-0244.}
}

\newcommand{\ignore}[1]{{}}
\pagestyle{plain}
\def\blue{\color{blue}}
\maketitle

\begin{abstract}
In this paper, we analyze the impact of data freshness on real-time supervised learning, where a neural network is trained to infer a time-varying target (e.g., the position of the vehicle in front) based on features (e.g., video frames) observed at a sensing node (e.g., camera or lidar). One might expect that the performance of real-time supervised learning degrades monotonically as the feature becomes stale. Using an information-theoretic analysis, we show that this is true if the feature and target data sequence can be closely approximated as a Markov chain; it is not true if the data sequence is far from Markovian. Hence, the prediction error of real-time supervised learning is a function of the Age of Information (AoI), where the function could be non-monotonic. Several experiments are conducted to illustrate the monotonic and non-monotonic behaviors of the prediction error. To minimize the inference error in real-time, we propose a new ``selection-from-buffer'' model for sending the features, which is more general than the ``generate-at-will'' model used in earlier studies. By using Gittins and Whittle indices, low-complexity scheduling strategies are developed to minimize the inference error, where a new connection between the Gittins index theory and Age of Information (AoI) minimization is discovered. These scheduling results hold (i) for minimizing general AoI functions (monotonic or non-monotonic) and (ii) for general feature transmission time distributions. Data-driven evaluations are presented to illustrate the benefits of the proposed scheduling algorithms.
\end{abstract}

\begin{IEEEkeywords}
Age of Information, supervised learning, scheduling, Markov chain, buffer management.
\end{IEEEkeywords}

\section{Introduction}
\IEEEPARstart{I}{n} recent years, the proliferation of networked control and cyber-physical systems such as autonomous vehicle, UAV navigation, remote surgery, industrial control system has significantly boosted the need for real-time prediction. For example, an autonomous vehicle infers the trajectories of nearby vehicles and the intention of pedestrians based on lidars and cameras installed on the vehicle \cite{mozaffari2020deep}. In remote surgery, the movement of a surgical robot is predicted in real-time. These prediction problems can be solved by real-time supervised learning, where a neural network is trained to predict a time varying target based on feature observations that are collected from a sensing node. Due to data processing time, transmission errors, and  queueing delay, the features delivered to the neural predictor may not be fresh. The performance of networked intelligent systems depends heavily on the accuracy of real-time  prediction. Hence, it is important to understand how data freshness affects the performance of real-time supervised learning. 

To evaluate data freshness, a metric \emph{Age of information} (AoI) was introduced in \cite{kaul2012real}. Let $U_t$ be the generation time of the freshest feature received by the neural predictor at time $t$. Then, the AoI of the features, as a function of time $t$, is defined as $\Delta(t)=t-U_t$, which is the time difference between the current time $t$ and the generation time $U_t$ of the freshest received feature. The age of information concept has gained a lot of attention from the research communities. Analysis and optimization of AoI were studied in various networked systems, including remote estimation, control system, and edge computing. In these studies, it is commonly assumed that the system performance degrades monotonically as the AoI grows. Nonetheless, this is not always true in real-time supervised learning. For example, it was observed that the predictor error of day-ahead solar power forecasting is not a monotonic function of the AoI, because there exists an inherent daily periodic changing  
pattern in the solar power time-series data \cite{shisher2021age}. 

In this study, we carry out several experiments and present an information-theoretic analysis to interpret the impact of data freshness in real-time supervised learning. In addition, we design buffer management and transmission scheduling strategies to improve the accuracy of real-time supervised learning. The key contributions of this paper are summarized as follows:
\begin{itemize}
\item We develop an  information-theoretic approach to  analyze how the AoI affects the performance of  real-time supervised learning. It is shown that the prediction errors (training error and inference error) are functions of AoI, whereas they could be non-monotonic AoI functions --- this is a key difference from previous studies on AoI functions, e.g., \cite{kosta2017age,SunNonlinear2019, sun2017update,Tripathi2019}. When the target and feature data sequence can be closely approximated as a Markov chain, the prediction errors are non-decreasing functions of the AoI. When the target and feature data sequence is far from Markovian, the prediction errors could be non-monotonic in the AoI (see Sections 2-3).

\item We conduct several experiments and observe that, due to long-range dependence, response delay, and/or communication delay, the target and feature data sequence can be far from Markovian and the corresponding prediction errors are  non-monotonic AoI functions. In certain scenarios, even a fresh feature (AoI=0) may generate larger prediction errors than stale features (AoI $>$ 0), i.e., the freshest feature may not be the best feature; \ifreport see Figs. 2-3 for an illustration. \else see Fig. 2 for an illustration. \fi

\begin{figure*}[h]
  \centering
  
  \begin{subfigure}[t]{0.45\textwidth}
\includegraphics[width=\textwidth]{./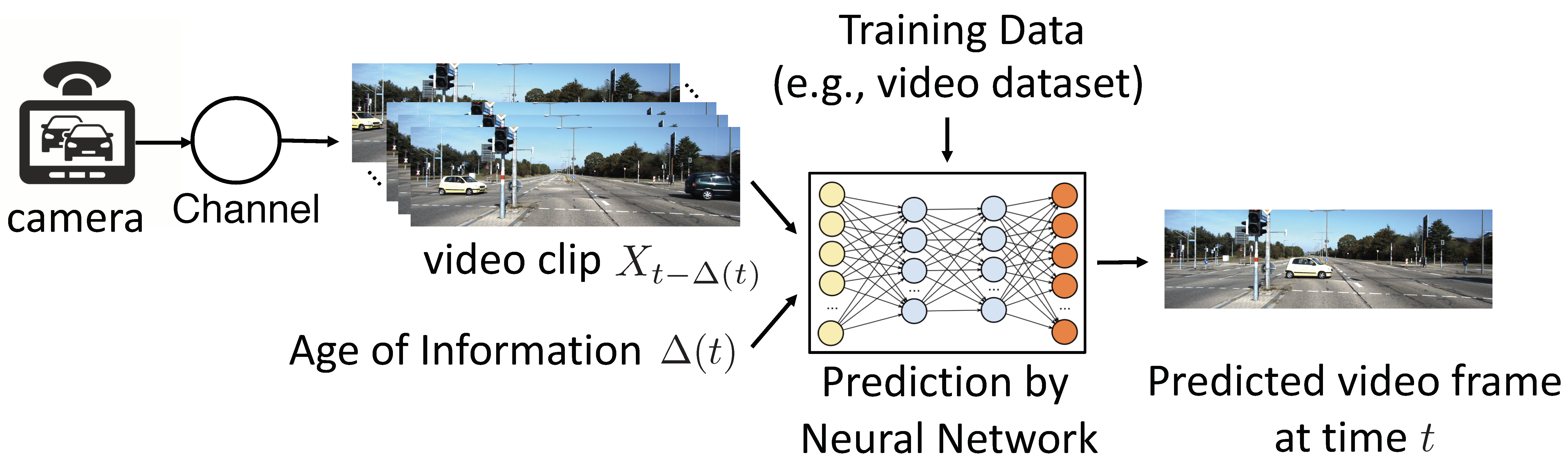}
  \subcaption{Video prediction Task}
\end{subfigure}
%
\hspace{3mm} 
\begin{subfigure}[t]{0.20\textwidth}
\includegraphics[width=\textwidth]{./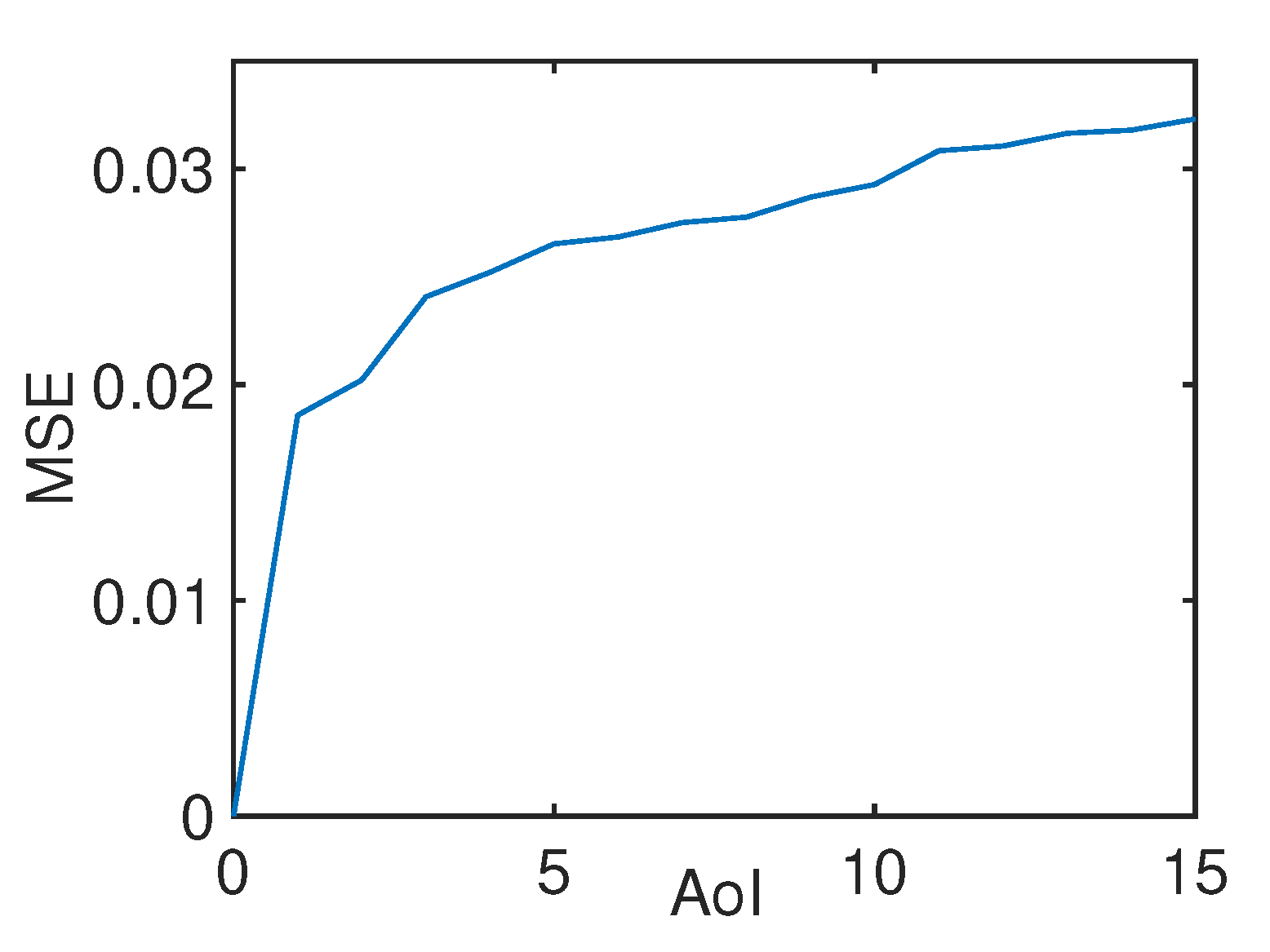}
\subcaption{Training Error vs. AoI}
\end{subfigure}
%
\hspace{3mm}
\begin{subfigure}[t]{0.20\textwidth}
\includegraphics[width=\textwidth]{./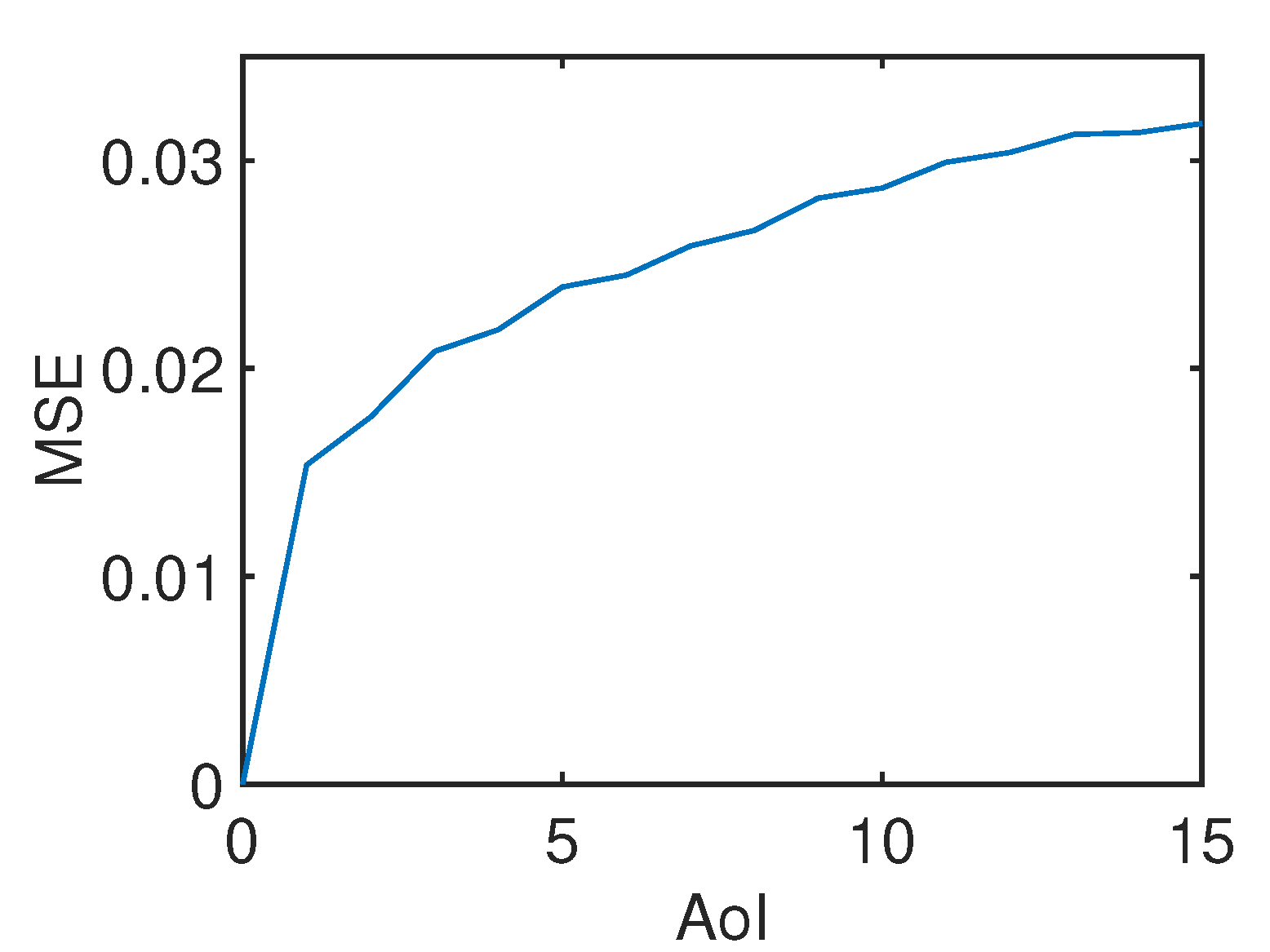}
\subcaption{Inference Error vs. AoI}
\end{subfigure}
\caption{\small Performance of supervised learning based video prediction. The experimental results in (b) and (c) are regenerated from \cite{lee2018stochastic}. The training and inference errors are non-decreasing functions of the AoI. \label{fig:learning}}
\end{figure*}

\item 
We propose  buffer management and transmission scheduling strategies to minimize the inference error. Because the inference error could be a non-monotonic AoI function, we introduce a novel ``selection-from-buffer'' model for feature transmissions, which is more general than the ``generate-at-will'' model used in many earlier studies, e.g., \cite{sun2017update, SunNonlinear2019, yates2015lazy}. If the AoI function is non-decreasing, the ``selection-from-buffer'' model achieves same performance as the ``generate-at-will'' model; if the AoI function is non-monotonic, the ``selection-from-buffer'' model can potentially achieve better performance. 

\item In the single-source case, an optimal scheduling policy is devised to minimize the long-term average inference error.
By exploiting a new connection with the Gittins index theory \cite{gittins2011multi}, the optimal scheduling policy is proven to be a threshold policy on the Gittins index (Theorems \ref{theorem5}-\ref{theorem6}), where the threshold can be computed by using a low complexity algorithm like bisection search. This  scheduling policy is more general than the scheduling policies proposed in \cite{SunNonlinear2019, sun2017update}.
\item In the multi-source case, a Whittle index scheduling policy is designed to reduce the weighted sum of the inference errors of the sources. By using the Gittins index obtained in the single-source case, a semi-analytical expression of the Whittle index is obtained (Theorems \ref{theorem7}-\ref{theorem8}), which is more general than the Whittle index formula in \cite[Equation (7)] {Tripathi2019}. 

\item The above scheduling results hold (i) for minimizing general AoI functions (monotonic or non-monotonic) and (ii) for general feature transmission time distributions. Data driven evaluations show that ``selection-from-buffer” with optimal scheduler achieves up to $3$ times smaller inference error compared to ``generate-at-will,” and $8$ times smaller inference error compared to periodic feature updating (see Fig. \ref{fig:singlesourcedifferentsigma}). Whittle index policy achieves up to $2$ times performance gain compared to maximum age first (MAF) policy (see Fig. \ref{fig:multisourceweight}).    
\ignore{\item Whittle index approach is used to develop a multiple-source scheduling policy for minimizing the weighted sum of inference errors of $m$ sources. A semi-analytical expression for finding the Whittle index policy is provided by using the Gittins index.}
\end{itemize}
\ignore{In this work, we analyze the impact of data freshness on real-time supervised learning. We also develop an optimal feature buffering and transmission strategy that minimizes the prediction error and has low complexity.}

\ignore{\begin{figure}[t]
\centering
\includegraphics[width=0.20\textwidth]{./Model_Figure/Learning_Model2.eps}
\caption{\small A real-time supervised learning system. \label{fig:learning}
}
\end{figure}}

\subsection{Related Works}
In recent years, AoI has become a popular research topic \cite{yates2021age}. Average AoI and average peak AoI are studied in many queueing systems \cite{kaul2012real, sun2017update, yates2015lazy}. As surveyed in \cite{SunNonlinear2019}, there exist a number of applications of non-linear AoI functions, such as auto-correlation function \cite{kosta2017age}, estimation error \cite{SunTIT2020, orneeTON2021, klugel2019aoi}, and Shannon's mutual information and conditional entropy \cite{SunNonlinear2019}. In existing studies on AoI, it was usually assumed that the observed data sequence is Markovian and the performance degradation caused by information aging was modeled as a monotonic AoI function. However, practical data sequence may not be Markovian \cite{guo2019credibility, SunNonlinear2019, wang2022framework}. In the present paper, theoretical results
and experimental studies are provided to analyze the performance of real-time supervised learning for both Markovian and non-Markovian time-series data. In \cite{zhang2020taming}, impact of peak-AoI on the convergence speed of online training was analyzed. Unlike online training in \cite{zhang2020taming}, our work considers offline training and online inference.

Moreover, there are significant research efforts on the optimization of AoI functions by designing sampling and scheduling policies. Previous studies \cite{sun2017update, SunNonlinear2019, orneeTON2021, bedewy2021optimal, Tripathi2019, Kadota2018} focused on non-decreasing AoI functions. Recently, a Whittle index based multi-source scheduling policy was derived in \cite{chen2021uncertainty} to minimize Shannon's conditional entropy that could be a non-monotonic function of the AoI. The Whittle index policy in \cite{chen2021uncertainty} requires that (i) the state of each source evolves as binary Markov process, (ii) the AoI function is concave with respect to the belief state of the Markov process, and (iii) the packet transmission time is constant. The results in \cite{SunNonlinear2019, orneeTON2021, sun2017update, bedewy2021optimal, Tripathi2019, Kadota2018, chen2021uncertainty} are not appropriate for minimizing  general (potentially non-monotonic) AoI functions, as considered in the present paper. 
\section{Information-theoretic Measures for Real-time Supervised Learning}
\subsection{Freshness-aware Learning Model}
Consider the real-time supervised learning system illustrated in Fig. \ref{fig:learning}, where the goal is to predict a label $Y_t \in \mathcal Y$ (e.g., the location of the car in front) at each time $t$ based on a feature $X_{t-\Delta(t)}$ (e.g., a video clip) that was generated $\Delta(t)$ seconds ago.~The feature, $X_{t-\Delta(t)}=(V_{t-\Delta(t)}, \ldots, V_{t-\Delta(t)-u+1})$ is a time sequence with length $u$ (e.g., each video clip consisting of $u$ consecutive video frames). We consider a class of popular supervised learning algorithms called \emph{Empirical Risk Minimization (ERM)} \cite{goodfellow2016deep}. \ignore{\cite{vapnik2013nature, goodfellow2016deep, mohri2018foundations}}In freshness-aware ERM algorithms, a neural network is trained to construct an action $a = \phi(X_{t-\Delta(t)},\Delta(t)) \in \mathcal A$ where $\phi: \mathcal X \times \mathcal D \mapsto \mathcal A$ is a function of  feature $X_{t-\Delta(t)}\in\mathcal X$ and its AoI $\Delta(t) \in \mathcal D$. The performance of learning is measured by a loss function $L: \mathcal Y \times \mathcal A \mapsto \mathbb R$, where $L(y,a)$ is the incurred loss if action $a$ is chosen by the neural network when $Y_t=y$. We assume that $\mathcal Y$, $\mathcal X$, and $\mathcal D$ are discrete and finite sets. The loss function $L$ is determined by the \emph{targeted application} of the system.~For example, in neural network based estimation, the loss function is usually chosen as the square estimation error $L_2(\mathbf y,\hat{\mathbf y}) =\|\mathbf y - \hat{\mathbf y}\|^2$, where the action $a = \hat y$ is an estimate of $Y_t = y$.  In softmax regression (i.e., neural network based maximum likelihood
classification), the action $a = Q_Y$ is a distribution of $Y_t$ and the loss function $L_{\text{log}}(y, Q_Y ) = - \text{log}~Q_Y (y)$ is the negative log-likelihood of the label value $Y_t = y$. Therefore, the loss function $L$ characterizes the goal and purpose of a specific application.  

\subsection{Offline Training Error}
The real-time supervised learning system that we consider consists of two phases: \emph{offline training} and \emph{online inference}.~In the offline training phase, the neural network is trained using a training dataset. Let $P_{\tilde Y_0, \tilde X_{- \Theta}, \Theta}$ denote the empirical distribution of the label $\tilde Y_0$, feature $\tilde X_{-\Theta}$, and AoI $\Theta$ in the training dataset, where the AoI $\Theta \geq 0$ of the feature $\tilde X_{-\Theta}$ is the time difference between $\tilde Y_0$ and $\tilde X_{-\Theta}$. In ERM algorithms, the  training problem is formulated as  
\begin{align}\label{eq_trainingerror}
\mathrm{err}_{\mathrm{training}} = \min_{\phi\in \Lambda} \mathbb{E}_{Y,X,\Theta\sim P_{\tilde Y_0, \tilde X_{-\Theta},\Theta}}[L(Y,\phi(X,\Theta))],
\end{align} 
where $\Lambda$ is the set of functions that can be constructed by the neural
network, and $\mathrm{err}_{\mathrm{training}}$ is the minimum training error. The optimal solution to \eqref{eq_trainingerror} is denoted by $\phi^*_{P_{\tilde Y_0, \tilde X_{-\Theta},\Theta}}$.

Let $\Phi=\{f : \mathcal X \times \mathcal D \mapsto \mathcal A\}$ be the set of all functions mapping
from $\mathcal X \times \mathcal D$ to $\mathcal A$. Any action $\phi(x,\theta)$ constructed by the neural network belongs to $\Phi$, whereas the neural network cannot produce some functions in $\Phi$. Hence, $\Lambda \subset \Phi$. By relaxing the feasible set $\Lambda$ in \eqref{eq_trainingerror} as $\Phi$, we obtain a lower bound of $\mathrm{err}_{\mathrm{training}}$, i.e.,
\begin{align}\label{eq_TrainingErrorLB}
H_L(\tilde Y_0| \tilde X_{-\Theta},\Theta)=\min_{\phi\in \Phi} \mathbb{E}_{Y,X,\Theta\sim P_{\tilde Y_0, \tilde X_{-\Theta},\Theta}}[L(Y,\phi(X,\Theta))], 
\end{align} 
where $H_L(\tilde Y_0| \tilde X_{-\Theta},\Theta)$ is a generalized conditional entropy of $\tilde Y_0$ given $(\tilde X_{-\Theta},\Theta)$ \cite{Dawid2004, Dawid1998,farnia2016minimax}. 
Compared to $\mathrm{err}_{\mathrm{training}}$, its information-theoretic lower bound $H_L(\tilde Y_0| \tilde X_{-\Theta},\Theta)$ is mathematically more convenient to analyze. The gap between $\mathrm{err}_{\mathrm{training}}$ and the lower bound $H_L(\tilde Y_0| \tilde X_{-\Theta},\Theta)$ was studied recently in \cite{shisher2022local}, where the gap is small if the function spaces $\Lambda$ and $\Phi$ are close to each other, e.g., when the neural network is sufficiently wide and deep \cite{goodfellow2016deep}. \ignore{The optimal solution to \eqref{eq_TrainingErrorLB} is denoted by $\hat \phi_{P_{\tilde Y_0, \tilde X_{-\Theta},\Theta}}$.} 
 
\ignore{\Phi$ is the set of all function mappings from $\mathcal X \times\mathcal D$ to $\mathcal A$. Because $\Lambda \subset \Phi, H_L(\tilde Y_0| \tilde X_{-\Theta},\Theta) \leq \mathrm{err}_{\mathrm{training}}$. For {\blue notational convenience, we refer to $H_L(\tilde Y_0| \tilde X_{-\Theta},\Theta)$ as an \emph{$L$-conditional entropy}}, because it is associated to a loss function $L$. By choosing different $L$, a broad class of $L$-conditional entropies is obtained.~In particular, Shannon's conditional entropy is derived if $L$ is the logarithmic loss function $L_{\log}(y,P_Y) = - \log P_Y(y)$. {\blue The optimal solution to \eqref{eq_TrainingErrorLB} is denoted as $\hat \phi_{P_{\tilde Y_t, \tilde X_{t-\Theta},\Theta}}$.}

{\blue Because} $\Phi$ contains all functions from $\mathcal{X}\times \mathcal D$ to $\mathcal{A}$, \eqref{eq_TrainingErrorLB} can be decomposed into a sequence of separated optimization problems, each optimizing an action $\phi(x, \theta)$ for given $(x, \theta) \in \mathcal X \times \mathcal D$ \cite{Dawid2004, farnia2016minimax}:
\begin{align} \label{L_condEntropy}
\!\!\!\!&H_L(\tilde Y_t| \tilde X_{t-\Theta},\Theta) \nonumber\\
\!\!\!\!=&\!\!\min_{\substack{\phi(x, \theta)\in \mathcal A
} } \!\!\sum_{\substack{x \in \mathcal X, \theta \in \mathcal D}} \!\! P_{\tilde X_{t-\Theta},\Theta}(x, \theta)\mathbb E_{Y\sim P_{\tilde Y_t|\tilde X_{t-\Theta}=x, \Theta=\theta}}[L(Y,\phi(x, \theta))] \nonumber\\
\!\!\!\!=&\!\!\!\sum_{\substack{x \in \mathcal X, \theta \in \mathcal D}} \!\!\!\! P_{X_{t-\Theta}, \Theta}(x, \theta) \!\!\min_{\phi(x, \theta)\in\mathcal A}\!\! E_{Y\sim P_{\tilde Y_t|\tilde X_{t-\Theta}=x, \Theta=\theta}}[L(Y,\phi(x, \theta))].\!\!\!\!
\end{align}
We note that problem \eqref{eq_trainingerror} cannot be decomposed in this way, because its function space $\Lambda$ is smaller than $\Phi$.} 

For notational convenience, we refer to $H_L(\tilde Y_0| \tilde X_{-\Theta},\Theta)$ as an \emph{L-conditional entropy}, because it is associated with a loss function $L$. The \emph{$L$-entropy} of a random variable $Y$ is defined as \cite{Dawid2004, farnia2016minimax}
\begin{align}\label{eq_Lentropy}
H_L(Y) = \min_{a\in\mathcal A} \mathbb{E}_{Y \sim P_{Y}}[L(Y,a)].
\end{align} 
Let $a_{P_Y}$ denote an optimal solution to \eqref{eq_Lentropy}, which is called a \emph{Bayes action} \cite{Dawid2004}.
The $L$-conditional entropy of $Y$ given $X=x$ is 
\begin{align}\label{given_L_condentropy}
H_L(Y| X=x)= \min_{a \in\mathcal A}\! \mathbb E_{Y\sim P_{Y| X=x}} [L(Y, a)].
\end{align}
Using \eqref{given_L_condentropy}, we can get the $L$-conditional entropy of $Y$ given $X$ \cite{Dawid2004, farnia2016minimax}
\begin{align}\label{eq_cond_entropy1}
H_L(Y|X)=\sum_{x \in \mathcal X} P_X(x) H_L(Y| X=x).
\end{align}
Similar to \eqref{eq_cond_entropy1}, \eqref{eq_TrainingErrorLB} can be decomposed as\ignore{into a sequence of separated optimization problems in \eqref{eq_TrainingErrorLB1}, each optimizing an action $\phi(x, \theta)$ given $(x, \theta)$:} 
\label{eq_TrainingErrorLB1}
\begin{align}\label{eq_TrainingErrorLB1}
\!\!\!&H_L(\tilde Y_0| \tilde X_{-\Theta},\Theta) \nonumber\\
=&\sum_{\substack{x \in \mathcal X, \theta \in \mathcal D}} \!\!\!\! P_{\tilde X_{-\Theta}, \Theta}(x, \theta) H_L(\tilde Y_0|\tilde X_{-\theta}=x, \Theta=\theta).\!\!\!
\end{align}
We assume that in the training dataset, the AoI $\Theta$ is independent of the label $\tilde Y_0$ and feature $\tilde X_{-\mu}$ for all $\mu\geq 0$. By this assumption and \eqref{eq_TrainingErrorLB1}, one can get \ifreport(see Appendix \ref{pfreshness_aware_cond} for its proof) \else (see  our technical report \cite{technical_report} for its proof)\fi
\begin{align}\label{freshness_aware_cond}
\!\! H_L(\tilde Y_0| \tilde X_{-\Theta},\Theta)\!=\sum_{\theta \in \mathcal D} P_{\Theta}(\theta)~H_L(\tilde Y_0| \tilde X_{-\theta}).
\end{align}
\ifreport
\else

By choosing different $L$ in \eqref{eq_Lentropy}, a broad class of $L$-entropies is obtained. In particular, Shannon’s entropy is derived if $L$ is the logarithmic loss function $L_{\log}(y,Q_{Y}) = - \log Q_{Y}(y)$. More examples of the loss function $L$, the definitions of $L$-divergence $D_L(P_Y || Q_Y)$, $L$-mutual information $I_L(Y; X)$, and $L$-conditional mutual information $I_L(Y; X| Z)$ are provided in \cite{technical_report}. In general, $I_L( X ; Y)  \neq I_L(Y ; X)$, which is different from $f$-mutual information. Moreover, a comparison among the $L$-divergence, Bregman divergence \cite{dhillon2008matrix}, and the $f$-divergence \cite{csiszar2004information} is provided in \cite{technical_report}.
\fi
\ignore{\begin{align}\label{freshness_aware_cond}
\!\! H_L(\tilde Y_t| \tilde X_{t-\Theta},\Theta)\!=&\!\!\sum_{\substack{x \in \mathcal X, \theta \in \mathcal D}} \!\!\!\! P_{X_{t-\Theta}}(x) P_{\Theta}(\theta) H_L(\tilde Y_t| \tilde X_{t-\Theta}=x,\Theta=\theta)\!\!\! \nonumber\\
=&\sum_{\theta \in \mathcal D} \!\!\!\! P_{\Theta}(\theta)~H_L(\tilde Y_t| \tilde X_{t-\theta}).\!\!\! 
\end{align}}

\ifreport
The $L$-\emph{divergence} $D_L(P_{Y} || P_{\tilde Y})$ of $P_{Y}$ from $P_{\tilde Y}$ can be expressed as \cite{Dawid2004, farnia2016minimax}
\begin{align}\label{divergence}
   \!\!\!\!\!\! D_L(P_{Y} || P_{\tilde Y})\!=\!\mathbb E_{Y \sim P_{\tilde Y}}\left[L\left(Y, a_{P_{Y}}\right)\right]-\mathbb E_{Y \sim P_{\tilde Y}}\left[L\left(Y, a_{P_{\tilde Y}}\right)\right] \geq 0.
\end{align}
The \emph{$L$-mutual information} $I_L(Y;X)$ is defined as \cite{Dawid2004, farnia2016minimax}
\begin{align}\label{MI}
I_L(Y; X)=& \mathbb E_{X \sim P_{X}}\left[D_L\left(P_{Y|X}||P_{Y}\right)\right]\nonumber\\
=&H_L(Y)-H_L(Y|X) \geq 0,
\end{align}
which measures the performance gain in predicting $Y$ by observing $X$. In general, $I_L(Y;X)$ $\neq$ $I_L(X;Y)$. The $L$-conditional mutual information $I_L(Y; X | Z)$ is given by 
\begin{align}\label{CMI}
I_L(Y; X|Z)=& \mathbb E_{X, Z \sim P_{X, Z}}\left[D_L\left(P_{Y|X, Z}||P_{Y | Z}\right)\right]\nonumber\\
=&H_L(Y | Z)-H_L(Y|X, Z).
\end{align}
\fi

\ifreport
The relationship among $L$-divergence, Bregman divergence \cite{dhillon2008matrix}, and $f$-divergence \cite{csiszar2004information} is discussed in Appendix \ref{InformationTheory2}. We note that any Bregman divergence is an $L$-divergence, and an $L$-divergence is a Bregman divergence only if $H_L(Y_t)$ is continuously differentiable and strictly concave in $\mathcal P_{Y_t}$ \cite{Dawid2004}. Examples of loss function $L$, $L$-entropy, and $L$-cross entropy are provided in Appendix \ref{InformationTheory1}.
\else
\ignore{The relationship among $L$-divergence, Bregman divergence \cite{Dawid2004, Amari}, and $f$-divergence \cite{csiszar2004information} is discussed in our technical report \cite{technical_report}. We note that any Bregman divergence is an $L$-divergence, and an $L$-divergence is a Bregman divergence only if $H_L(Y_t)$ is continuously differentiable and strictly concave in $\mathcal P_{Y_t}$ \cite{Dawid2004}. Examples of loss function $L$, $L$-entropy, and $L$-cross entropy are provided in \cite{technical_report}.}
\fi 

\subsection{Online Inference Error}
In the online inference phase, the neural predictor trained by \eqref{eq_trainingerror} is used to predict the target in real-time. We assume that $\{(Y_t , X_t ), t \in \mathbb Z\}$ is a stationary process that is independent of the AoI process $\{\Delta(t), t \in \mathbb Z\}$. Using this assumption, the time-average expected inference error during the time slots $t=0,1,\ldots, T-1$ is given by
\begin{align}\label{eq_inferenceerror}
\mathrm{err}_{\mathrm{inference}}(T)= \frac{1}{T} \mathbb E \left [ \sum_{t=0}^{T-1} p(\Delta(t))\right],
\end{align}
where \ignore{$p(\Delta(t))$ is the inference error in time slot $t$, $\Delta(t)$ is the inference AoI, and $p(\cdot)$ is defined by}
\begin{align}\label{instantaneous_err1} 
p(\delta)=\mathbb E_{Y, X \sim P_{Y_t, X_{t-\delta}}}\left[L\left(Y,\phi^*_{P_{\tilde Y_0, \tilde X_{-\Theta},\Theta}}(X,\delta)\right)\right],
\end{align}
$p(\Delta(t))$ is the expected inference error in time slot $t$, and $\Delta(t)$ is the inference AoI at time $t$, i.e., the time difference between label $Y_t$ and feature $X_{t-\Delta(t)}$. \ifreport The proof of \eqref{eq_inferenceerror} is provided in Appendix \ref{peq_inferenceerror}. \else The proof of \eqref{eq_inferenceerror} is provided in \cite{technical_report}. \fi
\ignore{where $P_{Y_t, X_{t- \delta}}$ is the empirical distribution of the label $Y_t$ and the feature $X_{t- \delta}$ and $\delta$ is the inference AoI.}
\ignore{\begin{align}\label{eq_inferenceerror}
\!\!\mathrm{err}_{\mathrm{inference}} \!\!=\! \frac{1}{T} \!\sum_{t=0}^{T-1} \!\mathbb E_{Y, X \sim P_{Y_t, X_{t-\Delta(t)}}}\!\!\!\left[\!L(Y,\phi^*_{P_{\tilde Y_t, \tilde X_{t-\Theta},\Theta}}\!(X,\Delta(t)))\right]\!,
\end{align}} 

Let us define \emph{$L$-cross entropy} between $Y$ and $\tilde Y$ as 
\begin{align} \label{cross-entropy}
H_L(Y; \tilde Y)= \mathbb{E}_{Y \sim P_{Y}}\left[L\left(Y, a_{P_{\tilde Y}}\right)\right],
\end{align}
and \emph{$L$-conditional cross entropy} between $Y$ and $\tilde Y$ given $X$ as
\begin{align} \label{cond-cross-entropy}
H_L(Y; \tilde Y | X)= \sum_{x \in \mathcal X} P_X(x) \mathbb{E}_{Y \sim P_{Y|X=x}}\left[L\left(Y, a_{P_{\tilde Y|\tilde X=x}}\right)\right],
\end{align}
where $a_{P_{\tilde Y}}$ and $a_{P_{\tilde Y|\tilde X=x}}$ are the Bayes actions associated with $P_{\tilde Y}$ and $P_{\tilde Y|\tilde X=x}$, respectively. If the neural predictor in \eqref{instantaneous_err1} is replaced by the Bayes action $a_{\tilde Y_0|\tilde X_{-\delta}=x}$, i.e., the optimal solution to \eqref{eq_TrainingErrorLB}, then $p(\delta)$ becomes an $L$-conditional cross entropy
\begin{align}\label{L-CondCrossEntropy}
\!\!\!\!\!\!&H_L(Y_{t}; \tilde Y_0 | X_{t-\delta})\!\! \nonumber\\
=&\!\!\!\sum_{x \in \mathcal X} \!\!P_{X_{t-\delta}}(x)\mathbb{E}_{Y \sim P_{Y_t| X_{t- \delta}=x}}\!\!\left[ L\left(Y,a_{\tilde Y_0|\tilde X_{-\delta}=x}\right)\right]\!.\!\!\!\!
\end{align} 
If the function spaces $\Lambda$ and $\Phi$ are close to each other, the difference between $p(\delta)$ and $H_L(Y_{t}; \tilde Y_0 | X_{t-\delta})$ is small.
\ignore{We assume that $\{(Y_t , X_t ), t \in \mathbb Z\}$ is a stationary process that is independent of $\{\Delta(t), t \in \mathbb Z\}$. If the neural predictor in \eqref{eq_inferenceerror} is replaced by the optimal solution to \eqref{eq_TrainingErrorLB1}, then $\mathrm{err}_{\mathrm{inference}}$ becomes an $L$-conditional cross-entropy
\begin{align}\label{L-CondCrossEntropy}
&H_L(Y_{t}; \tilde Y_{t} | X_{t-\Delta}, \Delta) \nonumber\\
=&\mathbb{E}_{Y,X,\Delta\sim P_{Y_t, X_{t- \Delta},\Delta}}\left[L\left(Y,\hat \phi_{P_{\tilde Y_t, \tilde X_{t-\Theta},\Theta}}(X,\Delta)\right)\right],
\end{align} 
where $\hat \phi_{P_{\tilde Y_t, \tilde X_{t-\Theta},\Theta}}$ is the optimal solution to \eqref{eq_TrainingErrorLB1} and $\Delta$ is a random variable that follows the empirical distribution of $\{\Delta(t), t \in \mathbb Z\}$.}
\ignore{If the function space $\Lambda$ is sufficiently large, the difference between $\mathrm{err}_{\mathrm{inference}}$ and $H_L(\tilde Y_{t}; Y_{t} | X_{t-\Delta}, \Delta)$ is small.} 
\ignore{\ifreport
Since $(Y_0, X_{\Theta})$ is independent of $\Theta$ and $\{(\tilde Y_t, \tilde X_t), t \in \mathbb Z\}$ is independent of $\Delta$, the $L$-conditional cross entropy in \eqref{L-CondCrossEntropy} can be decomposed as 
(See Appendix \ref{pDecomposed_Cross_entropy} for its proof)
\begin{align}\label{Decomposed_Cross_entropy}
H_L(\tilde Y_t;Y_t| \tilde X_{t-\Delta}, \Delta) = \sum_{\delta \in \mathcal D} P_\Delta(\delta)~H_L(\tilde Y_t;Y_t| \tilde X_{t-\delta}).
\end{align}
{\violet Similar to \eqref{divergence}, we can get
\begin{align}\label{L-ConCrossEntropy1}
\!\!\!\!\!H_L(\tilde Y_{t}; Y_{t} | \tilde X_{t-\Delta}, \Delta) =&H_L(\tilde Y_{t} | \tilde X_{t-\Delta}, \Delta)+\sum_{x \in \mathcal X, \delta \in \mathcal D} P_{\tilde X_{t- \Delta},\Delta}(x, \delta)\nonumber\\
&\!\!\times D_L\left(P_{\tilde Y_t | \tilde X_{t-\Delta}=x, \Delta=\delta} || P_{Y_t | X_{t-\Theta}=x, \Theta=\delta}\right).\!\!\!
\end{align}
Because the $L$-divergence in \eqref{L-ConCrossEntropy1} is non-negative, we have}
\begin{align}\label{lowerbound_inference}
H_L(\tilde Y_{t}; Y_{t} | \tilde X_{t-\Delta}, \Delta) \geq H_L(\tilde Y_{t} | \tilde X_{t-\Delta}, \Delta).
\end{align}
\fi}

\ignore{Consider the real-time forecasting system illustrated in Fig. \ref{fig:learning}, where the goal is to predict a fresh label $Y_t$ (e.g., location of the car in front) of time $t$ based on {\blue an earlier}  observation $X_{t-\Delta(t)}$ that was generated $\Delta(t)$ seconds ago.~The observation, a.k.a., feature, $X_{t-\Delta(t)}$$=$$(s_{t-\Delta(t)}$, $\ldots, s_{t-\Delta(t)-u+1})$ is a {\blue time sequence of length $u$ (e.g., $u$ consecutive video frames). We consider} a class of supervised learning algorithms called Empirical Risk Minimization (ERM), which is a standard approach for supervised learning \cite{vapnik2013nature, goodfellow2016deep, mohri2018foundations}. In ERM, the decision-maker predicts {\blue the label} $Y_t\in\mathcal Y$ by taking an action $a$ $=$ $\phi(X_{t-\Delta(t)},\Delta(t))$$\in \mathcal A$ based on  the observation $X_{t-\Delta(t)}\in\mathcal X$ and {\blue its AoI $\Delta(t) \in \mathcal D$, where we assume that $\mathcal Y$, $\mathcal X$, and $\mathcal D$ are discrete sets.}~The learning performance is measured by a loss function $L$, where $L(y,a)$ is the incurred loss if action $a$ is chosen when $Y_t=y$. The loss function $L$ is specified by the ERM supervised learning algorithm.~For example, {\blue $L$ is a quadratic function $L_2(y,\hat y) = (y - \hat y)^2$ in linear regression and a logarithmic function $L_{\log}(y,P_Y) = - \log P_Y(y)$ in logistic regression, where $P_Y$ denotes the distribution of $Y$.}

Supervised learning based real-time forecasting consists of two phases: \emph{offline training}
and \emph{online inference}.~In the offline training phase, a training dataset is collected and is used to train a neural network. Let $P_{Y_t, X_{t- \Theta}, \Theta}$ denote the empirical distribution of the training data $(Y_t,X_{t-\Theta})$ and training AoI $\Theta$, {\blue and $(Y_t,X_{t-\Theta},\Theta)$ are random variables following this empirical distribution. Here, the training AoI $\Theta \geq 0$ is the time difference between the observation $X_{t-\Theta}$ and the label $Y_t$. We assume that the training label and feature $(Y_t ,X_t )$ are stationary over time and are independent of the training AoI $\Theta$.} The objective of training in ERM-based real-time forecasting is to solve the following problem:  
\begin{align}\label{eq_trainingerror}
\mathrm{err}_{\mathrm{training}} = \min_{\phi\in \Lambda} \mathbb{E}_{Y,X,\Theta\sim P_{Y_t,X_{t-\Theta},\Theta}}[L(Y,\phi(X,\Theta))],
\end{align} 
where $\phi: \mathcal X \times \mathcal D \mapsto \mathcal A$ is selected from a family of decision functions $\Lambda$ that can be implemented by a neural network, $$\mathbb{E}_{Y,X,\Theta\sim P_{Y_t,X_{t-\Theta},\Theta}}[L(Y,\phi(X,\Theta))]$$ is {\blue the expected loss over the empirical distribution of training data and training AoI, and} $\mathrm{err}_{\mathrm{training}}$ is {\blue called} the \emph{minimum training error} \footnote{In this paper, we focus on \emph{freshness-aware inference}, in which the AoI is fed as an input of the neural predictor (see Fig. \ref{fig:learning}) and is used to predict the label. {\violet The complimentary case of \emph{freshness-agnostic inference}, where the AoI is unknown at the neural predictor, is out of the scope of this paper. The difference between \emph{freshness-aware inference} and \emph{freshness-agnostic inference} was briefly discussed in \cite{shisher2021age} and will be further studied in our future work.}}.~The optimal solution to \eqref{eq_trainingerror} is denoted by $\phi^*_{P_{Y_t, X_{t-\Theta},\Theta}}$. 

Within the {online inference} phase, {\blue the trained neural predictor $\phi^*_{P_{Y_t, X_{t-\Theta},\Theta}}$ is used to predict the target in real-time.} The \emph{inference error} is the expected loss on the inference data and inference AoI using the trained predictor, i.e.,
\begin{align}\label{eq_inferenceerror}
\mathrm{err}_{\mathrm{inference}} = \mathbb{E}_{Y,X,\Delta\sim P_{\tilde Y_t,\tilde X_{t- \Delta},\Delta}}[L(Y,\phi^*_{P_{Y_t, X_{t-\Theta},\Theta}}(X,\Delta))],
\end{align} 
where $P_{\tilde Y_t,\tilde X_{t- \Delta},\Delta}$ is the {\violet empirical distribution} of the inference data $(\tilde Y_t,\tilde X_{t- \Delta})$ and inference AoI $\Delta$, and $(\tilde Y_t,\tilde X_{t- \Delta},\Delta)$ are random variables following this distribution. 

During online inference, the observations (e.g., video frames for prediction) are sent to the trained neural predictor in real-time. As a result, the inference AoI is a random process $\{\Delta(t), t \in \mathbb Z \}$ governed by the communications from a data source (e.g., a sensor or camera) to the neural predictor, and $\Delta$ follows the empirical distribution of the AoI process $\{\Delta(t),t\in \mathbb Z\}$. {\violet We assume that the inference label and feature process $\{(\tilde Y_t , \tilde X_t), t\in \mathbb Z \}$ is stationary over time and is independent of the inference AoI $\Delta(t)$ and $\Delta$. On the other hand, the training AoI $\Theta$ (i.e., the time difference between  $X_{t-\Theta}$ and  $Y_t$) can be arbitrarily chosen ahead of time because the training dataset is prepared offline.} In Section \ref{InformationAnalysis}, we will discuss how to choose the training AoI $\Theta$. In Section \ref{Scheduling}, we will study how to optimally control the inference AoI process $\Delta(t)$ by scheduling the transmissions of {\violet observation features} to the neural predictor.}

\ifreport
\begin{figure*}[h]
  \centering
  \begin{subfigure}[t]{0.20\textwidth}
\includegraphics[width=\textwidth]{./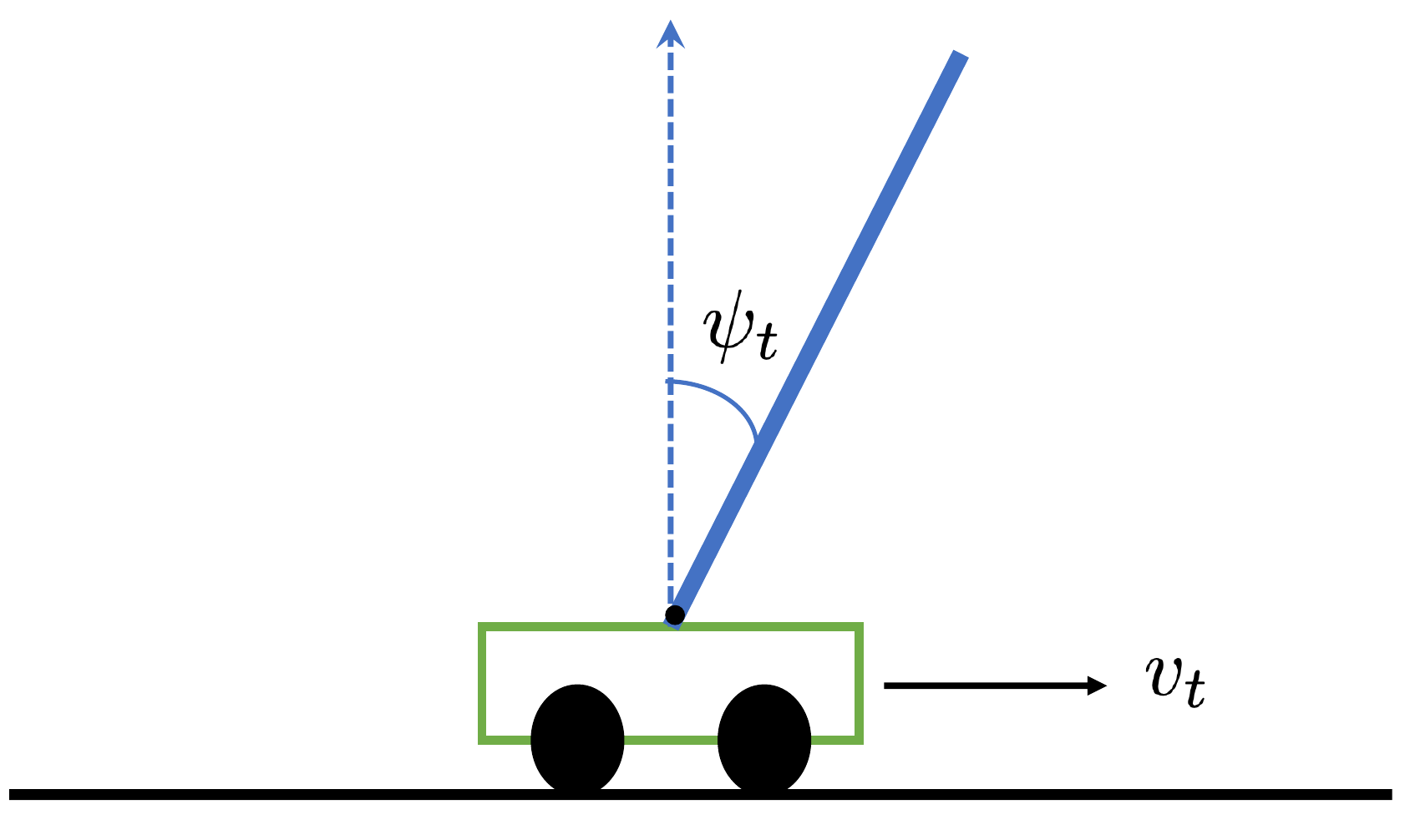}
  \subcaption{OpenAI Cart Pole  Task}
\end{subfigure}
%
\hspace{0mm}
\begin{subfigure}[t]{0.20\textwidth}
\includegraphics[width=\textwidth]{./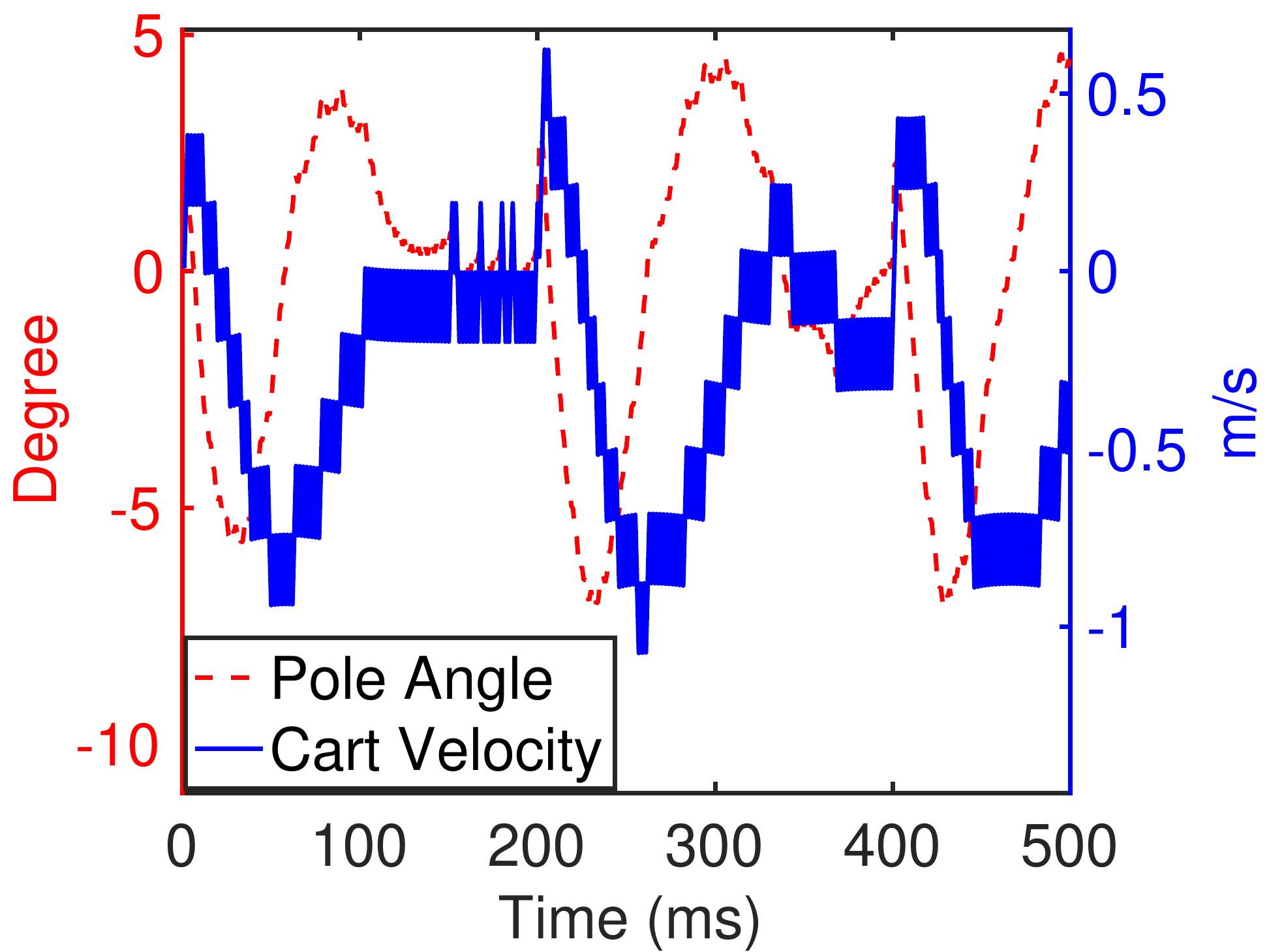}
  \subcaption{Data Traces}
\end{subfigure}
%
  \hspace*{0mm} 
\begin{subfigure}[t]{0.20\textwidth}
\includegraphics[width=\textwidth]{./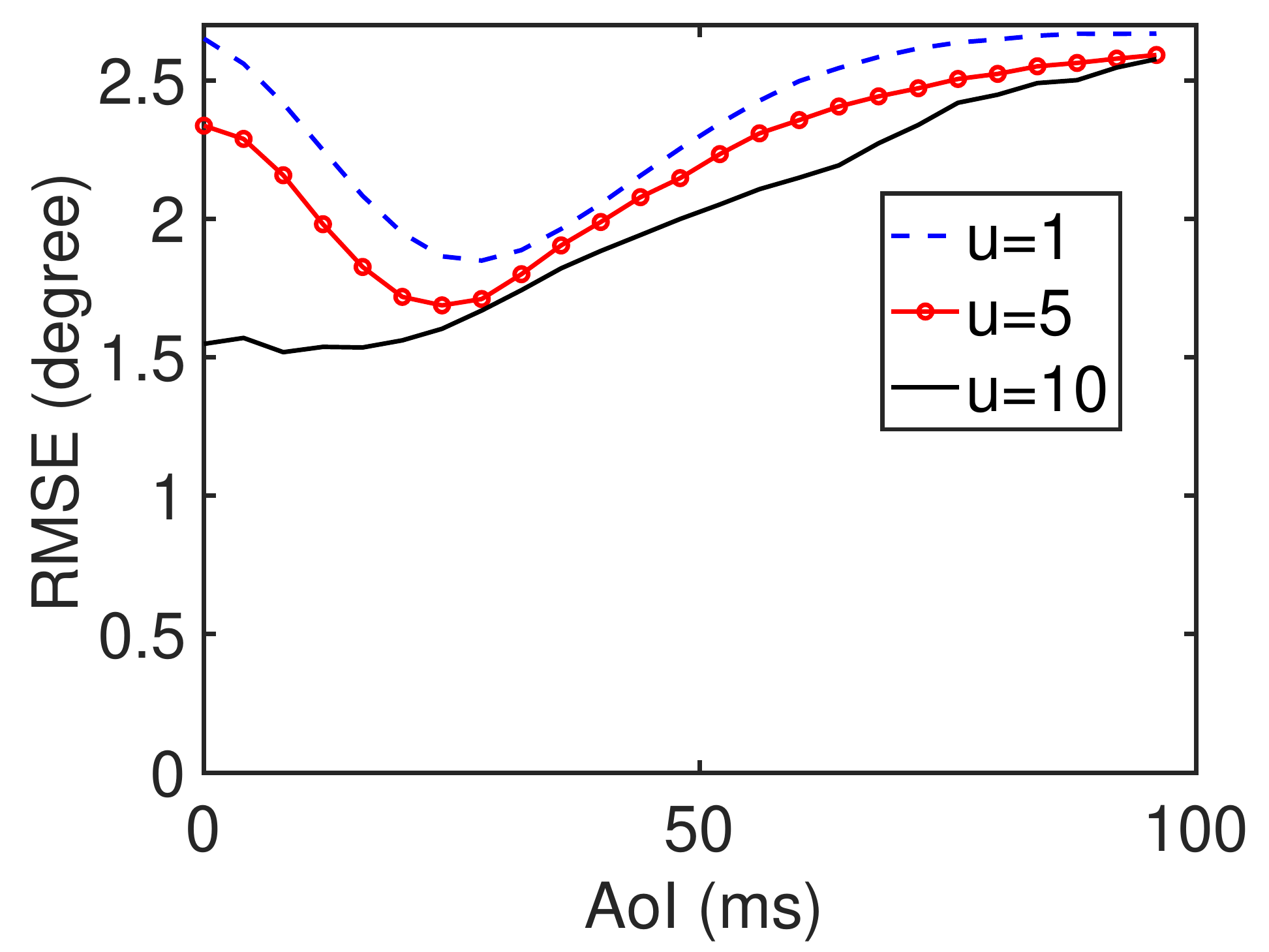}
  \subcaption{Training Error vs. AoI}
\end{subfigure}
%
\hspace{0mm}
\begin{subfigure}[t]{0.20\textwidth}
\includegraphics[width=\textwidth]{./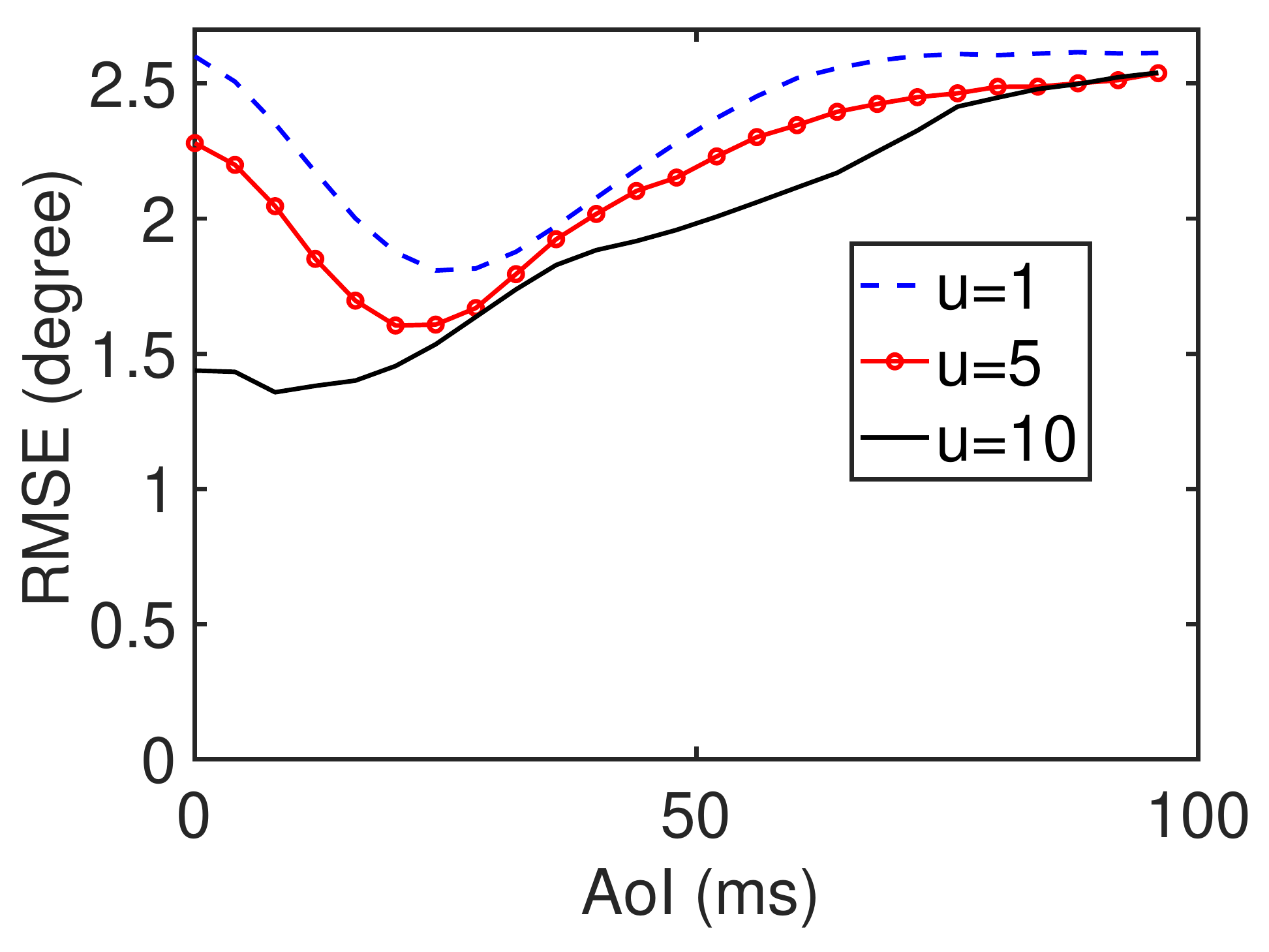}

  \subcaption{Inference Error vs. AoI}
\end{subfigure}
\caption{Performance of actuator state prediction under mechanical response delay. In the OpenAI CartPole-v1 task \cite{brockman2016openai}, the pole angle $\psi_t$ is predicted by using the cart velocity $v_{t-\delta}$ with an AoI $\delta$. Because of the mechanical response delay between cart velocity and pole angle, the training error and inference error are non-monotonic in the AoI.}
\label{fig:TrainingCartVelocity}
\end{figure*}
\else
\fi

\section{Interpretation of Freshness in Real-time Supervised Learning}\label{InformationAnalysis}
In this section, we study how the training AoI $\Theta$ and the inference
AoI $\Delta(t)$ affect the performance of real-time supervised learning.

\subsection{Training Error vs. Training AoI}\label{SecMinTrainingError}
We first consider the case of deterministic training AoI $\Theta=\theta$. Given $\Theta=\theta$, $H_L(\tilde Y_0| \tilde X_{-\Theta},\Theta)$ in \eqref{freshness_aware_cond} becomes simply $H_L(\tilde Y_0| \tilde X_{-\theta})$, which is a function of $\theta$. One may expect that the training error would grow with the AoI $\theta$. If $\tilde Y_0 \leftrightarrow \tilde X_{-\mu}  \leftrightarrow \tilde X_{-\mu-\nu}$ is a Markov chain for all $\mu,\nu\geq 0$, by the data processing inequality for $L$-conditional entropy \cite[Lemma 12.1] {Dawid1998}, one can show that $H_L(\tilde Y_0| \tilde X_{-\theta})$ is a non-decreasing function of $\theta$. \ifreport Nevertheless, the experimental results in Figs. \ref{fig:learning}-\ref{fig:Training} and \cite{shisher2021age} \else Nevertheless, the experimental results in Figs. \ref{fig:learning}-\ref{fig:DelayedNetworkedControlled} and \cite{technical_report, shisher2021age} \fi show that the training error is a growing function of the training AoI $\theta$ in some applications (e.g., video prediction), whereas it is a non-monotonic function of $\theta$ in other applications (e.g., temperature prediction and actuator state prediction with delay). As we will explain below, a fundamental reason behind these phenomena is that practical time-series data could be either Markovian or non-Markovian. For non-Markovian $(\tilde Y_0, \tilde X_{-\mu}, \tilde X_{-\mu -\nu})$, $H_L(\tilde Y_0| \tilde X_{-\theta})$ is not necessarily monotonic in $\theta$.  

Next, we develop an $\epsilon$-data processing inequality to analyze information freshness for both Markovian and non-Markovian time-series data. To that end, the following relaxation of the standard Markov chain model is needed, which is motivated by \cite{huang2019universal}: 

\ignore{{\blue Let us first consider the case of deterministic training AoI $\Theta=\theta$. In this case, $H_L(Y_t| X_{t-\Theta},\Theta=\theta)$ can be simply written as $H_L(Y_t| X_{t-\theta})$. Because $\{(Y_t, X_t)\}_{t \in \mathbb Z}$ is stationary, $H_L(Y_t| X_{t-\theta})$ is a function of $\theta$.} If $Y_t \leftrightarrow X_{t-\mu}  \leftrightarrow X_{t-\mu-\nu}$ is a Markov chain for all $\mu,\nu\geq 0$, then the data processing inequality \cite[Lemma 12.1] {Dawid1998} implies that $H_L(Y_{t} | X_{t-\theta})$ is a {\blue non-decreasing function of $\theta$.} However, our experimental results in Fig. \ref{fig:Training} show that {\blue the training error $\mathrm{err}_{\mathrm{training}}$ is not always monotonic in the training AoI $\theta$. This implies that the training data may not satisfy the Markov property. In fact, practical time-series data is usually non-Markovian \cite{Kampen1998Non-Markov, hanggi1977time, guo2019credibility, wang2021framework}, which hinders the use of data processing inequality.} Hence, novel analytical tools for interpreting information freshness in non-Markovian models are in great need. {\blue To resolve this challenge, we propose a new $\epsilon$-\emph{Markov chain model} that generalizes the standard Markov chain, and develop an $\epsilon$-\emph{data processing inequality} to characterize the relationship between training/inference errors and AoI.}}

\ignore{\subsubsection{$\epsilon$-Markov Chain Model and $\epsilon$-Data Processing Inequality}\label{Def_eMarkov}
We develop a unified framework {\blue to analyze information freshness for both} Markovian and non-Markovian time-series data. Towards that end, {\blue we introduce the following relaxation of the standard Markov chain model: }}

\begin{definition}[\textbf{$\epsilon$-Markov Chain}]
Given $\epsilon \geq 0$, a sequence of three random variables $Z, X,$ and $Y$ is said to be an \emph{$\epsilon$-Markov chain}, denoted as $Z \overset{\epsilon} \rightarrow X \overset{\epsilon} \rightarrow Y$, if
\begin{align}\label{epsilon-Markov-def}
I_{\chi^2}(Y;Z|X)=\mathbb E_{X, Z \sim P_{X, Z}} \left [ D_{\chi^2}\left(P_{Y|X,Z} || P_{Y|X} \right)\right] \leq \epsilon^2,
\end{align}
where\ifreport \footnote{In \eqref{epsilon-Markov-def}, if $P_{Y|X=x}(y) = 0$, then $P_{Y|X=x,Z=z}(y) = 0$ which leads to a term $\frac{0^2}{0}$ in the $\chi^2$-divergence $D_{\chi^2} (P_{Y|X=x,Z=z} || P_{Y|X=x})$. We adopt the convention in information theory \cite{polyanskiy2014lecture} to define $\frac{0^2}{0}= \lim_{a\rightarrow 0^+,b\rightarrow 0^+} \frac{(a-b)^2}{b} = 0$.}\else \fi
\begin{align}\label{chi-divergence-def}
D_{\chi^2}(P_Y ||Q_Y)=\sum_{y \in \mathcal{Y}} \frac{(P_Y(y) - Q_Y(y))^2}{Q_Y(y)}
\end{align}
is Neyman's $\chi^2$-divergence and $I_{\chi^2}(Y;Z|X)$ is $\chi^2$-conditional mutual information.
\end{definition}
A Markov chain is an $\epsilon$-Markov chain with $\epsilon= 0$. If $Z \rightarrow X \rightarrow Y$ is a Markov chain, then $Y \rightarrow X \rightarrow Z$ is also a Markov chain \cite[p. 34]{cover1999elements}. A similar property holds for the $\epsilon$-Markov chain.
\begin{lemma}\label{Symmetric}
 If $Z \overset{\epsilon}  \rightarrow X \overset{\epsilon} \rightarrow Y$, then $Y \overset{\epsilon}  \rightarrow X \overset{\epsilon} \rightarrow Z$.
\end{lemma}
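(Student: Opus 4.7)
The plan is to prove Lemma 1 by establishing the stronger algebraic identity $I_{\chi^2}(Y;Z\mid X)=I_{\chi^2}(Z;Y\mid X)$, from which the symmetry of the $\epsilon$-Markov chain relation follows immediately: both conditions reduce to the same numerical bound $\leq \epsilon^2$.

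My first step is to unfold the definitions. Writing $p(\cdot)$ for the joint and marginal probabilities of $(X,Y,Z)$, I use the identity $D_{\chi^2}(P\|Q)=\sum_y P(y)^2/Q(y)-1$ to expand
\begin{align*}
I_{\chi^2}(Y;Z\mid X)
&=\sum_{x,z}p(x,z)\sum_{y}\frac{P_{Y\mid X,Z}(y\mid x,z)^{2}}{P_{Y\mid X}(y\mid x)}-1.
\end{align*}
Then I substitute $P_{Y\mid X,Z}(y\mid x,z)=p(x,y,z)/p(x,z)$ and $P_{Y\mid X}(y\mid x)=p(x,y)/p(x)$ and simplify. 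After cancellations the summand becomes
\begin{align*}
\frac{p(x,z)\cdot p(x,y,z)^{2}/p(x,z)^{2}}{p(x,y)/p(x)}
=\frac{p(x)\,p(x,y,z)^{2}}{p(x,y)\,p(x,z)},
\end{align*}
so
\begin{align*}
I_{\chi^2}(Y;Z\mid X)=\sum_{x,y,z}\frac{p(x)\,p(x,y,z)^{2}}{p(x,y)\,p(x,z)}-1.
\end{align*}
The right-hand side is manifestly invariant under the swap $Y\leftrightarrow Z$, so the same computation carried out from the other side yields the identical expression for $I_{\chi^2}(Z;Y\mid X)$. Hence the two conditional $\chi^2$-mutual informations coincide, and the lemma follows.

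The one delicate point, and essentially the only obstacle, is the treatment of zero-probability outcomes, since the manipulations involve dividing by $p(x,y)$, $p(x,z)$, and $P_{Y\mid X}(y\mid x)$. I would handle this by invoking the convention introduced in the footnote accompanying the definition, namely $0^{2}/0=\lim_{a,b\to 0^{+}}(a-b)^{2}/b=0$: whenever $p(x,y)=0$ (respectively $p(x,z)=0$) for some $(x,y)$ in the support of $X$, the absolute continuity condition $P_{Y\mid X,Z}\ll P_{Y\mid X}$ forces $p(x,y,z)=0$ as well, so the corresponding summand is interpreted as zero in both the original and the symmetrized expression. Restricting the sum to the common support where all three denominators are positive makes the algebra rigorous and preserves the equality. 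Once this bookkeeping is in place, the identity $I_{\chi^2}(Y;Z\mid X)=I_{\chi^2}(Z;Y\mid X)$ is purely mechanical, and Lemma 1 is an immediate corollary.
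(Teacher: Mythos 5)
Your proposal is correct and follows essentially the same route as the paper: both prove the exact identity $I_{\chi^2}(Y;Z\mid X)=I_{\chi^2}(Z;Y\mid X)$ by rewriting the conditional $\chi^2$-mutual information in a form manifestly symmetric in $Y$ and $Z$, from which the lemma is immediate. The only (cosmetic) difference is that you expand via $D_{\chi^2}(P\|Q)=\sum_y P(y)^2/Q(y)-1$ to reach $\sum_{x,y,z} p(x)\,p(x,y,z)^2/\bigl(p(x,y)\,p(x,z)\bigr)-1$, whereas the paper keeps the squared-difference form $\sum_x P_X(x)\,\bigl(P_{Z,Y\mid X}-P_{Z\mid X}P_{Y\mid X}\bigr)^2/\bigl(P_{Z\mid X}P_{Y\mid X}\bigr)$; your treatment of zero-probability terms via the paper's convention is adequate.
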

\ifreport
\begin{proof}
See Appendix \ref{PSymmetric}.
\end{proof}
\else
Due to space limitation, all the proofs are relegated to our technical report \cite{technical_report}.
\fi
By Lemma \ref{Symmetric}, the $\epsilon$-Markov chain can be denoted as $Y \overset{\epsilon} \leftrightarrow X \overset{\epsilon} \leftrightarrow Z$.
In the following lemma, we provide a relaxation of the data processing inequality for $\epsilon$-Markov chain, which is called an \emph{$\epsilon$-data processing inequality}.
\begin{lemma}[\textbf{$\epsilon$-data processing inequality}] \label{Lemma_CMI}
If $Y \overset{\epsilon}\leftrightarrow X \overset{\epsilon}\leftrightarrow Z$ is an $\epsilon$-Markov chain, then 
\begin{align}
H_L(Y|X) \leq H_L(Y|Z)+O(\epsilon).
\end{align}
If, in addition, $H_L(Y)$ is twice differentiable in $P_Y$, then
\begin{align}
H_L(Y|X) \leq H_L(Y|Z)+O(\epsilon^2).
\end{align}
\end{lemma}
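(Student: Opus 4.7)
The plan is to reduce the desired statement to a pointwise comparison between the $L$-divergence and Neyman's $\chi^2$-divergence, using a mutual information decomposition. First I would write
\begin{equation*}
H_L(Y|X)-H_L(Y|Z)=\bigl[H_L(Y|X)-H_L(Y|X,Z)\bigr]-\bigl[H_L(Y|Z)-H_L(Y|X,Z)\bigr]=I_L(Y;Z|X)-I_L(Y;X|Z),
\end{equation*}
which follows directly from the definition \eqref{CMI}. Since $I_L(Y;X|Z)\ge 0$ by the non-negativity of $L$-divergence in \eqref{divergence}, we obtain $H_L(Y|X)-H_L(Y|Z)\le I_L(Y;Z|X)=\mathbb{E}_{X,Z}\bigl[D_L(P_{Y|X,Z}\|P_{Y|X})\bigr]$. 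It therefore suffices to bound this conditional $L$-mutual information by $O(\epsilon)$ and, under twice-differentiability, by $O(\epsilon^2)$.

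For the first bound, I would establish a Lipschitz estimate of the form $D_L(P\|Q)\le C_1\|P-Q\|_1$ on the simplex. Writing $D_L(P\|Q)=\mathbb E_{Y\sim P}[L(Y,a_Q)]-H_L(P)$, the cross-entropy term is linear in $P$ with coefficients $L(y,a_Q)$, which are uniformly bounded since $\mathcal Y$ is finite and the relevant Bayes actions lie in a bounded set; the $L$-entropy $H_L(P)$ is concave and Lipschitz on the simplex. Combined with the Cauchy--Schwarz estimate
\begin{equation*}
\|P-Q\|_1=\sum_y\frac{|P(y)-Q(y)|}{\sqrt{Q(y)}}\sqrt{Q(y)}\le\sqrt{D_{\chi^2}(P\|Q)},
\end{equation*}
this gives $D_L(P\|Q)\le C_1\sqrt{D_{\chi^2}(P\|Q)}$. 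Taking expectation over $(X,Z)$ and applying Jensen's inequality to the concave square root yields
\begin{equation*}
I_L(Y;Z|X)\le C_1\,\mathbb E_{X,Z}\!\left[\sqrt{D_{\chi^2}(P_{Y|X,Z}\|P_{Y|X})}\right]\le C_1\sqrt{I_{\chi^2}(Y;Z|X)}\le C_1\epsilon,
\end{equation*}
proving the first claim.

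For the sharper $O(\epsilon^2)$ bound, I would use that when $P\mapsto H_L(P)$ is twice differentiable, a second-order Taylor expansion at $Q$ combined with the envelope identity $\nabla_P H_L(P)\big|_{P=Q}=L(\cdot,a_Q)$ gives $\nabla_P D_L(P\|Q)\big|_{P=Q}=0$ and hence
\begin{equation*}
D_L(P\|Q)=\tfrac{1}{2}(P-Q)^\top\bigl[-\nabla^2 H_L(Q)\bigr](P-Q)+o(\|P-Q\|_2^2).
\end{equation*}
Since $-\nabla^2 H_L(Q)$ is bounded on compact subsets of the simplex, this yields a global quadratic majorization $D_L(P\|Q)\le C_2\|P-Q\|_1^2\le C_2\,D_{\chi^2}(P\|Q)$ by the same Cauchy--Schwarz step, and taking expectation directly gives $I_L(Y;Z|X)\le C_2\,I_{\chi^2}(Y;Z|X)\le C_2\epsilon^2$.

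The main obstacle I anticipate is extracting \emph{uniform} constants $C_1,C_2$ that are valid for every realization of $(X,Z)$: the $\chi^2$-weights $1/Q(y)$ can blow up near the boundary of the simplex, while the Lipschitz/quadratic constants for $D_L$ remain bounded only on compact sets. Making the argument fully rigorous will likely require a mild regularity condition (e.g., strict positivity of $P_{Y|X}$, or bounded loss on the closure of the attained Bayes actions), which is implicit in the finite-alphabet setting of the paper but must be stated explicitly. Once this is in place, the two inequalities assemble into the claimed bounds.
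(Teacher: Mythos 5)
Your proposal is correct and its skeleton matches the paper's: the paper likewise writes $H_L(Y|X)=H_L(Y|Z)+I_L(Y;Z|X)-I_L(Y;X|Z)$, drops the nonnegative term $I_L(Y;X|Z)$, and then reduces everything to comparing the $L$-divergence with the $\chi^2$-divergence (its Appendix Lemma 3: $D_{\chi^2}(P\|Q)\le\beta^2$ implies $D_L(P\|Q)=O(\beta)$, and $O(\beta^2)$ under twice differentiability, proved via a sub-gradient mean-value bound and a second-order Taylor expansion with vanishing gradient — the same two local arguments you sketch). Where you genuinely differ is in how the average over $(X,Z)$ is handled: the paper notes that each term of the $\epsilon$-Markov sum is nonnegative, deduces $D_{\chi^2}(P_{Y|X=x,Z=z}\|P_{Y|X=x})\le\epsilon^2/P_{X,Z}(x,z)$ on the support, applies its local lemma pointwise, and sums $\sqrt{P_{X,Z}(x,z)}\,O(\epsilon)$ over the finite support; you instead seek a comparison $D_L\le C_1\sqrt{D_{\chi^2}}$ (resp. $C_2\,D_{\chi^2}$) with constants uniform in the conditioning distribution and then apply Jensen (resp. linearity) to get $I_L(Y;Z|X)\le C_1\sqrt{I_{\chi^2}(Y;Z|X)}$ directly. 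Your route is cleaner and yields constants that do not grow with $|\mathcal X||\mathcal Z|$, at the price of needing the uniformity you flag; note, though, that your worry about the $\chi^2$-weights $1/Q(y)$ blowing up is not the real issue — the bound $\|P-Q\|_1\le\sqrt{D_{\chi^2}(P\|Q)}$ holds unconditionally — the genuine uniformity question is whether $L(y,a_Q)$ (hence the Lipschitz/Hessian constants of $D_L(\cdot\|Q)$) is bounded uniformly over the attained Bayes actions, e.g.\ it can fail for log-loss near the boundary. The paper's own proof carries exactly the same implicit assumption (its sub-gradient and Hessian bounds depend on $a_{Q_Y}$ with $Q_Y=P_{Y|X=x}$), so this caveat is shared rather than a defect specific to your argument.
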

\ifreport
\begin{proof}
Lemma \ref{Lemma_CMI} is proven by using a local information geometric analysis; see Appendix \ref{PLemma_CMI} for the details.
\end{proof}
\else
\fi
Lemma \ref{Lemma_CMI}(b) was mentioned in \cite{shisher2021age} without proof. Lemma \ref{Lemma_CMI}(a) is new to the best of our knowledge. Now, we are ready to characterize how $H(\tilde Y_0 | \tilde X_{-\theta})$ varies with the AoI $\theta$.

\begin{theorem}\label{theorem1}
The $L$-conditional entropy
\begin{align}\label{eMarkov}
H_L(\tilde Y_0|\tilde X_{-\theta})= g_1(\theta)-g_2(\theta)
\end{align}
is a function of $\theta$, where $g_1(\theta)$ and $g_2(\theta)$ are two non-decreasing functions of $\theta$, given by
\begin{align}\label{g12function}
\!\!g_1(\theta)=&H_L(\tilde Y_0 | \tilde X_0) + \sum_{k=0}^{\theta-1}~I_L(\tilde Y_0; \tilde X_{-k}  | \tilde X_{-k-1}),~\nonumber\\
g_2(\theta)=&\sum_{k=0}^{\theta-1} I_L(\tilde Y_0; \tilde X_{-k-1} | \tilde X_{-k}).\!\!\!
\end{align}
If $\tilde Y_0 \overset{\epsilon}\leftrightarrow \tilde X_{-\mu} \overset{\epsilon}\leftrightarrow \tilde X_{-\mu-\nu}$ is an $\epsilon$-Markov chain for every $\mu, \nu \geq 0$, then $g_2(\theta) = O(\epsilon)$ and 
\begin{align}\label{eMarkov1}
H_L(\tilde Y_{0}|\tilde X_{-\theta})= g_1(\theta)+O(\epsilon).
\end{align}
\ignore{{\blue If, in addition,} $H_L(Y_0)$ is twice differentiable in $P_{Y_0}$, then $g_2(\theta) = O(\epsilon^2)$ and
\begin{align}\label{eMarkov2}
H_L(Y_0|X_{-\theta})= g_1(\theta)+O(\epsilon^2).
\end{align}}
\end{theorem}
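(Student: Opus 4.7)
The plan is to prove the decomposition via a telescoping identity, deduce monotonicity from the non-negativity of $L$-conditional mutual information, and finally bound $g_2(\theta)$ under the $\epsilon$-Markov hypothesis by reducing it to the local information-geometric machinery that underlies Lemma~\ref{Lemma_CMI}.

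First, I would write
\begin{align*}
H_L(\tilde Y_0|\tilde X_{-\theta}) - H_L(\tilde Y_0|\tilde X_0) = \sum_{k=0}^{\theta-1}\bigl[H_L(\tilde Y_0|\tilde X_{-k-1}) - H_L(\tilde Y_0|\tilde X_{-k})\bigr],
\end{align*}
and inside each bracket add and subtract the joint $L$-conditional entropy $H_L(\tilde Y_0|\tilde X_{-k},\tilde X_{-k-1})$. Applying the identity $I_L(\tilde Y_0;U|V)=H_L(\tilde Y_0|V)-H_L(\tilde Y_0|U,V)$ with the two orderings $(U,V)=(\tilde X_{-k},\tilde X_{-k-1})$ and $(\tilde X_{-k-1},\tilde X_{-k})$ gives
\begin{align*}
H_L(\tilde Y_0|\tilde X_{-k-1}) - H_L(\tilde Y_0|\tilde X_{-k}) = I_L(\tilde Y_0;\tilde X_{-k}|\tilde X_{-k-1}) - I_L(\tilde Y_0;\tilde X_{-k-1}|\tilde X_{-k}).
\end{align*}
Collecting positive and negative contributions produces the decomposition $H_L(\tilde Y_0|\tilde X_{-\theta})=g_1(\theta)-g_2(\theta)$ with $g_1,g_2$ as in \eqref{g12function}. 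Monotonicity of $g_1$ and $g_2$ is then immediate: each summand is an $L$-conditional mutual information, which by \eqref{divergence}--\eqref{CMI} equals the expectation of a non-negative $L$-divergence, so enlarging $\theta$ by one simply appends a non-negative term.

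For the $\epsilon$-bound, the hypothesis $\tilde Y_0\overset{\epsilon}\leftrightarrow\tilde X_{-k}\overset{\epsilon}\leftrightarrow\tilde X_{-k-1}$ yields $I_{\chi^2}(\tilde Y_0;\tilde X_{-k-1}|\tilde X_{-k})\leq\epsilon^2$ for every $k$. The central step is converting this $\chi^2$-conditional MI bound into an $L$-conditional MI bound. Using the local information-geometric expansion that proves Lemma~\ref{Lemma_CMI}(a), for $P$ close to a reference distribution $Q$ one has $D_L(P\|Q)=O(\|P-Q\|)$ in a suitable norm, while $\|P-Q\|=O\bigl(\sqrt{D_{\chi^2}(P\|Q)}\bigr)$ by a standard Cauchy--Schwarz argument. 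Applying this with $Q=P_{\tilde Y_0|\tilde X_{-k}=x}$ and $P=P_{\tilde Y_0|\tilde X_{-k}=x,\tilde X_{-k-1}=z}$, then taking expectation in $(x,z)$ and invoking Cauchy--Schwarz again, I would obtain $I_L(\tilde Y_0;\tilde X_{-k-1}|\tilde X_{-k})=O(\epsilon)$. Summing over the $\theta$ (fixed) terms of $g_2(\theta)$ preserves the $O(\epsilon)$ bound, and substitution into the decomposition yields \eqref{eMarkov1}.

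The hard part will be the quantitative passage from $\chi^2$- to $L$-conditional MI with uniform constants across $k$. In the general, non-twice-differentiable case, the local expansion of $H_L$ is only first-order sharp, so $\sqrt{D_{\chi^2}}$ (rather than $D_{\chi^2}$) controls the $L$-divergence, giving the claimed $O(\epsilon)$ rate; under the additional assumption that $H_L$ is twice differentiable in $P_{\tilde Y_0}$, the quadratic refinement would improve this to $O(\epsilon^2)$, paralleling Lemma~\ref{Lemma_CMI}(b). Finiteness of $\mathcal{X},\mathcal{Y},\mathcal{D}$ from the problem setup is what makes the local constants uniform in $(x,z,k)$, so the $\theta$-term sum does not degrade the rate.
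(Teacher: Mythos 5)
Your proposal is correct and follows essentially the same route as the paper's proof: the telescoping identity $H_L(\tilde Y_0|\tilde X_{-k}) = H_L(\tilde Y_0|\tilde X_{-k-1}) + I_L(\tilde Y_0;\tilde X_{-k-1}|\tilde X_{-k}) - I_L(\tilde Y_0;\tilde X_{-k}|\tilde X_{-k-1})$ summed over $k$, monotonicity from non-negativity of $L$-conditional mutual information, and a per-term bound $I_L(\tilde Y_0;\tilde X_{-k-1}|\tilde X_{-k}) = O(\epsilon)$ under the $\epsilon$-Markov hypothesis. The only cosmetic difference is that you re-derive this last bound from the local information-geometric estimate (the content of Lemma~\ref{divergenceL} and the proof of Lemma~\ref{Lemma_CMI}), whereas the paper simply invokes that fact, already established inside the proof of Lemma~\ref{Lemma_CMI}.
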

\ifreport
\begin{proof}
See Appendix \ref{Ptheorem1}.
\end{proof}
\else
\fi

According to Theorem \ref{theorem1}, the monotonicity of $H_L(\tilde Y_0|\tilde X_{-\theta})$ in $\theta$ is characterized by the parameter $\epsilon \geq 0$ in the $\epsilon$-Markov chain model. If $\epsilon$ is small, then $\tilde Y_0 \overset{\epsilon}\leftrightarrow \tilde X_{-\mu} \overset{\epsilon}\leftrightarrow \tilde X_{-\mu-\nu}$ is close to a Markov chain, and $H_L(\tilde Y_0|\tilde X_{-\theta})$ is nearly non-decreasing in $\theta$. If $\epsilon$ is large, then $\tilde Y_0 \overset{\epsilon}\leftrightarrow \tilde X_{-\mu} \overset{\epsilon}\leftrightarrow \tilde X_{-\mu-\nu}$ is far from a Markov chain, and $H_L(\tilde Y_0|\tilde X_{-\theta})$ could be non-monotonic in $\theta$. Theorem  \ref{theorem1} can be readily extended to random AoI $\Theta$ by using stochastic orders \cite{stochasticOrder}.

\begin{figure*}[ht]
  \centering
  \begin{subfigure}[t]{0.25\textwidth}
\includegraphics[width=\textwidth]{./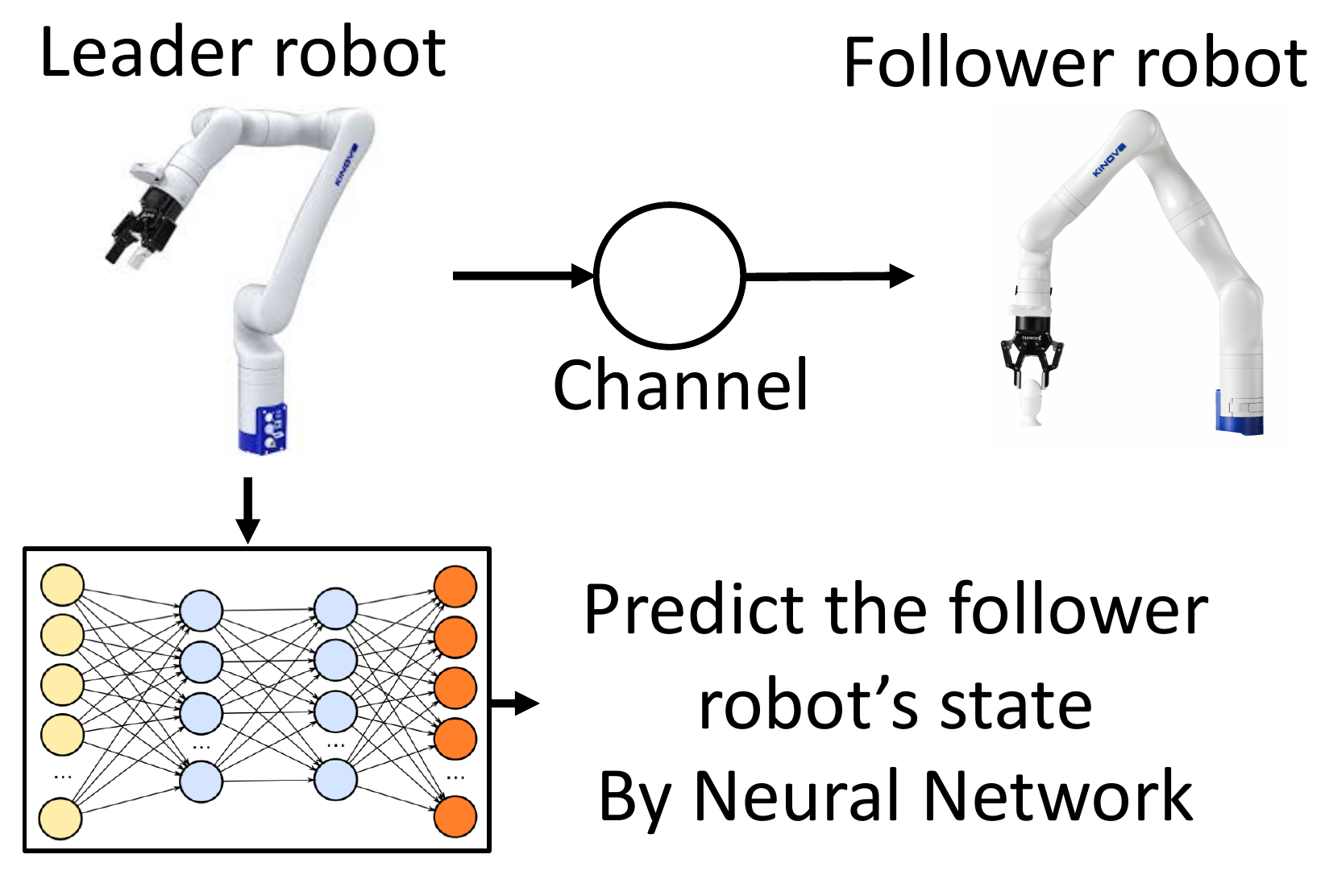}
  \subcaption{Prediction of Follower Robot}
\end{subfigure}
  \hspace{3mm}
\begin{subfigure}[t]{0.20\textwidth}
\includegraphics[width=\textwidth]{./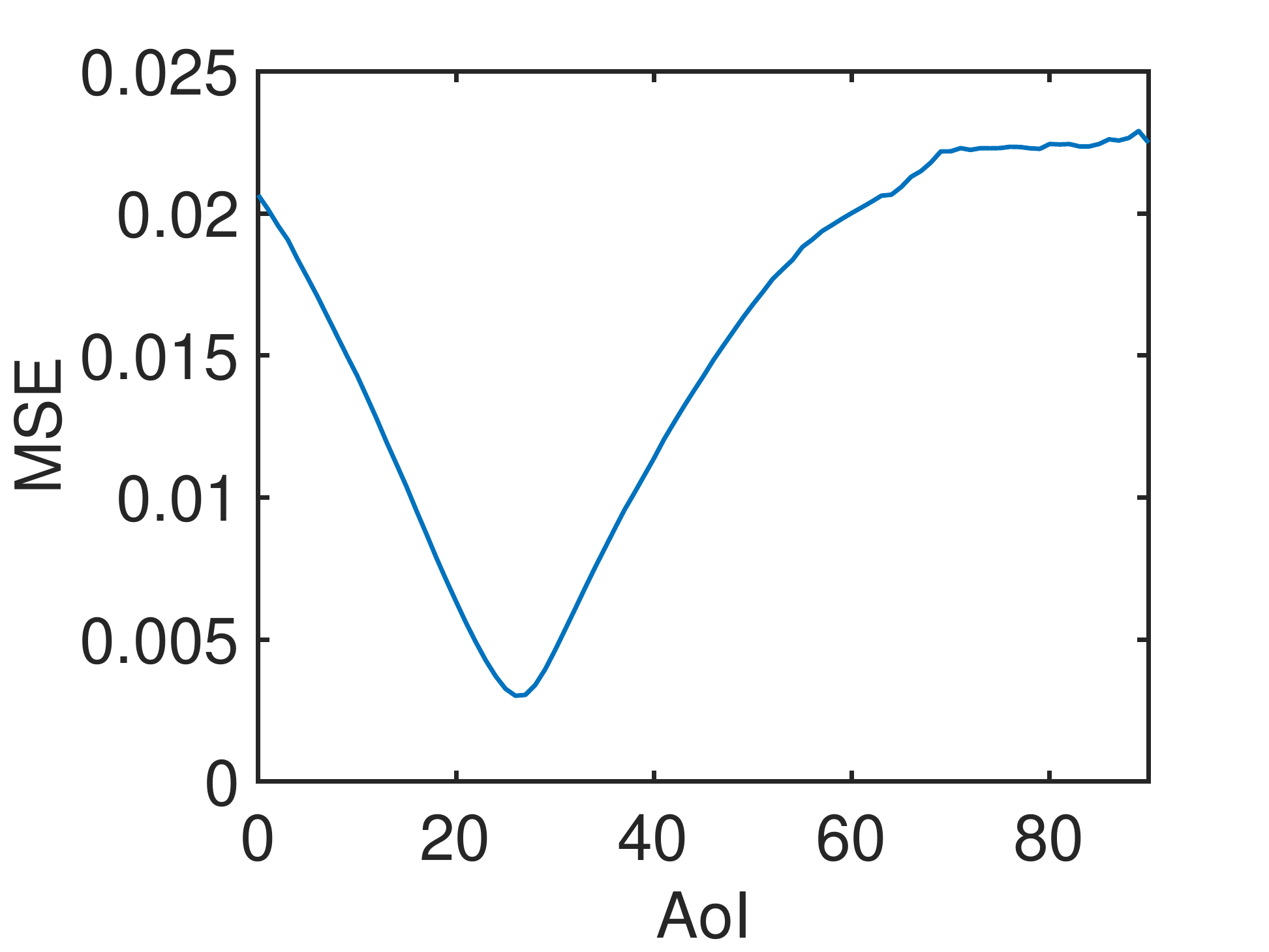}
  \subcaption{Training Error vs. AoI}
\end{subfigure}
  \hspace{3mm}
\begin{subfigure}[t]{0.20\textwidth}
\includegraphics[width=\textwidth]{./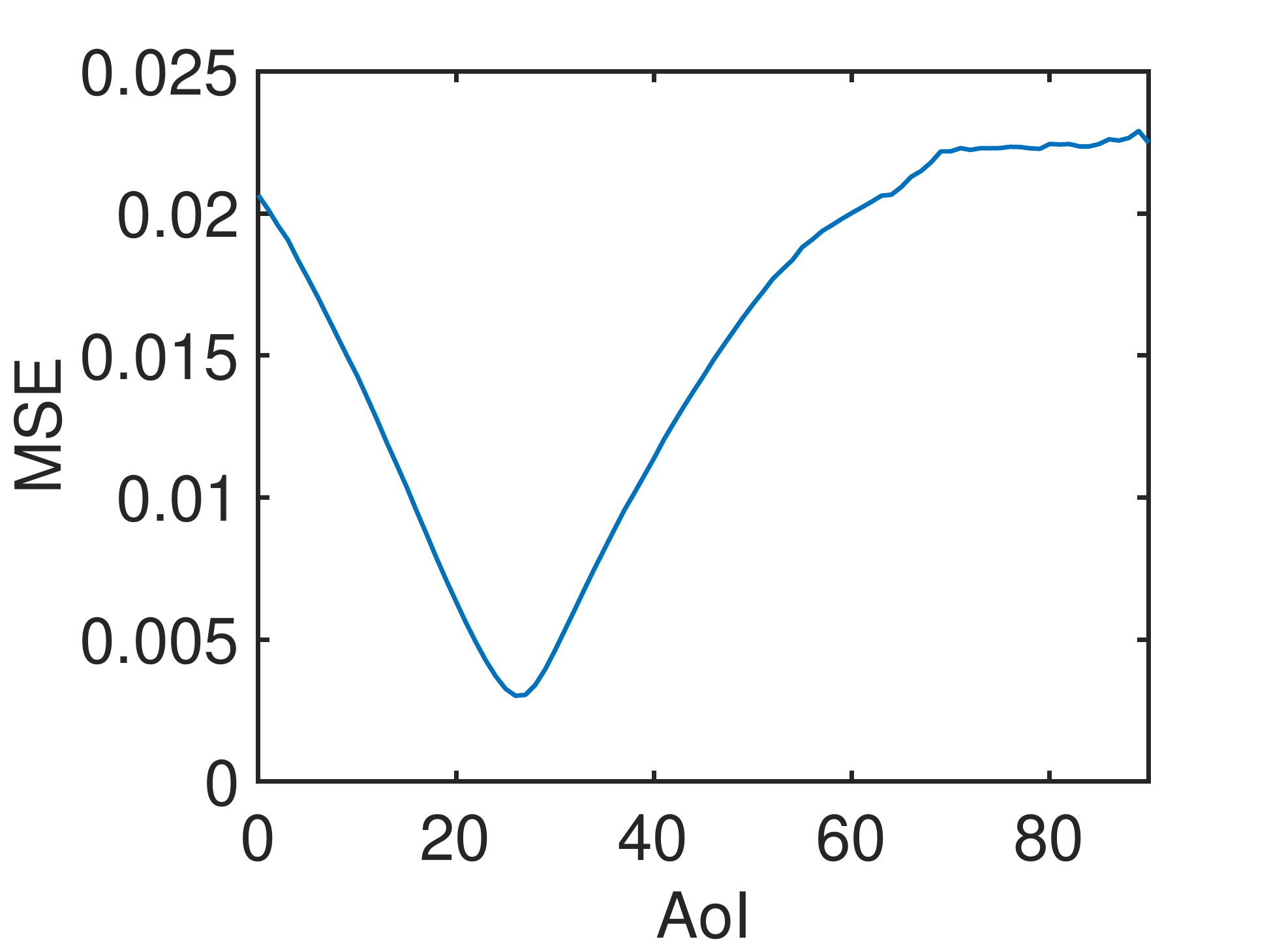}
  \subcaption{Inference Error vs. AoI}
\end{subfigure}
\caption{Robot state prediction in a leader-follower robotic system. The leader robot uses a neural network to predict the follower robot's state. The training and inference errors decrease in the AoI $\leq 25$ and increase when AoI $\geq 25$.}
\label{fig:DelayedNetworkedControlled}
\end{figure*}

\ignore{Theorem \ref{theorem1} {\blue tells us that} $H_L(Y_{t}|X_{t-\theta})$ is a function of the training AoI $\theta$, which is not necessarily monotonic.~The monotonicity of  $H_L(Y_{t}|X_{t-\theta})$ {\blue versus $\theta$} is characterized by the parameter $\epsilon \geq 0$ in the $\epsilon$-Markov chain model.~If $\epsilon$ is small, then the time-series data is close to a Markov chain and  $H_L(Y_{t}|X_{t-\theta})$ is nearly non-decreasing in the training AoI $\theta$. {\blue If $\epsilon$ is large, then the time-series data is far from a Markov chain and $H_L(Y_{t}|X_{t-\theta})$ is non-monotonic in $\theta$. ~Theorem \ref{theorem1} provides an interpretation of the} experimental results in Figure \ref{fig:Training}:~Recall that $u$ is the length of the observation sequence $X_{t-\theta}=(s_{t-\theta}, s_{t-\theta-1},\ldots, s_{t-\theta-u+1})$.~According to Shannon's interpretation of Markov sources in his seminal work \cite{Shannon1948}, the larger $u$, {\blue the closer $(Y_t, X_{t-\mu}, X_{t-\mu-\nu})$ tends to a Markov chain. The training error may be non-monotonic in the AoI $\theta$ for small $u$, but it will progressively become a growing function of the AoI as $u$ increases, which agrees with the results in Figure \ref{fig:Training}.}

\subsubsection{Training Error under Random Training AoI}
{\blue Theorem \ref{theorem1} can be extended to cover random training AoI $\Theta$ by using the stochastic ordering techniques \cite{stochasticOrder}. }}
\begin{definition}[\textbf{Univariate Stochastic Ordering}]\cite{stochasticOrder} 
A random variable $X$ is said to be stochastically smaller than another random variable $Z$, denoted as $X \leq_{st} Z$, if
\begin{align}
    P(X>x) \leq P(Z>x), \ \ \forall x \in \mathbb R.
\end{align}
\end{definition}
\begin{theorem}\label{theorem2}
If $\tilde Y_0 \overset{\epsilon}\leftrightarrow \tilde X_{-\mu} \overset{\epsilon}\leftrightarrow \tilde X_{-\mu-\nu}$ is an $\epsilon$-Markov chain for all $\mu, \nu \geq 0$, and the training AoIs in two experiments $1$ and $2$ satisfy $\Theta_{1} \leq_{st} \Theta_{2}$, then  
\begin{align}\label{dynamicsoln}
    H_L(\tilde Y_0|\tilde X_{-\Theta_1}, \Theta_1) \leq H_L(\tilde Y_0|\tilde X_{-\Theta_2}, \Theta_2)+O(\epsilon).
\end{align}
\ignore{\violet provided that the $L$-conditional entropies in \eqref{dynamicsoln} are finite. 
\item[(b)] If, in addition, $H_L(Y_t)$ is twice differentiable in $P_{Y_t}$, then
\begin{align}\label{dynamic_eqn}
    H_L(Y_{t}|X_{t-\Theta_1}, \Theta_1) \leq H_L(Y_{t}|X_{t-\Theta_2}, \Theta_2)+O(\epsilon^2).
\end{align}}
\end{theorem}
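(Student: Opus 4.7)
The plan is to reduce Theorem 2 to Theorem 1 by decomposing the conditional entropy over the AoI distribution, and then invoke the defining property of univariate stochastic ordering. The starting point is the identity \eqref{freshness_aware_cond}, which lets me write, for $i=1,2$,
\begin{align*}
H_L(\tilde Y_0 \mid \tilde X_{-\Theta_i}, \Theta_i) = \sum_{\theta \in \mathcal D} P_{\Theta_i}(\theta)\, H_L(\tilde Y_0 \mid \tilde X_{-\theta}) = \mathbb E\!\left[ H_L(\tilde Y_0 \mid \tilde X_{-\Theta_i}) \right].
\end{align*}
So the problem reduces to comparing the expected values of the same scalar function $\theta \mapsto H_L(\tilde Y_0 \mid \tilde X_{-\theta})$ under two distributions ordered by $\leq_{st}$.

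Next I would invoke Theorem~\ref{theorem1}: under the $\epsilon$-Markov assumption, $H_L(\tilde Y_0 \mid \tilde X_{-\theta}) = g_1(\theta) + O(\epsilon)$, where $g_1$ is the non-decreasing function defined in \eqref{g12function}. Substituting this approximation into both decompositions yields
\begin{align*}
H_L(\tilde Y_0 \mid \tilde X_{-\Theta_i}, \Theta_i) = \mathbb E[g_1(\Theta_i)] + O(\epsilon), \qquad i=1,2.
\end{align*}
Then I apply the standard characterization of univariate stochastic ordering: $\Theta_1 \leq_{st} \Theta_2$ is equivalent to $\mathbb E[f(\Theta_1)] \leq \mathbb E[f(\Theta_2)]$ for every non-decreasing function $f$ for which the expectations exist. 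Taking $f = g_1$ yields $\mathbb E[g_1(\Theta_1)] \leq \mathbb E[g_1(\Theta_2)]$, and combining this with the previous display gives the claimed inequality up to an $O(\epsilon)$ slack.

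The main obstacle I anticipate is making the $O(\epsilon)$ term genuinely uniform in $\theta$. Inspecting $g_2(\theta) = \sum_{k=0}^{\theta-1} I_L(\tilde Y_0 ; \tilde X_{-k-1} \mid \tilde X_{-k})$, the $\epsilon$-data processing inequality (Lemma~\ref{Lemma_CMI}) bounds each summand by an $O(\epsilon)$ quantity, so $g_2(\theta) = O(\theta \epsilon)$. Because $\mathcal D$ is a finite set by assumption, $\theta$ is bounded by a constant $|\mathcal D|$, so the bound collapses to a uniform $O(\epsilon)$, absorbed into the error term in \eqref{dynamicsoln}. I would flag this finiteness explicitly and, if needed, state the constant hidden in the $O(\epsilon)$ in terms of the range of $\Theta$. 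A smaller bookkeeping point is that the approximation coming from Theorem~\ref{theorem1} is pointwise in $\theta$; since both expectations are finite sums over the same finite index set $\mathcal D$, passing from the pointwise bound to the bound on expectations introduces no further loss beyond the uniform constant just described.
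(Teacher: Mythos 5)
Your proposal is correct and follows essentially the same route as the paper's proof: decompose via \eqref{freshness_aware_cond}, apply the $g_1-g_2$ split of Theorem~\ref{theorem1} with $g_2(\theta)=O(\epsilon)$, and use the fact that $\Theta_1\leq_{st}\Theta_2$ implies ordered expectations of the non-decreasing function $g_1$. The only cosmetic difference is that the paper works with the shifted non-negative function $\hat g_1(\theta)=g_1(\theta)-H_L(\tilde Y_0|\tilde X_0)$ to guarantee the expectations exist, whereas you rely (legitimately, given the paper's standing assumption) on the finiteness of $\mathcal D$ for that and for the uniformity of the $O(\epsilon)$ term.
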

\ifreport
\begin{proof}
See Appendix \ref{Ptheorem2}.
\end{proof}
\else
\fi
According to Theorem \ref{theorem2}, if $\Theta_{1}$ is stochastically smaller than $\Theta_{2}$, then the training error in Experiment 1 is approximately
smaller than that in Experiment 2. If, in addition to the conditions in Theorems 3.4 and 3.6, $H_L(\tilde Y_0)$ is twice differentiable in $P_{\tilde Y_0}$, then the last term $O(\epsilon)$ in \eqref{eMarkov1} and \eqref{dynamicsoln} becomes $O(\epsilon^2)$. 

\subsection{Inference Error vs. Inference AoI}\label{SecInferenceError}
\ignore{{\blue Next, we analyze the relationship between inference error and inference AoI.}
\subsubsection{Inference Error under Deterministic Inference AoI}}
According to \eqref{given_L_condentropy}, \eqref{eq_cond_entropy1}, and \eqref{cond-cross-entropy}, $H_L(Y_{t}; \tilde Y_0 | X_{t-\delta})$ is lower bounded by $H_L(Y_{t} | X_{t-\delta})$. In addition, $H_L(Y_{t}; \tilde Y_0 | X_{t-\delta})$ is close to its lower bound $H_L(Y_{t} | X_{t-\delta})$, if the conditional distributions $P_{Y_t|X_{t-\delta}}$ and $P_{\tilde Y_0|\tilde X_{-\delta}}$ are close to each other, as shown by the following lemma. 

\ignore{{\blue In the case of deterministic inference AoI $\Delta=\delta$,  \ignore{the inference error $\mathrm{err}_{\mathrm{inference}}$ in \eqref{eq_inferenceerror} becomes 
$$\mathbb E_{Y, X \sim P_{\tilde Y_t, \tilde X_{t-\delta}}}\left[L\left(Y,\phi^*_{P_{Y_t, X_{t-\Theta},\Theta}}(X,\delta)\right)\right],$$ 
which is a function of inference AoI $\delta$ since $\{(\tilde Y_t, \tilde X_t),t \in \mathbb Z\}$ are assumed to be a stationary process. Moreover,} the $L$-conditional cross entropy $H_L(\tilde Y_{t}; Y_{t} | \tilde X_{t-\Delta}, \Delta=\delta)$ can be simplified as $H_L(\tilde Y_{t}; Y_{t} | \tilde X_{t-\delta})$. According to  \eqref{lowerbound_inference}, $H_L(\tilde Y_{t}; Y_{t} | \tilde X_{t-\delta})$ is lower bounded by} $H_L(\tilde Y_{t} | \tilde X_{t-\delta})$.
{\blue If the conditional distribution $P_{\tilde Y_t| \tilde X_{t-\delta}}$ of the inference data is close to the conditional distribution $P_{ Y_t| X_{t-\delta}}$ of the training data, then $H_L(\tilde Y_{t}; Y_{t} | \tilde X_{t-\delta})$ is close to its lower bound $H_L(\tilde Y_{t} | \tilde X_{t-\delta})$. This result is asserted in the following lemma. }}
\begin{lemma}\label{lemma_inference}
If for all $x \in \mathcal X$
    \begin{align}\label{T3condition2}
    D_{\chi^2}\left(P_{Y_{t}|X_{t-\delta}=x} || P_{\tilde Y_0| \tilde X_{-\delta}=x}\right) \!\! \leq \beta^2,
    \end{align}
    then  
    \begin{align}\label{Eq_Theorem3a}
        H_L(Y_{t}; \tilde Y_0 | X_{-\delta})=&H_L(Y_{t} | X_{t-\delta})+O(\beta).
    \end{align}
\end{lemma}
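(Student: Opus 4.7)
The plan is to rewrite the gap $H_L(Y_t;\tilde Y_0|X_{t-\delta}) - H_L(Y_t|X_{t-\delta})$ as an expectation of $L$-divergences between the training and inference conditional distributions, and then control each such divergence by the corresponding $\chi^2$-divergence using a local bound in the spirit of the proof of Lemma \ref{Lemma_CMI}.

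First I would combine the definitions \eqref{given_L_condentropy}, \eqref{eq_cond_entropy1}, and \eqref{cond-cross-entropy}, subtracting term by term with the common weight $P_{X_{t-\delta}}(x)$. For each fixed $x$, the inner bracket $\mathbb E_{Y\sim P_{Y_t|X_{t-\delta}=x}}[L(Y,a_{P_{\tilde Y_0|\tilde X_{-\delta}=x}})] - \mathbb E_{Y\sim P_{Y_t|X_{t-\delta}=x}}[L(Y,a_{P_{Y_t|X_{t-\delta}=x}})]$ is, by the definition of $L$-divergence in \eqref{divergence}, equal to the non-negative quantity $D_L(P_{\tilde Y_0|\tilde X_{-\delta}=x}\,\|\,P_{Y_t|X_{t-\delta}=x})$. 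Hence the whole gap equals $\sum_{x\in\mathcal X} P_{X_{t-\delta}}(x)\,D_L(P_{\tilde Y_0|\tilde X_{-\delta}=x}\,\|\,P_{Y_t|X_{t-\delta}=x})$, and it suffices to show that this expectation is $O(\beta)$.

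Next I would bound each $D_L(Q\,\|\,P)$, with $P=P_{Y_t|X_{t-\delta}=x}$ and $Q=P_{\tilde Y_0|\tilde X_{-\delta}=x}$, in terms of $D_{\chi^2}(P\,\|\,Q)\leq \beta^2$. Using the optimality of the Bayes action $a_Q$ for $Q$ (so that $\sum_y Q(y)[L(y,a_Q)-L(y,a_P)]\leq 0$), I obtain $D_L(Q\,\|\,P) \leq \sum_y (P(y)-Q(y))\,[L(y,a_Q)-L(y,a_P)]$. Combining H\"older's inequality with the Cauchy-Schwarz identity $\sum_y |P(y)-Q(y)| = \sum_y \frac{|P(y)-Q(y)|}{\sqrt{Q(y)}}\sqrt{Q(y)} \leq \sqrt{D_{\chi^2}(P\,\|\,Q)}$ then yields $D_L(Q\,\|\,P) \leq M_L \sqrt{D_{\chi^2}(P\,\|\,Q)}$, where $M_L$ bounds the oscillation of $L(\cdot,a_Q)-L(\cdot,a_P)$ on $\mathcal Y$. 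This is the non-smooth counterpart of the local information-geometric estimate used inside Lemma \ref{Lemma_CMI}. Substituting the hypothesis and applying Jensen's inequality to the concave function $\sqrt{\cdot}$ then gives the overall bound $M_L\beta = O(\beta)$.

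The main obstacle is this second step: establishing the pointwise bound $D_L(Q\,\|\,P)=O(\sqrt{D_{\chi^2}(P\,\|\,Q)})$ with $M_L$ uniform in $x$. Since $\mathcal Y$ and $\mathcal A$ are finite, boundedness of $L$ on $\mathcal Y\times\mathcal A$ suffices; if additionally $H_L(Y)$ is twice differentiable in $P_Y$, a Taylor expansion of $H_L$ about $P$, mirroring the sharpened version of the $\epsilon$-data processing inequality in Lemma \ref{Lemma_CMI}, would strengthen the conclusion to $O(\beta^2)$.
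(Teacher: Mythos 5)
Your argument is correct, and it follows the paper's overall skeleton: both proofs first identify the gap $H_L(Y_t;\tilde Y_0\,|\,X_{t-\delta})-H_L(Y_t\,|\,X_{t-\delta})$ with the $P_{X_{t-\delta}}$-weighted sum of per-$x$ regrets (the paper reads this off from \eqref{L-CondCrossEntropy}), and then reduce the lemma to showing each per-$x$ term is $O(\beta)$. Where you genuinely differ is in how that pointwise bound is obtained. The paper invokes its Lemma \ref{divergenceL}(a), proved via a convexity/subgradient mean-value argument for $g(\mathbf p_Y)=\sum_y P_Y(y)L(y,a_{Q_Y})-\min_a\sum_y P_Y(y)L(y,a)$ combined with the same $\ell_1$-versus-$\chi^2$ estimate you use; you instead prove the bound directly from Bayes-optimality of $a_Q$ under $Q$, which gives $\mathbb E_{P}[L(Y,a_Q)]-\mathbb E_{P}[L(Y,a_P)]\le\sum_y(P(y)-Q(y))[L(y,a_Q)-L(y,a_P)]$, and then H\"older together with $\sum_y|P(y)-Q(y)|\le\sqrt{D_{\chi^2}(P\,\|\,Q)}\le\beta$ finishes; since the gap is nonnegative, the one-sided bound suffices for \eqref{Eq_Theorem3a}. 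Your route is more elementary and self-contained and makes the hidden constant explicit (an oscillation of the loss differences), while the paper's route reuses a general lemma that also delivers the sharper $O(\beta^2)$ rate under twice differentiability (its part (b)), which you correctly flag as the natural strengthening. Two small points to tighten: uniformity of your $M_L$ in $x$ should be argued from finiteness of $\mathcal X$ (only finitely many Bayes-action pairs $(a_{P_x},a_{Q_x})$ arise), since the paper assumes $\mathcal Y,\mathcal X,\mathcal D$ finite but not $\mathcal A$; and for losses like log-loss the oscillation should be taken over the support of $Q_x$ only, noting that \eqref{T3condition2} forces $P_x(y)=0$ wherever $Q_x(y)=0$, so the discarded terms vanish. (Your writing the divergence as $D_L(Q\,\|\,P)$ versus the paper's $D_L(P\,\|\,Q)$ merely reflects the mismatch between \eqref{divergence} as printed and its use in \eqref{MI}; the quantity bounded is the right one, and the Jensen step is unnecessary under the pointwise hypothesis \eqref{T3condition2} — it would matter only under \eqref{T3condition2_diff}.)
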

\ifreport
\begin{proof}
See Appendix \ref{Plemma_inference}.
\end{proof}
If \eqref{T3condition2} is replaced by the condition
 \begin{align}\label{T3condition2_diff}
    \sum_{x \in \mathcal X} P_{X_{t-\delta}}(x)~~D_{\chi^2}\left(P_{Y_{t}| X_{t-\delta}=x} || P_{\tilde Y_0| \tilde X_{-\delta}=x}\right) \!\!~\leq \beta^2,
    \end{align}
then Lemma \ref{lemma_inference} still holds.
\else
\fi 
By combining Theorem \ref{theorem1} and Lemma \ref{lemma_inference}, the monotonicity of $H_L(Y_{t}; \tilde Y_0 | X_{t-\delta})$ versus $\delta$ is characterized in the next theorem.
\begin{theorem}\label{theorem3}
The following assertions are true:
\begin{itemize}
\item[(a)] If $\{(Y_t, X_t),t \in \mathbb Z\}$ is a stationary process, then $H_L(Y_{t}; \tilde Y_0 | X_{t-\delta})$ is a function of the inference AoI $\delta$. 
\item[(b)] If, in addition, $Y_t \overset{\epsilon}\leftrightarrow X_{t-\mu} \overset{\epsilon}\leftrightarrow X_{t-\mu-\nu}$ is an $\epsilon$-Markov chain for all $\mu, \nu \geq 0$ and \eqref{T3condition2} holds for all $x \in \mathcal X$ and $\delta \in \mathcal D$, then for all $0 \leq \delta_1\leq \delta_2$
\begin{align}\label{eq_theorem3}
 \!\!H_L(Y_{t}; \tilde Y_0 | X_{t-\delta_1}) \leq H_L(Y_{t}; \tilde Y_0 | X_{t-\delta_2})+O\big(\max\{\epsilon, \beta\}\big).
\end{align}
 \end{itemize}
\end{theorem}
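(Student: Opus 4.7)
The plan is to prove (a) and (b) in that order, relying on the cross-entropy decomposition \eqref{L-CondCrossEntropy}, Theorem \ref{theorem1}, and Lemma \ref{lemma_inference}. For part (a), since $\{(Y_t, X_t), t \in \mathbb{Z}\}$ is stationary, the marginal $P_{X_{t-\delta}}$ and the conditional $P_{Y_t|X_{t-\delta}=x}$ depend only on $\delta$ and $x$, not on $t$; the Bayes action $a_{\tilde Y_0|\tilde X_{-\delta}=x}$ depends only on the fixed training distribution and on $(\delta,x)$. Substituting these observations into the right-hand side of \eqref{L-CondCrossEntropy} shows that every factor on that right-hand side is determined by $\delta$, and hence $H_L(Y_t; \tilde Y_0 | X_{t-\delta})$ is a function of $\delta$ alone.

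For part (b), I would first establish an inference-side analog of Theorem \ref{theorem1}. Stationarity of $\{(Y_t, X_t)\}$ allows the same telescoping chain-rule manipulation used in the proof of Theorem \ref{theorem1} to be carried out on the inference process, yielding
\begin{align*}
H_L(Y_t | X_{t-\delta}) = H_L(Y_t | X_t) + \sum_{k=0}^{\delta-1} I_L(Y_t; X_{t-k} | X_{t-k-1}) - \sum_{k=0}^{\delta-1} I_L(Y_t; X_{t-k-1} | X_{t-k}),
\end{align*}
which expresses $H_L(Y_t | X_{t-\delta})$ as $\tilde g_1(\delta) - \tilde g_2(\delta)$ with both $\tilde g_1$ and $\tilde g_2$ non-decreasing in $\delta$. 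The $\epsilon$-Markov hypothesis $Y_t \overset{\epsilon}\leftrightarrow X_{t-\mu} \overset{\epsilon}\leftrightarrow X_{t-\mu-\nu}$ together with Lemma \ref{Lemma_CMI} bounds each summand of $\tilde g_2(\delta)$ by $O(\epsilon)$, and since the AoI index set $\mathcal{D}$ is finite, this gives $\tilde g_2(\delta) = O(\epsilon)$ uniformly in $\delta$.

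With this analog in hand, the key inequality for $0 \leq \delta_1 \leq \delta_2$ reads
\begin{align*}
H_L(Y_t | X_{t-\delta_1}) \leq \tilde g_1(\delta_1) \leq \tilde g_1(\delta_2) = H_L(Y_t | X_{t-\delta_2}) + \tilde g_2(\delta_2) \leq H_L(Y_t | X_{t-\delta_2}) + O(\epsilon).
\end{align*}
I would then apply Lemma \ref{lemma_inference} at $\delta = \delta_1$ and $\delta = \delta_2$, invoking the $x$-wise $\chi^2$ bound \eqref{T3condition2}, to replace $H_L(Y_t | X_{t-\delta_i})$ by $H_L(Y_t; \tilde Y_0 | X_{t-\delta_i})$ up to an $O(\beta)$ remainder on each side. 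Concatenating these two identities with the displayed inequality yields $H_L(Y_t; \tilde Y_0 | X_{t-\delta_1}) \leq H_L(Y_t; \tilde Y_0 | X_{t-\delta_2}) + O(\epsilon) + O(\beta)$, which is exactly \eqref{eq_theorem3}.

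The main obstacle I anticipate is not conceptual but arithmetic: I must check that the $O(\epsilon)$ constants produced by Lemma \ref{Lemma_CMI} (a local-geometry Taylor remainder around $P_{Y|X}$) and the $O(\beta)$ constants produced by Lemma \ref{lemma_inference} (a local-geometry remainder around $P_{\tilde Y_0|\tilde X_{-\delta}}$) are uniform in $\delta$, $x$, and $k$, so that after summing at most $|\mathcal{D}|$ conditional mutual information terms the two residuals add rather than compound, and the combined error is legitimately $O(\max\{\epsilon,\beta\})$ rather than, say, $O(|\mathcal{D}|\,\epsilon + \beta)$ with a divergent constant.
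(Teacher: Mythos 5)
Your proposal is correct and takes essentially the same route as the paper: part (a) follows from the decomposition \eqref{L-CondCrossEntropy} together with stationarity and the fact that the Bayes action $a_{\tilde Y_0|\tilde X_{-\delta}=x}$ is fixed by the training distribution, and part (b) combines Lemma \ref{lemma_inference} (an $O(\beta)$ correction at $\delta_1$ and $\delta_2$) with the Theorem \ref{theorem1}-style telescoping bound $H_L(Y_t|X_{t-\delta_1})\leq H_L(Y_t|X_{t-\delta_2})+O(\epsilon)$, exactly as in the paper's proof. The only difference is presentational — the paper simply cites Theorem \ref{theorem1} applied to the inference process while you rederive the inference-side analog explicitly — and the uniformity concern you raise is resolved the same way the paper resolves it, via the finiteness of $\mathcal D$.
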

 \ifreport
\begin{proof}
See Appendix \ref{Ptheorem3}.
\end{proof}
\else
\fi
According to Theorem \ref{theorem3}, $H_L(Y_{t}; \tilde Y_0 | X_{t-\delta})$ is a function of the inference AoI $\delta$. If $\epsilon$ and $\beta$ are close to zero, $H_L(Y_{t}; \tilde Y_0 | X_{t-\delta})$ is nearly a non-decreasing function of $\delta$; otherwise, $H_L(Y_{t}; \tilde Y_0 | X_{t-\delta})$ can be far from a monotonic function of $\delta$.  

\ifreport
The $\epsilon$-Markov chain model that we propose can be viewed as a measure of conditional dependence. Different from earlier studies on conditional dependence measures \cite{fukumizu2007kernel, azadkia2019simple, reddi2013scale, joe1989relative}, we use a local information geometric approach to characterize how the non-Markov property of the data affects the relationship between AoI and the performance of real-time forecasting.
\else
\fi

\subsection{Interpretation of Experimental Results}\label{Experimentation}
We  conduct several experiments to study how the training and inference errors of real-time supervised learning vary with the AoI. The code of these experiments is provided in an open-source Github repository.\footnote{\url{https://github.com/Kamran0153/Impact-of-Data-Freshness-in-Learning}.}

Fig. \ref{fig:learning} illustrates the experimental results of supervised learning based video prediction, which are regenerated from \cite{lee2018stochastic}. In this experiment, the video   frame $V_t$ at time $t$ is predicted based on a feature $X_{t-\delta}=(V_{t-\delta}, V_{t-\delta-1})$ that is composed of two consecutive video frames, where $\Delta(t)=\delta$ is the AoI. A pre-trained neural network model called ``SAVP" \cite{lee2018stochastic} is used to evaluate on $256$ samples of ``BAIR" dataset \cite{ebert17sna}, which contains video frames of a randomly moving robotic arm. The pre-trained neural network model can be downloaded from the Github repository of \cite{lee2018stochastic}. One can observe from Fig. \ref{fig:learning}(b)-(c) that the training and inference errors are non-decreasing functions of the AoI, because the video clips $V_t$ are approximately a Markov chain.

\ifreport
Fig. \ref{fig:TrainingCartVelocity} plots the performance of actuator state prediction under mechanical response delay. We consider the OpenAI CartPole-v1 task \cite{brockman2016openai}, where a DQN reinforcement learning algorithm \cite{mnih2015human} is used to control the force on a cart and keep the pole attached to the cart from falling over. By simulating $10^4$ episodes of the OpenAI CartPole-v1 environment, a time-series dataset is collected that contains the pole angle $\psi_t$ and the velocity $V_{t}$ of the cart. The pole angle $\psi_t$ at time $t$ is predicted based on a feature $X_{t-\delta}=(V_{t-\delta}, \ldots, V_{t-\delta-u+1})$, i.e., a vector of cart velocity with length $u$, where $V_t$ is the cart velocity at time $t$ and $\Delta(t)=\delta$ is the AoI. The predictor in this experiment is an LSTM neural network that consists of one input layer, one hidden layer with 64 LSTM cells, and a fully connected output layer. First $72\%$ of the dataset is used for training and the rest of the dataset is used for inference. From the data trace in Fig. \ref{fig:TrainingCartVelocity}(b), one can observe a response (or reaction) delay of $25$-$30$ ms between cart velocity and pole angle. Such response delay exists broadly in  mechanical, circuit, biological, economic, and physical systems that are modeled by differential equations. Due to the response delay, $\psi_t$ is strongly correlated with $V_{t-25}$, but quite different from $V_t$. Hence, $(\psi_t, V_t, V_t-25)$ is far from a Markov chain. This agrees with Fig. \ref{fig:TrainingCartVelocity}(c)-(d), where the training error and inference error are non-monotonic in the AoI for $u=1$. 

According to Shannon's interpretation of Markov sources in his seminal work \cite{Shannon1948}, \ignore{the data sequence}$(\psi_t, X_{t-\mu}, X_{t-\mu-\nu})$ becomes closer to a Markov chain, as the size $u$ of feature vector $X_{t-\delta}=(V_{t-\delta}, \ldots,$\ $V_{t-\delta-u+1})$ increases. In fact, $(\psi_t, X_{t-\mu}, X_{t-\mu-\nu})$ is precisely a Markov chain if $u=\infty$. One can observe from Fig. \ref{fig:TrainingCartVelocity}(c)-(d) that, as $u$ grows, the training and inference errors get close to non-decreasing functions of the AoI. This is because $(\psi_t, X_{t-\mu}, X_{t-\mu-\nu})$ tends to be Markovian as $u$ increases, i.e., the parameter $\epsilon$ of the $\epsilon$-Markov chain $\psi_t \overset{\epsilon}\leftrightarrow X_{t-\mu} \overset{\epsilon}\leftrightarrow X_{t-\mu-\nu}$ reduces to zero as $u$ grows. We note that one disadvantage of large feature size $u$ is that it increases the channel capacity needed for transmitting the features. 

\else
\fi
Fig. \ref{fig:DelayedNetworkedControlled} depicts the performance of robot state prediction in a leader-follower robotic system. As illustrated in a Youtube video \footnote{\url{https://youtu.be/_z4FHuu3-ag}.}, the leader robot sends its state (joint angles) $X_t$ to the follower robot through a channel. One packet for updating the leader robot's state is sent periodically to the follower robot every $20$ time-slots. The transmission time of each updating packet is $20$ time-slots. The follower robot moves towards the leader's most recent state and locally controls its robotic fingers to grab an object. We constructed a robot simulation environment using the Robotics System Toolbox in MATLAB.  In each episode, a can is randomly generated on a table in front of the follower robot. The leader robot observes the position of the can and illustrates to the follower robot how to grab the can and place it on another table, without colliding with other objects in the environment. The rapidly-exploring random tree (RRT) algorithm is used to control the leader robot. Collision avoidance algorithm and trajectory generation algorithm are used for local control of the follower robot. The leader robot uses a neural network to predict the follower robot's state $Y_t$. The neural network consists of one input layer, one hidden layer with $256$ ReLU activation nodes, and one fully connected (dense) output layer. The dataset contains the leader and follower robots' states in 300 episodes of continue operation. The first $80\%$ of the dataset is used for the training and the other $20\%$ of the dataset is used for the inference. In Fig. \ref{fig:DelayedNetworkedControlled}, the training and the inference error decreases in AoI, when AoI $\leq 25$ and increases in AoI when AoI $\geq 25$. In this case, even a fresh feature with AoI=0 is not good for prediction. In this experiment, $(Y_t, X_{t-\mu}, X_{t-\mu-\nu})$ is not a Markov chain for all $\mu, \nu \geq 0$. Hence, the training and the inference error are not non-decreasing functions of AoI. 

To facilitate understanding the experimental results in Fig. \ref{fig:DelayedNetworkedControlled}, we provide a toy example to interpret it: Let $X_t$ be a Markov chain and $Y_t = f(X_{t-d})$. One can view $X_t$ as the input of a causal system with delay $d \geq 0$, and $Y_t$ as the
system output. Because $Y_t = f(X_{t-d})$, a stale system input $X_{t-d}$ at time $t-d$ is informative for inferring the current output $Y_t$ at time $t$. If the training and inference datasets have similar empirical distributions, \ifreport by using Lemma \ref{ToyExampleLemma1} from Appendix \ref{ToyExample}, we get \else by the data processing inequality we can show \fi $H_L( \tilde Y_0 | \tilde X_{\delta})$ and $H_L(Y_t;\tilde Y_0 | X_{t-\delta})$ decrease with $\delta$ when $0 \leq \delta \leq d$ and increase with $\delta$ when $\delta \geq d$, which is similar to Fig. \ref{fig:DelayedNetworkedControlled}. Moreover, $H_L(\tilde Y_0 | \tilde X_d)$ is close to zero if the function space $\Lambda$ is sufficiently large. It is equal to zero if  $\Lambda = \Phi$. The leader-follower robotic system in Fig. \ref{fig:DelayedNetworkedControlled} can be viewed as a causal system, where the system input is the leader robot’s state, and the system output is the follower robot’s state. Non-monotonicity occurs in Fig. \ref{fig:DelayedNetworkedControlled} because the input of a causal system is used to predict the system output in this experiment, which is similar to the toy example. However, the relationship between the system input and output in Fig. \ref{fig:DelayedNetworkedControlled} is more complicated than the toy example, due to the control algorithms used by the follower robot.


\ifreport
In Fig. \ref{fig:Training}, we plot the performance of temperature prediction. In this experiment, the temperature $Y_t$ at time $t$ is predicted based on a feature $X_{t-\delta}=\{s_{t-\delta}, \ldots, s_{t-\delta-u+1}\}$, where $s_t$ is a $7$-dimensional vector consisting of the temperature, pressure, saturation vapor pressure, vapor pressure deficit, specific humidity, airtight, and wind speed at time $t$. Similar to \cite{kerasexample}, we have used an LSTM neural network and Jena climate dataset recorded by Max Planck Institute for Biogeochemistry. In this experiment, time unit of the sequence is $1$ hour. Due to the long-range dependence of weather data, if $u=1, 6,$ or $12$, $(Y_{t}, X_{t-\mu}, X_{t -\mu-\nu})$ is not a Markov chain. If $u=24$, then $Y_t \leftrightarrow X_{t-\mu} \leftrightarrow X_{t-\mu-\nu}$ is close to a Markov chain. Hence, when $u=1, 6,$ or $12$, the training error and the inference error are non-monotonic in AoI and when $u=24$, the training error and the inference error are close to a non-decreasing function of AoI. 

Fig. \ref{fig:Trainingcsi} illustrates the performance of channel state information (CSI) prediction. The CSI $h_t$ at time $t$ is predicted based on a feature $X_{t-\delta}=\{h_{t-\delta}, \ldots, h_{t-\delta-u+1}\}$. The dataset for CSI is generated by using Jakes model \cite{baddour2005autoregressive}. Due to long-range dependence of CSI, the training error and the inference error are non-monotonic in AoI. However, they become non-decreasing functions of AoI as $u$ grows. The phenomenon of long-range dependence is also observed in solar power prediction \cite{shisher2021age}.
\else
Besides these experiments, if there exists response delay, long-range dependence, and periodic patterns in the target and feature data sequence, the training and inference errors could also be non-monotonic functions of the AoI. This phenomenon is observed in actuator state prediction, temperature prediction, wireless channel state information prediction, and solar power prediction \cite{technical_report, shisher2021age}.
\fi

\ignore{\subsubsection{Pole Angle Prediction in a Cart Pole Balancing Task}
{\blue The objective of the cart pole balancing task is to keep a pole attached to a cart from falling over by applying forces on the cart. In our experiment, we predict the pole angle $Y_t=\psi_t$ at time $t$ by observing a time sequence of cart velocity $X_{t-\delta}=\{v_{t-\delta}, \ldots, v_{t-\delta-u+1}\}$ with length $u$, where $v_t$ is the cart velocity at time $t$ and $\delta$ is the AoI. Data traces in Fig \ref{fig:TrainingCartVelocity}(b) portray a phase delay between the pole angle $\psi_t$ and the cart velocity $v_t$. The phase delay occurs due to the response delay between the cart and the pole. The response delay is also observed in many other mechanical systems \cite{du2007h}. Due to the phase delay, the data sequence $Y_t \leftrightarrow X_{t-\mu} \leftrightarrow X_{t-\mu-\nu}$ becomes non-Markov. However, according to Shannon’s interpretation of Markov sources in his seminal work \cite{Shannon1948}, the larger $u$, the closer $Y_t \leftrightarrow X_{t-\mu} \leftrightarrow X_{t-\mu-\nu}$ tends to a Markov chain. Now, by using Theorem \ref{theorem1} and Theorem \ref{theorem2}, we explain experimental results in Fig. \ref{fig:TrainingCartVelocity}(c)-(d). When $u=1$ and $u=5$, the data sequence is non-Markov. Hence, the training error and the inference error are non-monotonic functions of AoI. In this case, even the freshest feature $(\delta=0)$ is not good for prediction. As the $u$ is increased to $10$, the data sequence gets close to Markov and the learning error curves get close to monotonic functions. }}

\ignore{\subsubsection{Car States Prediction in a Mountain Car Control Task} 
Fig. \ref{fig:DelayedNetworkedControlled}(a) illustrates a mountain car control task, where the goal is to drive an underpowered car up a steep mountain road \cite[pp. 214-215] {sutton2018reinforcement}. As depicted in Fig. \ref{fig:DelayedNetworkedControlled}(b), in our experiment, the car is controlled by a remote controller, where the control action is applied to the car after a delay $D$ and the states of the car are fed back to the controller with a delay $D$. The controller sends  the available information $X_t$ (observed states and its action) to the predictor. The predictor infer the current state of the car $Y_t$ by observing features $X_{t-\delta}$. This model embodies many networked controlled systems \cite{yang2006networked, gupta2009networked, peng2020switching}. In Fig. \ref{fig:DelayedNetworkedControlled}(c)-(d), we plot the training error and the inference error versus the AoI $\delta$ for different values of communication delay $D$. When $D>0$, the data sequence becomes non-Markov. However, when $D=0$, the data sequence becomes exactly Markov. For this reason, as explained in Theorem \ref{theorem1} and Theorem \ref{theorem2}, the training error and the inference error are non-monotonic functions of AoI when $D=5$ and $D=10$. When $D=0$, they become monotonic functions of AoI.
\ignore{{\blue The mountain car control task is to drive an underpowered car up a steep mountain road. As depicted in Fig. \ref{fig:DelayedNetworkedControlled}(b), in our experiment, the car is controlled by a remote controller, where the control action is applied to the car after a delay $D$ and the states of the car are fed back to the controller with a delay $D$. This model embodies many networked controlled systems \cite{yang2006networked, gupta2009networked, peng2020switching}. We predict states of the car $Y_t=s_t^{car}$ by observing features $X_{t-\delta}=\{s_{t-\delta}^{\text{cont}}, a_{t-\delta}^{\text{cont}}\}$, where $s_{t}^{\text{cont}}$ is the observed state at the controller side and $a_{t}^{\text{cont}}$ is the action taken by the controller at time $t$. Due to the communication delay $D>0$, the data sequence becomes non-Markov. However, when $D=0$, the data sequence becomes exactly Markov. As illustrated in Fig. \ref{fig:DelayedNetworkedControlled}(c)-(d), the training error and the inference error are non-monotonic functions of AoI when $D=5$ and $D=10$. When $D=0$, they become monotonic functions of AoI. This phenomenon validates our theoretical results in Theorem \ref{theorem1} and Theorem \ref{theorem2}.}} 

Besides the mechanical response delay and the communication delay, due to long-range dependence of the target and the feature, the data sequence becomes non-Markov \cite{guo2019credibility}, \cite[pp. 61-62]{sutton2018reinforcement}. Hence, the training error and the inference error can be non-monotonic functions of AoI. This phenomenon is observed in temperature prediction \cite{technical_report}, channel state information prediction \cite{technical_report}, and solar power prediction \cite{shisher2021age}.}

\ignore{In our experiments, the testing and training errors exhibit similar (non-)monotonic behaviors as the AoI grows. {\red Due to space limitations, the plots of the testing error are omitted.}} 

\ignore{\subsubsection{Inference Error under Random Inference AoI}
We generalize Theorem \ref{theorem3} to the case of random {\blue inference AoI $\Delta$} in the following theorem.
\begin{theorem}\label{theorem4}
If the conditions of Theorem \ref{theorem3} (a)-(b) hold and the inference AoIs in two experiments $1$ and $2$ satisfy 
\begin{align}\label{eq_stochastic_cond}
  \Delta_{1} \leq_{st} \Delta_{2},
\end{align}
then 
\begin{align}\label{RandomTesting}
    \!\!\!H_L(\tilde Y_{t}; Y_{t} | \tilde X_{t-\Delta_1}, \Delta_1) \!\! \leq \!H_L(\tilde Y_{t}; Y_{t} | \tilde X_{t-\Delta_2}, \Delta_2)+O\big(\max\{\epsilon, \beta\}\big). 
\end{align}
\ignore{\violet provided that the $L$-conditional cross entropies in \eqref{RandomTesting} are finite.} 
\end{theorem}
\ifreport
\begin{proof}
See Appendix \ref{Ptheorem4}.
\end{proof}
\else
\fi
If $H_L(Y_t)$ is twice differentiable in $P_{Y_t}$, then the approximation errors in \eqref{eq_theorem3} and \eqref{RandomTesting} become $O\big(\max\{\epsilon^2, \beta\}\big)$. In Theorem 3 of \cite{shisher2021age}, we analyzed how $H_L(\tilde Y_{t}; Y_{t} | \tilde X_{t-\Delta}, \Delta)$ varies with respect to the AoI,  under a condition $D_{\chi^2}(P_{\tilde Y_t, \tilde X_{t-\Delta}, \Delta}||P_{Y_t, X_{t-\Theta}, \Theta})\leq \beta^2$ that is stronger than \eqref{T3condition2} and \eqref{T3condition2_diff}. {\violet This condition requires the offline training AoI $\Theta$ and the online inference AoI $\Delta$ to have similar distributions, which is unnecessary and difficult to fulfill. Hence, the results} in Theorems \ref{theorem3} and \ref{theorem4} are more practical than that in Theorem 3 of \cite{shisher2021age}.}  

\ignore{\subsection{How to Choose the Training AoI $\Theta$} 
{\violet We now discuss how to choose the training AoI $\Theta$ for reducing the inference error in supervised learning based real-time forecasting. One can observe from \eqref{Decomposed_Cross_entropy} that $H_L(\tilde Y_t;Y_t; X_{t-\Delta}, \Delta)$ is determined by $P_\Delta$, $P_{\tilde Y_t|\tilde X_{t-\delta}}$, and $P_{ Y_t| X_{t-\delta}}$, but it is not affected by $P_\Theta$. Hence, the distribution $P_\Theta$ of the training AoI appears to be non-essential from an information-theoretic perspective.} 

{\violet In the practice of supervised learning based forecasting, we suggest preparing the training dataset with two considerations: (i) The set of training AoI should cover as many values in the set of inference AoI $\mathcal D$ as possible. If $P_{\Theta}(\delta)=0$, then no training data is available for the AoI value $\Delta=\delta$ and the corresponding inference error would be higher. In this case, the inference error depends on the generalization ability of the neural networks. (ii) Sufficient training data should be kept for each AoI value $\delta \in \mathcal D$ such that $P_{\tilde Y_t|\tilde X_{t-\delta}}$ and $P_{ Y_t| X_{t-\delta}}$ are close to each other, which is helpful to reduce the inference error.}}


\section{Single-source Scheduling for Inference Error Minimization}\label{Scheduling}
As shown in Section \ref{InformationAnalysis}, the inference error is a function of the AoI $\Delta(t)$, whereas the function is not necessarily monotonic. To reduce the inference error, we devise a new scheduling algorithm that can minimize general functions of the AoI, no matter whether the function is monotonic or not. 

\ifreport

\begin{figure}[t]
  \centering
\begin{subfigure}[b]{0.20\textwidth}
\includegraphics[width=1\linewidth]{./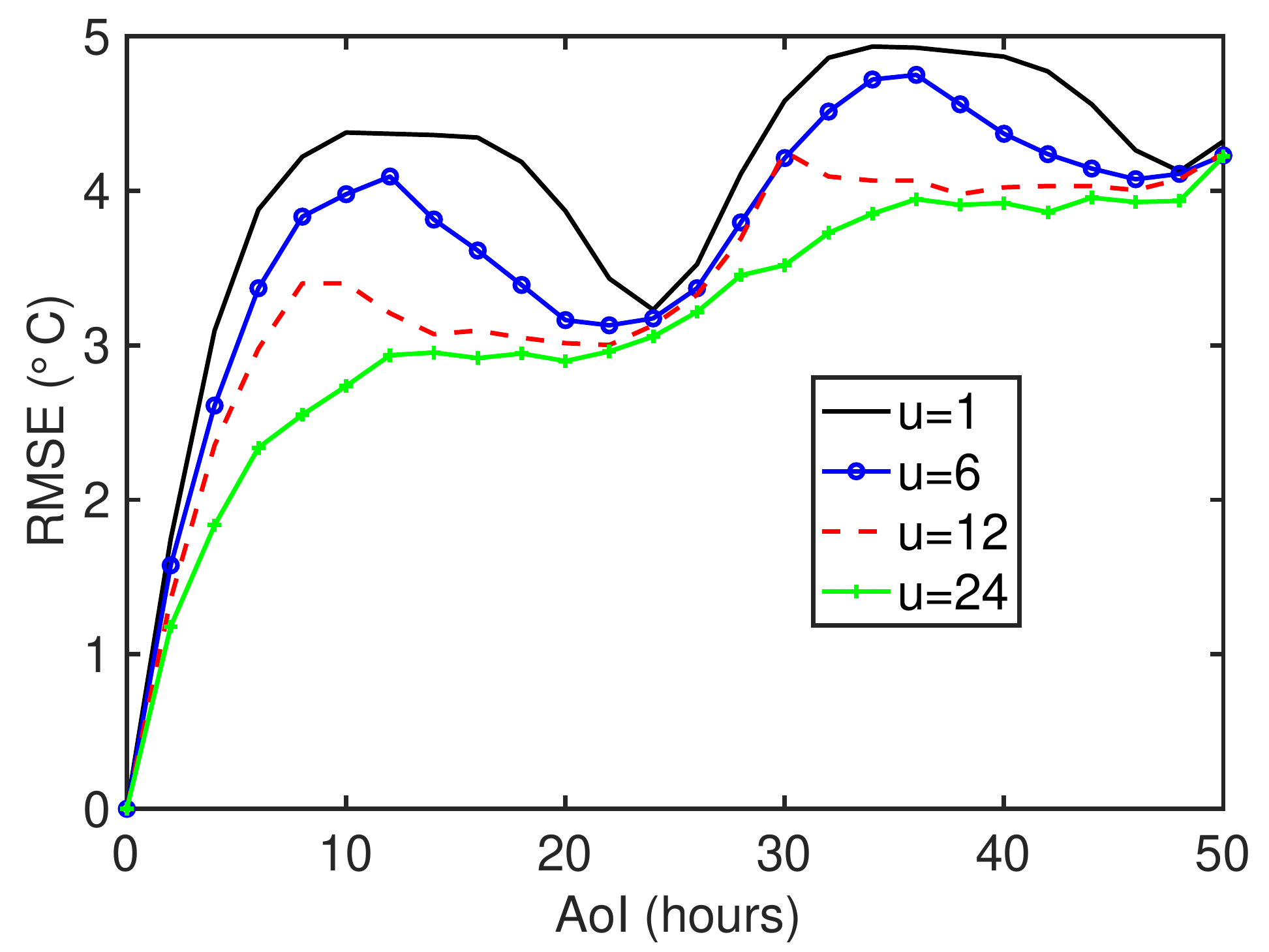}
  \subcaption{Training Error vs. AoI}
\end{subfigure}
\begin{subfigure}[b]{0.20\textwidth}
\includegraphics[width=1\linewidth]{./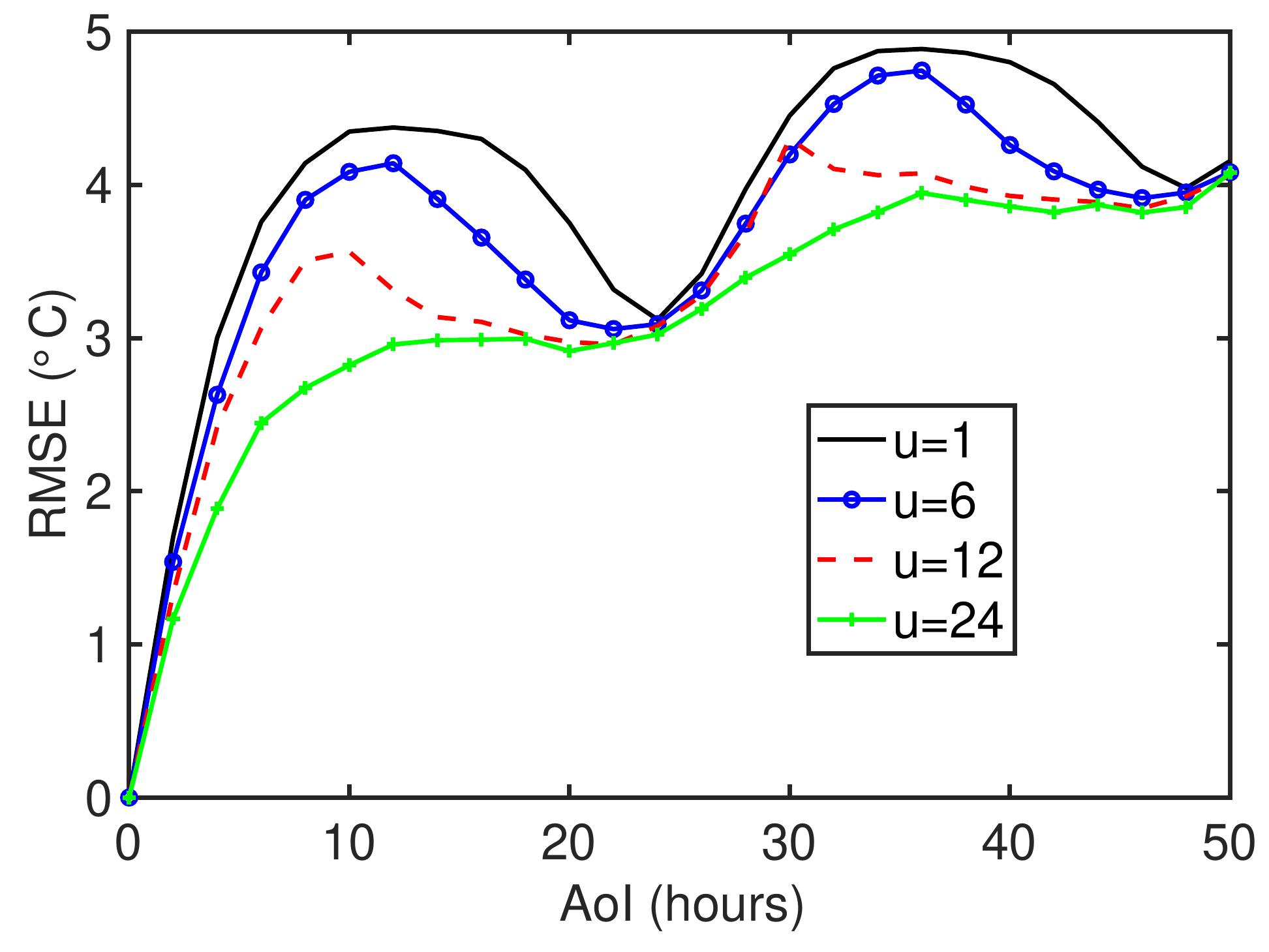}
  \subcaption{Inference Error vs. AoI}
\end{subfigure}
%
\caption{Performance of temperature Prediction. The training error and inference error are non-monotonic in AoI. As $u$ increases, the errors tend closer to non-decreasing functions of the AoI.}
\label{fig:Training}
\end{figure}

\begin{figure}[t]
  \centering
\begin{subfigure}[b]{0.20\textwidth}
\includegraphics[width=1\linewidth]{./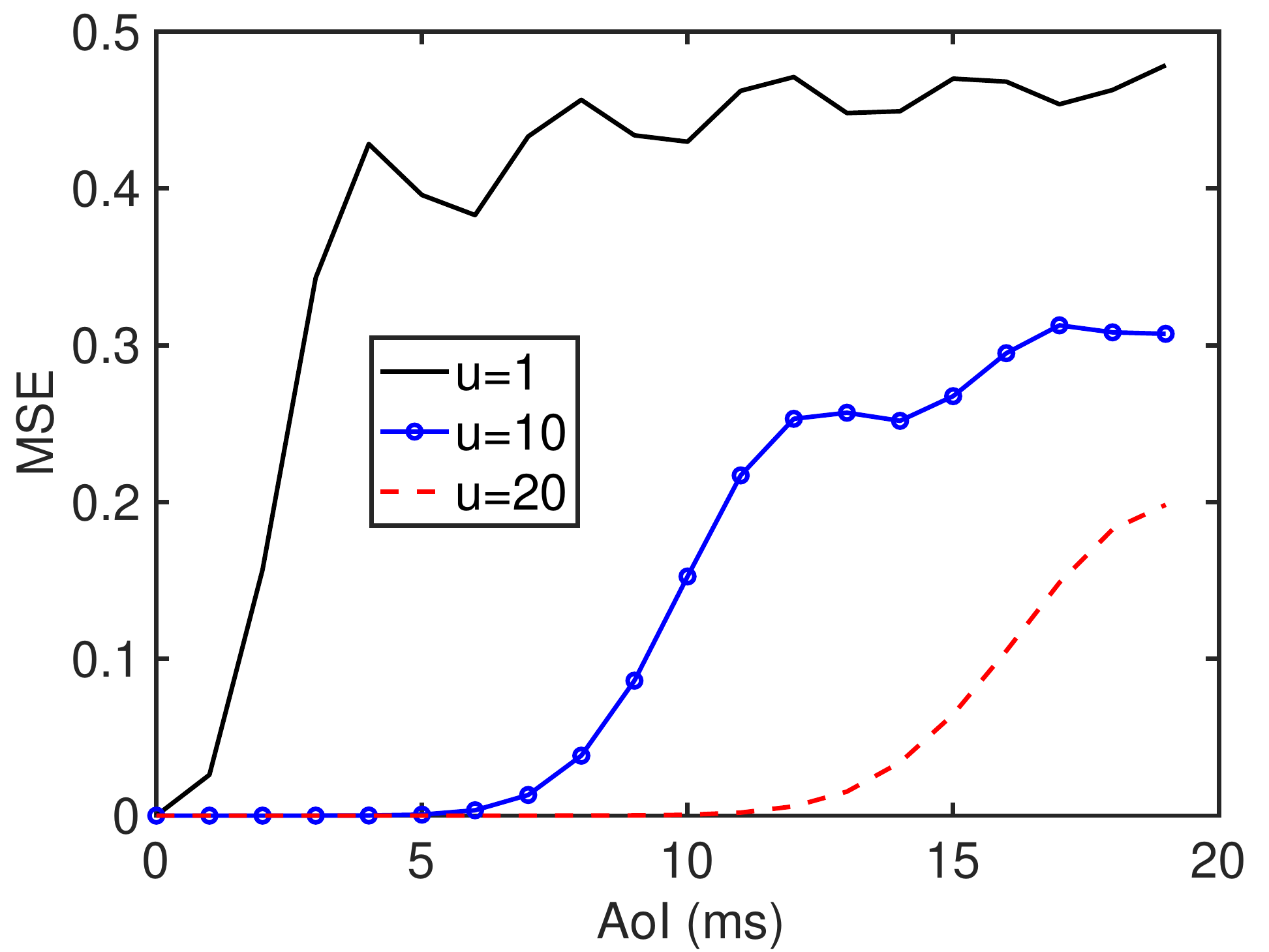}
  \subcaption{Training Error vs. AoI}
\end{subfigure}
\begin{subfigure}[b]{0.20\textwidth}
\includegraphics[width=1\linewidth]{./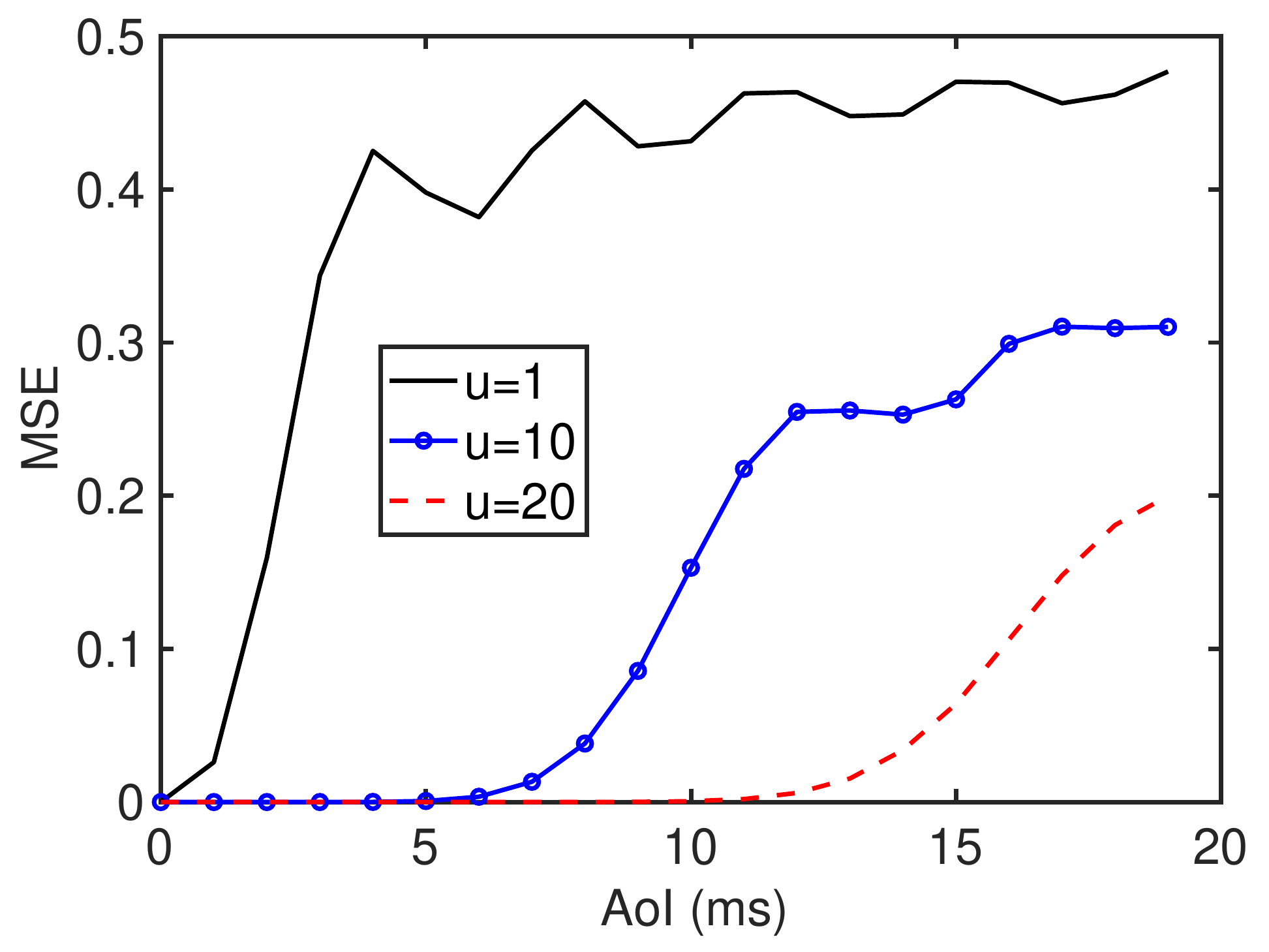}
  \subcaption{Inference Error vs. AoI}
\end{subfigure}
%
\caption{Performance of channel state information (CSI) prediction. The training error and inference error are non-monotonic in AoI. As $u$ increases, the errors tend closer to non-decreasing functions of the AoI.}
\label{fig:Trainingcsi}
\end{figure}

\else
\fi

\ignore{In this section, we will design a new scheduling algorithm for minimizing the long-term average inference error in supervised learning based real-time forecasting. Following the discussions in Section \ref{InformationAnalysis}, we can show that the inference error is a function of the inference AoI, which is not necessarily monotonic. Hence, existing scheduling algorithms for optimizing monotonic AoI metrics \cite{bedewy2020optimizing,SunTIT2020,bedewy2021optimal,SunNonlinear2019, OrneeTON2021,klugel2019aoi,  pan2020minimizing, jiang2018can, Maatouk2021, abd2020aoi,sun2017update, yates2021age, li2021age} may not be appropriate for inference error minimization. We will provide a semi-analytical optimal solution for minimizing the time-average of non-monotonic AoI functions, where the solution is expressed by using the Gittins index of an AoI bandit process. To the extent of our knowledge, the connection between AoI-based scheduling and Gittins index that we discover has not been reported before.}

\subsection{System Model}
We consider the networked supervised learning system in Fig. \ref{fig:scheduling}, where a source progressively sends features through a channel to a receiver. The channel is modeled as a non-preemptive server with i.i.d. service times. At any time $t$, the receiver uses the latest received feature to predict the current label $Y_t$. To minimize the inference error, we propose a new ``selection-from-buffer'' model for feature transmissions, which is more general than the ``generate-at-will'' model \cite{yates2015lazy}. Specifically, at the beginning of time slot $t$, the source generates a fresh feature $X_t$ and appends it to a buffer that stores the $B$ most recent features $(X_t, X_{t-1}, \ldots, X_{t-B+1})$; meanwhile, the oldest feature $X_{t-B}$ is removed from the buffer. The transmitter can pick any feature from the buffer and submit it to the channel when the channel is idle. A transmission scheduler determines (i) when to submit features to the channel and (ii) which feature in the buffer to submit. When $B=1$, the ``selection-from-buffer'' model reduces to the ``generate-at-will'' model. 
\ignore{We consider the networked real-time forecasting system illustrated in Fig. \ref{fig:scheduling}. At the beginning of each time slot $t$, the transmitter generates a new feature $X_t$ and adds it into a buffer that stores $B$ most recent features $(X_t, X_{t-1}, \ldots, X_{t-B+1})$. Meanwhile, the oldest feature $X_{t-B}$ is moved out of the buffer. A transmission scheduler determines when to send the next feature and which feature in the buffer should be chosen {\violet as the next feature} sent over the channel to the receiver. The channel is modeled as a non-preemptive server with i.i.d. service times. Hence, once the channel starts to send a feature, it must complete the on-going feature transmission before switching to send another feature. At the beginning of each time slot $t$, the receiver feeds the freshest delivered feature to the neural network for predicting the current label $Y_t$.} 

We assume that the system starts to operate in time slot $t = 0$ with $B$ features $(X_0, X_{-1}, \ldots, X_{-B+1})$ in the buffer. Hence, the feature buffer is full at all time $t\geq 0$.  The $i$-th feature sent over the channel is generated in time slot $G_i$, is submitted to the channel in time slot $S_i$, is delivered and available for inference in time slot $D_i = S_i +T_i$, where $T_i\geq 1$ is the feature transmission time, $G_i \leq S_i<D_i$, and $D_i \leq S_{i+1}< D_{i+1}$. The feature transmission times $T_i$ could be random due to time-varying channel conditions, congestion,  random packet sizes, etc. We assume that the $T_i$'s are i.i.d. with a finite mean $1 \leq \mathbb E[T_i ] < \infty$. In time slot $t=S_i$, the $(b_i + 1)$-th freshest feature in the buffer is submitted to the channel, where $b_i \in \{0, 1, \ldots, B-1\}$. Hence, the submitted feature is $X_{S_i-b_i}$ that was generated at time $G_i=S_i-b_i$. Once a feature is delivered, an acknowledgment (ACK) is fed back to the transmitter in the same time slot. Thus, the idle/busy state of the channel is known at the transmitter.
\ignore{The system starts to operate at time slot $t=0$ and the feature buffer is kept full at all time. The $i$-th feature sent over the channel is generated at the beginning of time slot $G_i$ and its transmission starts at the beginning of time slot $S_i$. The channel server takes $T_i$ time slots to complete the transmission. The $i$-th feature is delivered and is available for usage at the beginning of time slot $D_i=S_i+T_i$. The system satisfy $G_i\leq S_i< D_i$, $S_i < S_{i+1}$, and $D_i< D_{i+1}$. The feature transmission times $T_i = D_i- S_i > 0$ are i.i.d. with a finite mean $0<E[T_i]<\infty$. If a feature is delivered at time slot $t$, the transmitter is fed back an acknowledgment (ACK) in the same time slot $t$. Thus, the idle/busy state of the channel is known at the transmitter at every time slot $t$.}

\ignore{The $i$-th feature sent over the channel is generated at time slot $G_i$, its transmission starts at time slot $S_i$ and completes at time slot $D_i$ such that $G_i\leq S_i< D_i$, $S_i < S_{i+1}$, and $D_i< D_{i+1}$. The feature transmission times $T_i = D_i- S_i > 0$ are i.i.d. with a finite mean $0<E[T_i]<\infty$. Once a feature is delivered, the receiver sends an acknowledgment (ACK) to the transmitter. The transmitter receives the feedback at the same time slot the receiver sends it. Thus, the idle/busy state of the channel is known at the transmitter at every time slot $t$.} 

\ignore{At time slot $S_i$, suppose that the scheduler selects to send the $(b_{i}+1)$-th freshest feature among the $B$ features $(X_{S_i}, X_{S_i-1}, \ldots,$ $X_{S_i-B+1})$ stored in the buffer, where $b_i \in\{0,1,\ldots, B-1\}$. Hence, the selected feature is $X_{G_i} = X_{S_i-b_i}$ that was generated at time slot $G_i = S_i- b_i$. A scheduling policy is denoted by a 2-tuple $(g,f)$, where $g=(S_1, S_2,\ldots)$ are the transmission starting times of the features and {\violet $f=(b_1,b_2,\ldots)$ determines which feature from the buffer to send out.} We consider a class of causal scheduling policies in which each decision is made by using the current and history information available at the transmitter. Let $\mathcal F \times \mathcal G$ be the set of all the causal scheduling policies, where $f \in \mathcal F$ and $g \in \mathcal G$. {\violet We assume that the scheduler has access to the distribution of the feature process $\{(\tilde Y_t , \tilde X_t), t\in \mathbb Z \}$ but not its realization, and the feature transmission times $T_i$'s are not affected by the adopted scheduling policy.}}

\subsection{Scheduling Problem}
Let $U(t)= \max_i\{G_i : D_i \leq t \}$ be the generation time of the latest received feature in time slot $t$. The age of information (AoI) at time $t$ is given by \cite{kaul2012real}
\begin{align}\label{age}
\Delta(t) = t-U(t)=t-\max_i \{G_i: D_i \leq t\}.
\end{align}
Because $D_i < D_{i+1}$, $\Delta(t)$ can be also written as
\begin{align}
\Delta(t) = t - G_i=t-S_i+b_i,~~ \mathrm{if}~~D_i \leq t < D_{i+1}.
\end{align}
The initial state of the system is assumed to be $S_0 = 0, D_0 = T_0$, and $\Delta(0)$ is a finite constant.

A scheduling policy is denoted by a 2-tuple $(f, g)$, where $g = (S_1, S_2, \ldots)$ determines when to submit the features and \ $f= (b_1, b_2,$ $ \ldots)$ specifies which feature in the buffer to submit. We consider the class of \emph{causal scheduling policies} in which each decision is made by using the current and historical information available at the transmitter. Let $\Pi$ denote the set of all causal scheduling policies. We assume that the scheduler has access to the distribution of  $\{(Y_t , X_t), t \in \mathbb Z\}$ but not its realization, and the $T_i$’s are not affected by the adopted scheduling policy.

Our goal is to find an optimal scheduling policy that minimizes the time-average expected inference error among all causal scheduling policies in $\Pi$:
\begin{align}\label{scheduling_problem}
\bar p_{opt}=\inf_{(f, g) \in \Pi}  \limsup_{T\rightarrow \infty}\frac{1}{T} \mathbb{E}_{(f, g)} \left[ \sum_{t=0}^{T-1} p(\Delta(t))\right].
\end{align}
where $p(\Delta(t))$ is the inference error at time slot $t$, defined in \eqref{instantaneous_err1}, and $\bar p_{opt}$ is the optimum value of \eqref{scheduling_problem}. Because $p(\cdot)$ is not necessarily a non-decreasing function, \eqref{scheduling_problem} is more challenging than the scheduling problems in \cite{SunNonlinear2019, sun2017update}. 

\ifreport

\ignore{{\blue In order to evaluate the average inference error during time slots $0, 1, \ldots, T-1,$ let the random variable $\Delta$ in \eqref{eq_inferenceerror} follow the empirical distribution of the inference AoI process $\{\Delta(t), t = 0,1,\ldots, T-1\}$ during the first $T$ time slots. Then, the cumulative distribution function of $\Delta$ is given by \cite[Example 2.4.8]{durrett2019probability} 
\begin{align}\label{distribution_function_AoI}
F_{\Delta}(x) =  \frac{1}{T} \sum_{t=0}^{T-1} \mathbf 1(\Delta(t) \leq x), 
\end{align}
where $\mathbf 1(\cdot)$ is the indicator function. Because $\{(\tilde Y_t , \tilde X_t), t\in \mathbb Z \}$ is a stationary process that is independent of $\Delta(t)$, by substituting \eqref{distribution_function_AoI} into \eqref{eq_inferenceerror}, the average inference error during the first $T$ time slots can be expressed as 
\begin{align}\label{eq_inferenceerror1}
\mathrm{err}_{\mathrm{inference}}(T) = \frac{1}{T} \mathbb E_{(f,g)} \left [ \sum_{t=0}^{T-1} p(\Delta(t))\right],
\end{align}
where $E_{(f,g)}[\cdot]$ represents a conditional expectation for given scheduling policy $(f,g)$, $p(\Delta(t))$ is the inference error in time slot $t$, with the AoI function $p(\cdot)$ defined by 
\begin{align}\label{instantaneous_err} 
p(\delta)=\mathbb E_{Y, X \sim P_{\tilde Y_t, \tilde X_{t-\delta}}}\left[L\left(Y,\phi^*_{P_{Y_t, X_{t-\Theta},\Theta}}(X,\delta)\right)\right].
\end{align}
Let $T\rightarrow \infty$, we obtain the long-term average inference error over an infinite time horizon, which is determined by 
\begin{align}\label{eq_inferenceerror2}
\mathrm{err}_{\mathrm{inference}} =  \limsup_{T\rightarrow \infty}\frac{1}{T} \mathbb{E}_{(f, g)} \left[ \sum_{t=0}^{T-1} p(\Delta(t))\right].
\end{align}
}}

\ignore{The instantaneous error $p(\delta)$ is not completely known beforehand because the distribution $P_{\tilde Y_t, \tilde X_{t-\delta}}$ of inference data is unknown. However, we can estimate instantaneous inference error by evaluating the trained predictor $\phi^*_{P_{Y_t, X_{t-\Theta},\Theta}}$ on the testing data $(\hat Y_t, \hat X_t)$.}

\ignore{Our goal is to find an optimal scheduling policy that minimizes the long-term average inference error among all causal scheduling policies in $\mathcal F \times \mathcal G$:
\begin{align}\label{scheduling_problem}
\bar p_{opt}=\inf_{f \in \mathcal F, g \in \mathcal G}  \limsup_{T\rightarrow \infty}\frac{1}{T} \mathbb{E}_{(f, g)} \left[ \sum_{t=0}^{T-1} p(\Delta(t))\right].
\end{align}
where $p(\Delta(t))$ is the inference error at time slot $t$ and $\bar p_{opt}$ is the optimum value of \eqref{scheduling_problem}, i.e., the minimum time-average expected inference error. 

{\violet The new transmission model in Fig. \ref{fig:scheduling} is different from the widely used ``generate-at-will'' model \cite{yates2015lazy}, due to the additional feature buffer. As we will see in the next subsection, if $p(\cdot)$ is a non-decreasing function, then it is optimal to choose $b_i=0$ for all $i$ and the feature buffer is not needed. However, if $p(\cdot)$ is highly non-monotonic or even has a periodic pattern, it is better to send an old feature that was generated some time ago than sending a new feature that has just been created.}}

\begin{figure}[t]
\centering
\includegraphics[width=0.45\textwidth]{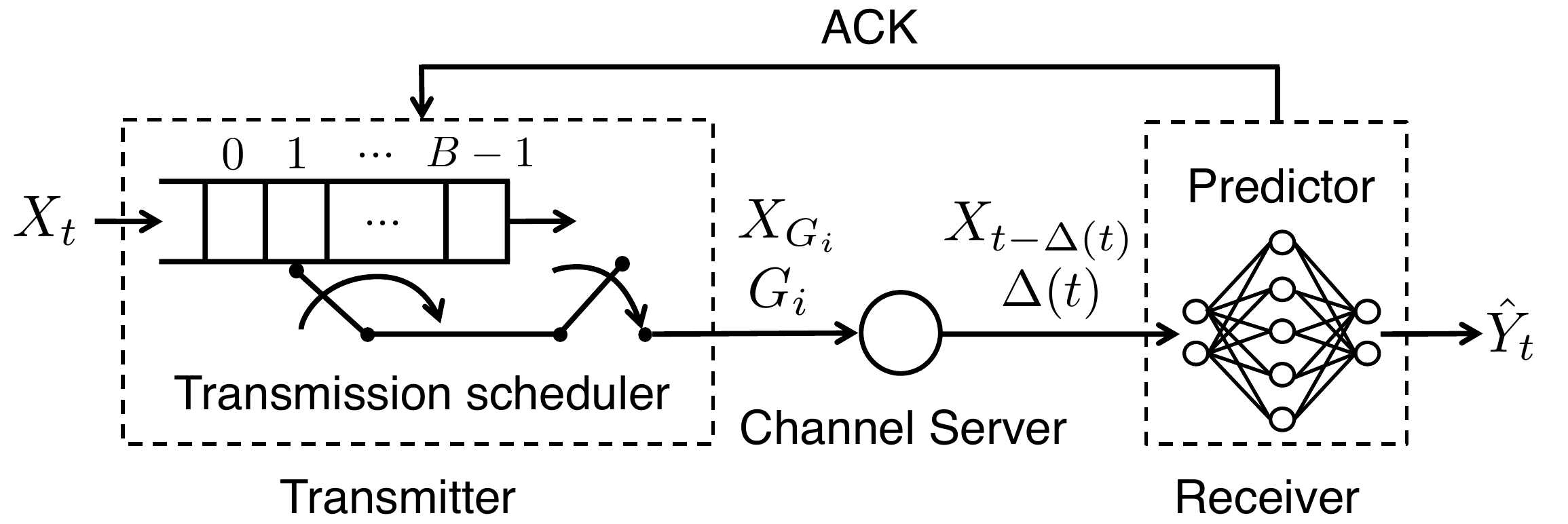}
\caption{\small  A networked real-time supervised learning system. At each time slot $t$, the transmitter generates a feature $X_t$ and keeps it in a buffer that stores $B$ most recent features $(X_t, X_{t-1}, \ldots, X_{t-B+1}$). The scheduler decides when to submit features to the channel and which feature in the buffer to submit. \label{fig:scheduling}
}
\vspace{-3mm}
\end{figure}
\subsection{Optimal Single-source Scheduling}
We solve \eqref{scheduling_problem} in two steps: (i) Given a fixed feature selection policy $f_b = (b, b, \ldots)$ with $b_i=b$ for all $i$, find the optimal feature submission times $g= (S_1, S_2, \ldots)$ that solves 
\begin{align}\label{sub_scheduling_problem}
\bar p_{b}=\inf_{(f_b, g)\in\Pi}  \limsup_{T\rightarrow \infty}\frac{1}{T} \mathbb{E}_{(f_b, g)}\left [ \sum_{t=0}^{T-1} p(\Delta(t))\right],
\end{align}
(ii) Use the solution to \eqref{sub_scheduling_problem} to describe an optimal solution to \eqref{scheduling_problem}.

It turns out that optimal solution to \eqref{sub_scheduling_problem} can be obtained by using the Gittins index of the following \emph{AoI bandit process with a random termination delay $T_1$}: 
A bandit process $\Delta(t)$ is controlled by a decision-maker that chooses between two actions \textsc{Continue} and \textsc{Stop} in each time slot. If the bandit process is not terminated in time slot $t$, its state evolves according to 
\begin{align}\label{Bandit}
\Delta(t)=\Delta(t-1)+1,
\end{align}
and a reward $[r - p(\Delta(t))]$ is collected, where $p(\cdot)$ is defined in \eqref{instantaneous_err1} and $r$ is a constant reward. If the \textsc{Continue} action is selected, the bandit process continues to evolve. If the \textsc{Stop} action is selected, the bandit process will terminate after a random delay $T_1$ and no more action is taken. Once the bandit process terminates, its state and reward remain zero. The total profit of the bandit process starting from time $t$ is maximized by solving the following optimal stopping problem:
\begin{align}\sup_{\nu \in \mathfrak M} \mathbb E\left[ \sum_{k=0}^{\nu+T_1-1} [r-p(\Delta(t+k))]\bigg| \Delta(t)=\delta \right],\end{align} 
where $\nu \geq 0$ is a history-dependent stopping time and $\mathfrak M$ is the set of all stopping times of the bandit process $\{\Delta(t+k), k=0, 1, \ldots\}$. Following the derivation of the Gittins index in \cite[Chapter 2.5] {gittins2011multi}, the decision-maker should choose the \textsc{Stop} action at time 
\begin{align}
\min_{t \in \mathbb Z}\{t \geq 0  : \gamma(\Delta(t)) \geq r\}, 
\end{align} 
where
\begin{align}\label{gittins-expand}
\gamma(\delta)\!=\inf_{\nu \in \mathfrak M, \nu \neq 0} \frac{ \mathbb E \left [\sum_{k=0}^{\nu-1}p(\Delta(t+k+T_{1}))~\bigg |~ \Delta(t)=\delta \right]}{\mathbb E[\nu~ |~ \Delta(t)=\delta]}
\end{align} 
is the Gittins index, i.e., the value of reward $r$ for which the \textsc{Continue} and \textsc{Stop} actions are equally profitable at state $\Delta(t)=\delta$. 
\ifreport
As shown in Appendix \ref{GittinsDerivation}, \eqref{gittins-expand} can be simplified as 
\begin{align}\label{gittins}
\gamma(\delta)=\inf_{\tau \in \{1, 2, \ldots\}} \frac{1}{\tau} \sum_{k=0}^{\tau-1} \mathbb E \left [p(\delta+k+T_{1}) \right],
\end{align}
where $\tau$ is a positive integer.
\else
As shown in \cite{technical_report}, \eqref{gittins-expand} can be simplified as 
\begin{align}\label{gittins}
\gamma(\delta)=\inf_{\tau \in \{1, 2, \ldots\}} \frac{1}{\tau} \sum_{k=0}^{\tau-1} \mathbb E \left [p(\delta+k+T_{1}) \right],
\end{align}
where $\tau$ is a positive integer.
\fi
 
\ignore{{\violet An optimal solution to \eqref{scheduling_problem} is presented below in two steps:  We first fix a feature selection policy $f_b=(b, b, \ldots)$, i.e., $b_i=b$ for all $i$, and find an optimal sequence of transmission starting times $g=(S_1, S_2, \ldots)$ that solves for the following scheduling problem: 
\begin{align}\label{sub_scheduling_problem}
\bar p_{b}=\inf_{ g \in \mathcal G}  \limsup_{T\rightarrow \infty}\frac{1}{T} \mathbb{E}_{(f_b, g)}\left [ \sum_{t=0}^{T-1} p(\Delta(t))\right].
\end{align}
In the second step, the solution to \eqref{sub_scheduling_problem} is used to describe an optimal solution to \eqref{scheduling_problem}.}

{\violet We find that one solution to \eqref{sub_scheduling_problem} can be expressed by using the Gittins index \cite{gittins2011multi, sonin2008generalized,weber1992gittins,frostig1999four} of a bandit process $\Delta(t)$. At each time slot $t$, if the bandit process $\Delta(t)$ is running, it evolves as
\begin{align}\label{Bandit}
\Delta(t)=\Delta(t-1)+1,
\end{align}
and incurs a cost $p(\Delta(t))$. Meanwhile, a constant reward $\gamma$ is received in each time slot when the bandit process is running. To maximize the total profit, a controller determines when to stop the bandit process. Once a ``stop" decision is made, the process would take a random time duration $T_1$ to stop. After it is stopped, the cost and reward both become zero and never change again. Given $\Delta(t)=\delta$, if a ``stop" action is chosen at the beginning of time slot $t$, then the total profit received since time slot $t$ is 
\begin{align}\label{Stopped}
\mathbb E\left[ \sum_{k=0}^{T_1-1} [\gamma-p(\Delta(t+k))]\bigg| \Delta(t)=\delta \right].
\end{align}
If a ``continue" action is taken at the beginning of time slot $t$, then the highest possible total profit received since time slot $t$ is
\begin{align}\label{continue}
\sup_{\nu \in \mathfrak M} \mathbb E\left[ \sum_{k=0}^{\nu+T_1-1} [\gamma-p(\Delta(t+k))]\bigg| \Delta(t)=\delta \right],
\end{align}
where $\nu \geq 1$ is a history-dependent stopping time and $\mathfrak M$ is the set of all positive stopping times of the bandit process $\{\Delta(t+k), k=0, 1, \ldots\}$. Gittins index $\gamma(\delta)$ \cite{gittins2011multi} is the value of reward $\gamma$ for which the ``stop" and ``continue" actions are equally profitable at state $\Delta(t)=\delta$, i.e., \eqref{Stopped} is equal to \eqref{continue}.} 
\ifreport
As shown in Appendix \ref{GittinsDerivation}, $\gamma(\delta)$ can be simplified as  
\begin{align}\label{gittins}
\gamma(\delta)=\inf_{\tau \in \{1, 2, \ldots\}} \frac{1}{\tau} \sum_{k=0}^{\tau-1} \mathbb E \left [p(\delta+k+T_{1}) \right],
\end{align}
where $\tau$ is a positive integer.
\else
As shown in our technical report \cite{technical_report}, $\gamma(\delta)$ can be simplified as  
\begin{align}\label{gittins}
\gamma(\delta)=\inf_{\tau \in \{1, 2, \ldots\}} \frac{1}{\tau} \sum_{k=0}^{\tau-1} \mathbb E \left [p(\delta+k+T_{1}) \right],
\end{align}
where $\tau$ is a positive integer.
\fi}
\begin{theorem}\label{theorem5}
If $|p(\delta)| \leq M$ for all $\delta$ and the $T_i$'s are i.i.d. with a finite mean $\mathbb E[T_i]$, then $g=(S_1(\beta_b), S_2(\beta_b), \ldots)$ is an optimal solution to \eqref{sub_scheduling_problem}, where 
\begin{align}\label{OptimalPolicy_Sub}
S_{i+1}(\beta_b) = \min_{t \in \mathbb Z}\big\{ t \geq D_i(\beta_b): \gamma(\Delta(t)) \geq \beta_b\big\},
\end{align}
$D_i(\beta_b)=S_{i}(\beta_b)+T_i$ is the delivery time of the $i$-th feature submitted to the channel, $\Delta(t)=t-S_i(\beta_b)+b$ is the AoI at time $t$, $\gamma(\delta)$ is the Gittins index in \eqref{gittins}, and
$\beta_b$ is the unique root of 
\begin{align}\label{bisection}
\mathbb{E}\left[\sum_{t=D_i(\beta_b)}^{D_{i+1}(\beta_b)-1}  p\big(\Delta(t)\big)\right] - \beta_b~ \mathbb{E}\left[D_{i+1}(\beta_b)-D_i(\beta_b)\right]=0.
\end{align}
The optimal objective value to \eqref{sub_scheduling_problem} is given by 
\begin{align}\label{optimum}
\bar p_b=\frac{\mathbb{E}\left[\sum_{t=D_i(\beta_b)}^{D_{i+1}(\beta_b)-1}  p\big(\Delta(t)\big)\right]}{\mathbb{E}\left[D_{i+1}(\beta_b)-D_i(\beta_b)\right]}.
\end{align}
Furthermore, $\beta_b$ is exactly the optimal value to \eqref{sub_scheduling_problem}, i.e., $\beta_b= \bar p_b$.
\end{theorem}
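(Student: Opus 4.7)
The plan is to combine renewal-reward theory, a Dinkelbach-style fractional-programming reduction, and Gittins-index optimal stopping, extending the strategy of Sun \emph{et al.}~\cite{sun2017update,SunNonlinear2019} to accommodate the (possibly non-monotonic) cost $p(\cdot)$ and the buffer offset $b$.

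First I would reformulate the time-average cost. Since $b_i\equiv b$ and the $T_i$'s are i.i.d., the delivery epochs $\{D_i\}$ form a regenerative sequence under any stationary feasible policy. A standard renewal-reward argument (with $|p|\leq M$ and $\mathbb E[T_i]<\infty$ ensuring the time-average limit exists) shows that
\[
\bar p_b=\inf_{(f_b,g)\in\Pi}\frac{\mathbb E\!\left[\sum_{t=D_i}^{D_{i+1}-1} p(\Delta(t))\right]}{\mathbb E[D_{i+1}-D_i]}.
\]
Next, define
\[
h(\beta)=\inf_{(f_b,g)\in\Pi}\mathbb E\!\left[\sum_{t=D_i}^{D_{i+1}-1}\bigl(p(\Delta(t))-\beta\bigr)\right].
\]
Standard fractional-programming arguments show that $h$ is continuous and strictly decreasing, hence has a unique root $\beta_b$; moreover this root equals the optimal ratio $\bar p_b$, and any minimizer at $\beta=\beta_b$ achieves the ratio. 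This gives the identity $\beta_b=\bar p_b$ in \eqref{optimum} together with the fixed-point equation \eqref{bisection}.

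Then I would solve the $\beta$-parametrized per-frame problem as an optimal stopping problem. Between $D_i$ and $D_{i+1}$ the AoI evolves as $\Delta(t)=\Delta(t-1)+1$ with $\Delta(D_i)=T_i+b$, and the only decision is the waiting time $S_{i+1}-D_i\geq 0$; after a random delay $T_{i+1}$ the cycle renews. Each continuation slot accrues cost $p(\Delta(t))-\beta$, which aligns precisely with the AoI bandit process with random termination delay $T_1$ defined just before the theorem. Since the Gittins index $\gamma(\delta)$ in \eqref{gittins-expand}--\eqref{gittins} is the infimum average future-cost rate attainable by continuing from state $\delta$, it is optimal to Continue while $\gamma(\Delta(t))<\beta$ and to Stop as soon as $\gamma(\Delta(t))\geq\beta$. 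This yields the threshold form \eqref{OptimalPolicy_Sub}.

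The main obstacle is this optimal-stopping step, because our problem differs from a textbook single-arm bandit in two ways: (i) a random termination delay $T_{i+1}$ is incurred \emph{after} Stop, and (ii) $p(\cdot)$ need not be monotone. I would handle (i) by absorbing the delay cost into a state-dependent stopping penalty and verifying that the resulting Bellman equation is solved by the index-threshold rule, following the verification argument of Gittins~\cite[Ch.~2]{gittins2011multi}. Difficulty (ii) is circumvented by the reduction of \eqref{gittins-expand} to the simple form \eqref{gittins}, which replaces arbitrary stopping times by deterministic horizons $\tau\in\{1,2,\ldots\}$ and therefore does not require monotonicity of $p$. Uniqueness of $\beta_b$ and the convergence of a bisection search for it then follow immediately from the monotonicity of $h(\beta)$.
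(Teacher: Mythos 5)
Your overall architecture (parametrize the problem by a scalar $\beta$, locate the unique root of \eqref{bisection}, and transmit at the first time the Gittins index \eqref{gittins} exceeds it) reaches the right conclusions, but your opening step has a genuine gap. You take the infimum in \eqref{sub_scheduling_problem} over all causal policies $\Pi$, yet the renewal--reward identity you invoke, $\bar p_b=\inf \mathbb E[\sum_{t=D_i}^{D_{i+1}-1}p(\Delta(t))]/\mathbb E[D_{i+1}-D_i]$, is only justified for stationary (regenerative) policies --- you even state the regenerative structure ``under any stationary feasible policy'' and then silently extend the identity to all of $\Pi$. Showing that no history-dependent, non-stationary scheduler can outperform the best renewal-type policy is precisely the nontrivial part of the argument; the earlier works you pattern this on \cite{sun2017update,SunNonlinear2019} either restricted the policy class or assumed the inter-delivery times form a regenerative process. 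The paper's proof is deliberately structured to avoid this: it casts \eqref{sub_scheduling_problem} as an average-cost MDP with state $\Delta(t)$, uses $|p(\delta)|\le M$, the countable state space and the finite action set to write the Bellman optimality equation \cite[Section 5.6.3]{bertsekas2012dynamic}, obtains the threshold rule \eqref{OptimalPolicy_Sub} from the comparison $Q_b(\delta,0)-Q_b(\delta,1)\ge 0$ together with Lemma \ref{fraction_programming}, and only then uses the i.i.d.\ $T_i$'s to express the optimal value as the ratio \eqref{optimum}. Optimality over all of $\Pi$ then follows from the MDP verification theorem; if you patch your step by invoking average-cost MDP theory to argue that stationary policies suffice, you are effectively reconstructing the paper's Bellman route, so as written the renewal--reward opening is the missing idea rather than a shortcut.

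The remaining steps are essentially sound and close to the paper's: your Dinkelbach reduction is the content of Lemma \ref{fraction_programming} together with the continuity/strict monotonicity argument that makes the root of \eqref{bisection} unique (as in \cite{orneeTON2021}), and your per-cycle stopping analysis matches the paper's reduction of history-dependent stopping times to deterministic horizons, which is what makes \eqref{gittins} valid without any monotonicity of $p(\cdot)$. One caution there: ``stop as soon as $\gamma(\Delta(t))\ge\beta$'' does not follow merely from $\gamma(\delta)$ being an infimum of average future costs; because $p(\cdot)$ may be non-monotonic you must verify that the first-crossing time minimizes all partial sums of the marginal waiting costs $\mathbb E[p(\delta+k+T_{i+1})]-\beta$ (equivalently, solve the per-state Bellman comparison as the paper does). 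You acknowledge this via Gittins' verification argument, and that detail is fillable --- unlike the reduction of the policy space, which is the substantive gap.
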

\ifreport
\begin{proof}
See Appendix \ref{Ptheorem5}.
\end{proof}
\else
\fi

The optimal scheduling policy in Theorem \ref{theorem5} has an intuitive structure. Specifically, a feature is transmitted in time-slot $t$ if two conditions are satisfied: (i) The channel is idle in time-slot $t$, (ii) the Gittins index $\gamma(\Delta(t))$ exceeds a threshold $\beta_b$ (i.e., $\gamma(\Delta(t)) \geq \beta_b$), where the threshold $\beta_b$ is exactly equal to the minimum time-averaged inference error $\bar p_b$. The optimal objective value $\bar p_{b}$ is computed by solving \eqref{bisection}. Three low-complexity algorithms for solving \eqref{bisection} were provided in \cite[Algorithms 1-3]{orneeTON2021}. In practical supervised learning algorithms, the features are shifted, rescaled, and clipped during the data preprocessing step, which can improve the convergence speed. Because of these operations, the inference error is finite in practice (See Figures \ifreport \ref{fig:learning}-\ref{fig:Trainingcsi} \else \ref{fig:learning}-\ref{fig:DelayedNetworkedControlled} \fi for a few example), and the condition $|p(\delta)| \leq M$ for all $\delta$ in Theorem \ref{theorem5} is not restrictive in practice. 

Theorem \ref{theorem5} is proven by directly solving the Bellman optimality equation of the Markov decision process \eqref{sub_scheduling_problem}, whereas the techniques for minimizing non-decreasing AoI functions in, e.g., \cite{SunNonlinear2019, sun2017update}, could not solve \eqref{sub_scheduling_problem}. We remark that if $p(\delta)$ is non-monotonic, then $\gamma(\delta)$ is not necessarily monotonic. Hence, \eqref{OptimalPolicy_Sub} in general could not be rewritten as a threshold policy of the AoI $\Delta(t)$ in the form of $\Delta(t) \geq \beta$. This is a key difference from the minimization of non-decreasing AoI functions, e.g., \cite[Eq. (48)]{SunNonlinear2019}. The adoption of the Gittins index $\gamma(\delta)$ as a tool for solving \eqref{sub_scheduling_problem} is motivated by a similarity between \eqref{sub_scheduling_problem} and the restart-in-state formulation of the Gittins index \cite[Chapter 2.6.4] {gittins2011multi}. This connection between the Gittins index theory and AoI minimization was unknown before.

Next, we present an optimal solution to \eqref{scheduling_problem}.

\begin{theorem}\label{theorem6}
If the conditions of Theorem \ref{theorem5} hold, then there exists an optimal solution $(f^*, g^*)$ to \eqref{scheduling_problem} that satisfies:
\begin{itemize}
\item[(a)] $f^*=(b^*, b^*, \ldots)$, where $b^*$ is obtained by solving 
\begin{align}\label{feature_Selects}
b^*= \arg \min_{b \in \{0, 1, \ldots, B-1\}} \beta_{b},
\end{align}
and $\beta_b$ is the unique root to \eqref{bisection}. 
\item[(b)] $g^* = (S_1^*,S_2^*,\ldots)$ , where 
\begin{align}\label{Optimal_Scheduler}
S_{i+1}^*=\min_{t \in \mathbb Z} \big\{ t \geq S_i^* + T_i: \gamma(\Delta(t)) \geq \bar p_{opt}\big\},
\end{align}
$S_i^*+T_i$ is the delivery time of the $i$-th feature, $\gamma(\delta)$ is the Gittins index in \eqref{gittins}, and $\bar p_{opt}$ is the optimal objective value of \eqref{scheduling_problem}, determined by 
\begin{align}\label{optimammain}
\bar p_{opt}= \min_{b \in \{0, 1, \ldots, B-1\}} \beta_b.
\end{align}
\end{itemize}
\end{theorem}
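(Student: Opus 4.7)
The plan is to combine Theorem~\ref{theorem5} with a semi-Markov decision process (SMDP) decoupling argument. The upper bound $\bar p_{opt} \leq \min_b \beta_b$ is immediate: for each $b \in \{0,\ldots, B-1\}$, the policy $(f_b, g)$ from Theorem~\ref{theorem5} is feasible in \eqref{scheduling_problem} and achieves average cost $\beta_b$. The work is to match this with a lower bound, and to show that the optimal buffer-position choice is stationary.

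To that end, I would reformulate \eqref{scheduling_problem} as an average-cost SMDP whose decision epochs are the delivery times $D_i$, whose state is $\Delta(D_i) = T_i + b_i$, and whose action is the pair $(W_i, b_{i+1})$, with $W_i = S_{i+1} - D_i \geq 0$ the wait time before the next submission. Since the AoI evolves deterministically during $[D_i, S_{i+1})$ and $T_{i+1}$ is i.i.d.\ and independent of the past, the defer-to-$D_i$ equivalence shows that any causal policy admits such a representation. In this SMDP the per-epoch cost $\mathbb E\!\left[\sum_{t=0}^{W_i+T_{i+1}-1} p(\Delta(D_i)+t)\right]$ depends only on $(\Delta(D_i), W_i)$, the epoch length $\mathbb E[W_i+T_{i+1}]$ depends only on $W_i$, and the next state $T_{i+1}+b_{i+1}$ depends only on $b_{i+1}$. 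The average-cost Bellman equation therefore takes the form
\begin{equation*}
h(\delta) = \min_{W \geq 0}\!\left\{\mathbb E\!\left[\sum_{t=0}^{W+T-1} p(\delta+t)\right] - \bar p_{opt}\,\mathbb E[W+T]\right\} + \min_{b \in \{0,\ldots,B-1\}}\mathbb E[h(T+b)],
\end{equation*}
and the key observation is that the two minimizations decouple: the inner optimization over $W$ depends only on $\delta$, while $\min_b \mathbb E[h(T+b)]$ is a state-independent constant. Consequently, an optimal policy uses a \emph{stationary} choice $b^* \in \arg\min_b \mathbb E[h(T+b)]$, establishing part~(a).

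Once $b^*$ is fixed, the residual optimization over submission times coincides with problem \eqref{sub_scheduling_problem}, whose optimum by Theorem~\ref{theorem5} equals $\beta_{b^*}$; thus $\bar p_{opt} = \beta_{b^*}$, and combined with the upper bound this forces $b^* \in \arg\min_b \beta_b$, yielding \eqref{feature_Selects} and \eqref{optimammain}. The Gittins-index threshold form \eqref{Optimal_Scheduler} with threshold $\bar p_{opt}$ is then inherited directly from Theorem~\ref{theorem5} applied at $b = b^*$. The main obstacle is the rigorous justification of the SMDP Bellman equation, specifically the existence of a bounded relative value function $h$ and the sufficiency of stationary deterministic Markov policies over the countably infinite AoI state space. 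The conditions $|p|\leq M$ and $\mathbb E[T]<\infty$, together with the fact that a single transition drives the state into the set $\{T+b : b \in \{0,\ldots,B-1\}\}$ whose distribution is determined entirely by the action $b$ (since $T$ is i.i.d.), provide the regenerative structure needed for the standard vanishing-discount approach to transfer the history-dependent infimum to the stationary Markov class.
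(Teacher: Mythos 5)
Your proposal is correct and follows essentially the same route as the paper's proof: the paper writes the average-cost Bellman optimality equation for the per-slot MDP with action set $\{\textsc{idle},0,1,\ldots,B-1\}$, observes that the buffer index enters only through the state-independent continuation term $\mathbb{E}[V(T_{i+1}+b)]$ --- exactly your decoupling --- so that $b^*$ is stationary, and then obtains the Gittins-index threshold at level $\bar p_{opt}$ and the identity $\bar p_{opt}=\min_b \beta_b$ by the same reduction to the fixed-$b$ problem solved in Theorem \ref{theorem5}. Your SMDP-at-delivery-epochs packaging and the explicit upper/lower-bound sandwich are cosmetic variations, and the technical obstacle you flag (justifying the average-cost optimality equation) is treated in the paper at the same level of rigor, via a citation to the bounded-cost, countable-state, finite-action MDP theory.
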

\ifreport
\begin{proof}
See Appendix \ref{Ptheorem6}.
\end{proof}
\else
\fi

Theorem \ref{theorem6} tells us that, to solve \eqref{scheduling_problem}, a feature is transmitted in time-slot $t$ if two conditions are satisfied: (i) The channel is idle in time-slot $t$, (ii) the Gittins index $\gamma(\Delta(t))$ exceeds a threshold $\bar p_{opt}$ (i.e., $\gamma(\Delta(t)) \geq \bar p_{opt}$), where the threshold $\bar p_{opt}$ is the optimal objective value of \eqref{scheduling_problem}. The optimal objective value $\bar p_{opt}$ is determined by \eqref{optimammain}.
\ignore{{\violet According to Theorem \ref{theorem6}, in an optimal solution to \eqref{scheduling_problem}, the $(i+1)$-th feature  should be sent in the earliest time slot $t$ that satisfies (i) the previously sent feature has already been delivered, and (ii) the Gittins index $\gamma(\Delta(t))$ is no less than the optimal objective value \emph{$\bar p_{opt}$}. Moreover, the scheduler should always send the $(b^*+1)$-th freshest feature in the buffer, where $b^*$ is the minimizer of \eqref{feature_Selects}.}}

In the special case of non-decreasing $p(\cdot)$ studied in \cite{SunNonlinear2019, sun2017update}, the Gittins index in \eqref{gittins} can be simplified as $\gamma(\delta) = \mathbb E[p(\delta+T_{1})]$ and the optimal solution to \eqref{feature_Selects} is $b^*=0$ such that it is optimal to always select the freshest feature from the buffer. Hence, Theorem 3 in \cite{SunNonlinear2019} is recovered from Theorem \ref{theorem6}, and the ``generate-at-will'' model can achieve the minimum inference error in this special case. 

If $p(\cdot)$ is non-monotonic, as in the cases of \ifreport Fig. \ref{fig:TrainingCartVelocity} and Fig. \ref{fig:DelayedNetworkedControlled} \else Fig. \ref{fig:DelayedNetworkedControlled}, \fi the ``selection-from-buffer'' model could achieve better performance than the ``generate-at-will'' model, and the optimal scheduler is provided by Theorem \ref{theorem6}.
\ignore{\subsubsection{Special Case: Non-decreasing $p(\cdot)$} 
{\violet In the special case of non-decreasing age function $p(\cdot)$, the Gittins index $\gamma(\delta)$ in \eqref{gittins} reduces to 
\begin{align}\label{gittins_monotonic}
\gamma(\delta) = \mathbb E[p(\Delta(t+T_{1}))].
\end{align}
In this case, the following result follows immediately from Theorem \ref{theorem6} and \eqref{gittins_monotonic}, and it coincides with some earlier results in \cite{SunNonlinear2019, SunTIT2020, OrneeTON2021}.}

\begin{corollary}\label{corollary1}
{\violet If $p(\cdot)$ is a non-decreasing function and the random transmission times $T_i$ is i.i.d. with a finite mean $\mathbb E[T_i]$, then there exists an optimal policy $\pi^*=(f^*, g^*)$ to \eqref{scheduling_problem} that satisfies
\begin{itemize}
\item[(a)] $f^*=(0, 0, \ldots)$, i.e., $b_i=0$ for all $i$.
\item[(b)]  $g^* = (S_1^*,S_2^*,\ldots)$ , where 
\begin{align}
S_{i+1}^*=\min_{t \in \mathbb Z} \big\{ t \geq S_i^* + T_i: \mathbb E[p(\Delta(t+T_{1}))] \geq \bar p_{opt}\big\},
\end{align}
$S_i^*+T_i$ is the delivery time of the $i$-th feature, and the optimal objective value $\bar p_{opt}$ is equal to $\beta_0$, i.e., the unique root to \eqref{bisection} with $b= 0$.   
\end{itemize}}
\end{corollary}
{\violet In \cite{SunNonlinear2019, SunTIT2020, OrneeTON2021}, the term $\mathbb E[p(\Delta(t+T_{1}))]$ occurred in the optimal scheduling policy, but it did not have a good interpretation. In this section, we have shown that $\mathbb E[p(\Delta(t+T_{1}))]$ is the Gittins index of an AoI bandit process $\Delta(t)$ with cost $p(\Delta(t))$. In addition, the results in \cite{SunNonlinear2019, SunTIT2020, OrneeTON2021} were obtained under an additional condition that inter-sending times $\{S_{i+1}-S_i, i=1,2, \ldots\}$ form a regenerative process. This condition is no longer required in our new proof techniques for Theorems \ref{theorem5} and \ref{theorem6}, which directly solve the Bellman optimality equations of \eqref{scheduling_problem} and \eqref{sub_scheduling_problem}. Finally, Corollary \ref{corollary1} implies that, if $p(\cdot)$ is non-decreasing, it is optimal to keep $b_i=0$ for all $i$ and hence no feature buffer is needed (as in the "generate-at-will" model). However, if $p(\cdot)$ is non-monotonic, it could be beneficial to add a feature buffer and send an old feature that has been stored in the buffer for some time, as indicated by Theorem \ref{theorem6}.}}

\section{Multiple-source Scheduling}\label{Multi-scheduling}

\subsection{System Model and Scheduling Problem}
Consider the networked intelligent system in Fig. \ref{fig:multi-scheduling}, where $m$ sources send features over a shared channel to the corresponding neural predictors at the receivers. At time slot $t$, each source $l$ maintains a buffer that stores the $B_l$ most recent features $(X_{l,t},\ldots,$ $X_{l,t-B_l+1})$. When the channel is free, at most one source can select a feature from its buffer and submit the selected feature to the channel.

A centralized scheduler makes two decisions in each time slot: (i) which source should submit a feature to the shared channel and (ii) which feature in the selected source's buffer to submit. A scheduling policy is denoted by $\pi=(\pi_{l, b_l})_{l=1,2,\ldots,m,b_l=0,1,\ldots, B_l-1}$, where $\pi_{l, b_l}=(d_{l, b_l}(0), d_{l, b_l}(1),\ldots)$ and $d_{l, b_l}(t)\in\{0,1\}$ represents the scheduling decision for the $(b_l+1)$-th freshest feature $X_{l,t-b_l}$ of source $l$ in time slot $t$. If source $l$ submits the feature $X_{l,t-b_l}$ in its buffer to the channel in time slot $t$, then $d_{l, b_l} (t)= 1$; otherwise, $d_{l, b_l}(t)=0$. Let $c_{l, b_l}(t)\in\{0,1\}$ denote the channel occupation status of the $(b_l+1)$-th freshest feature $X_{l,t-b_l}$ of source $l$ in time slot $t$. If source $l$ submits the feature $X_{l,t-b_l}$ in its buffer to the channel in time slot $t$, then the value of $c_{l, b_l}(t)$ becomes $1$ and remains $1$ until it is delivered; otherwise, $c_{l, b_l}(t)=0$. It is required that $\sum_{l=1}^m \sum_{b_l=0}^{B_l-1}c_{l, b_l}(t) \leq 1$ for all $t$. Let $\Pi$ denote the set of all causal scheduling policies.

\ignore{A centralized scheduler makes two decisions in each time slot: (i) which source should submit a feature to the shared channel and (ii) which feature in the selected source's buffer to submit. A scheduling policy is denoted by a 2-tuple $(h,\pi)$, where $h = (b_l(0), b_l(1),\ldots)_{l=1}^m$ describes the feature selection decision of the sources and $\pi = (d_l(0),$  $d_l(1),\ldots)_{l=1}^m$ represents the source scheduling decision in each time slot. If $d_l (t)= 1$, then source $l$ submits its $(b_l (t) + 1)$-th freshest feature to the channel in time slot $t$; otherwise $d_l(t)=0$ and source $l$ remains silence. {\blue Let $c_l(t)$ denote the channel occupation status of the source $l$ at time slot $t$. If $c_l(t)=1$, then the channel is occupied by the source $l$ at time slot $t$; otherwise $c_l(t)=0$ and the source $l$ is inactive.} It is required that $\sum_{l=1}^m c_l(t) \leq $1 for all $t$. Let $\mathcal H \times \Pi$ denote the set of all causal scheduling policies.}

\begin{figure}[t]
\centering
\includegraphics[width=0.45\textwidth]{./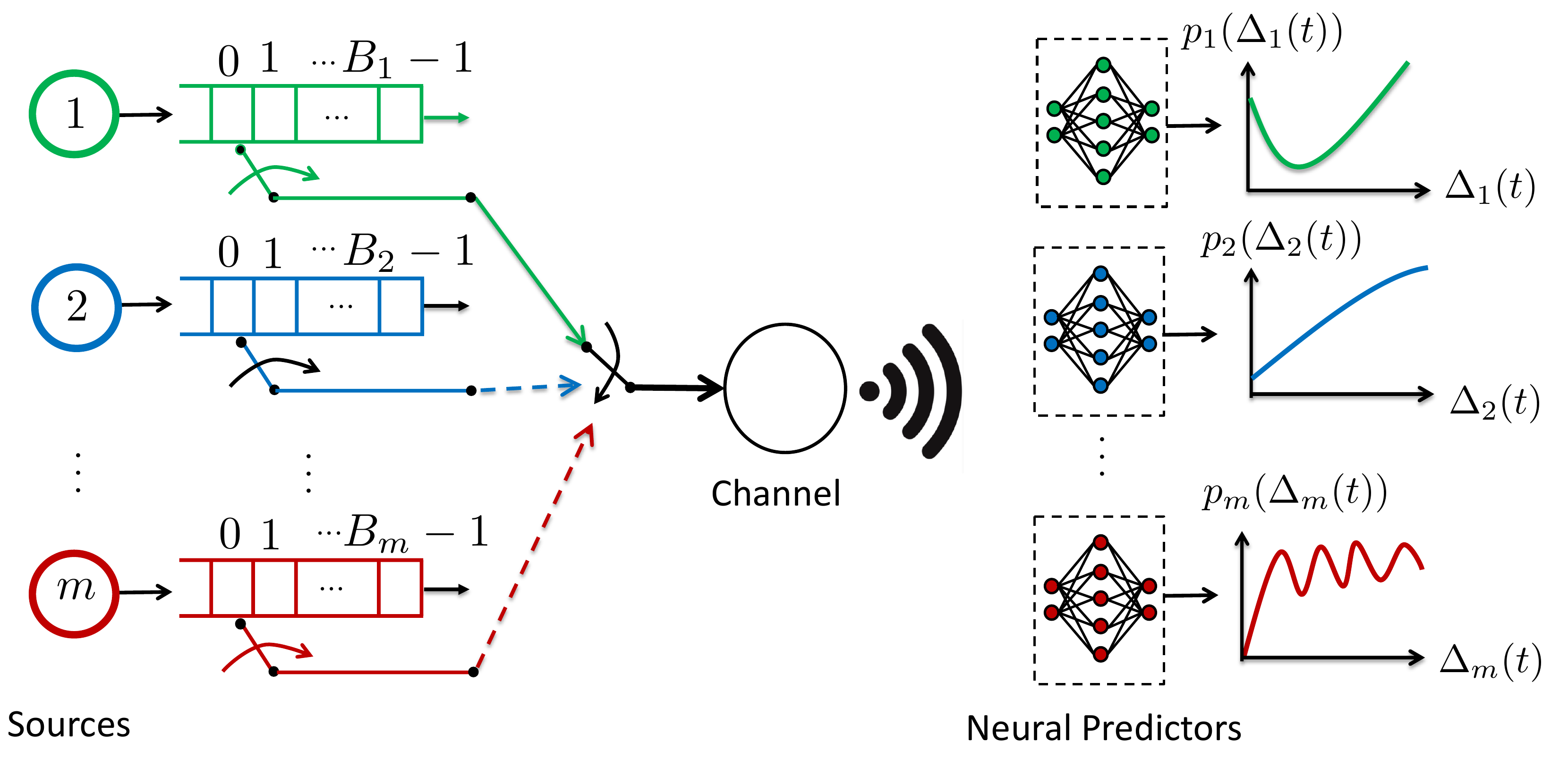}
\caption{\small A networked intelligent system, where $m$ sources send features over a shared channel to the corresponding neural predictors. At any time, at most one source can occupy the channel. \label{fig:multi-scheduling}
}
\vspace{-3mm}
\end{figure}

Let $G_{l, i}$, $S_{l,i}$, $D_{l, i}$, and $T_{l, i}$ denote the generation time, channel submission time, delivery time, and transmission time duration of the $i$-th feature sent by source $l$, respectively. The feature transmission times $T_{l, i}\geq 1$ are independent across the sources and i.i.d. among the features from the same source. We assume that the $T_{l, i}$'s are not affected by the adopted scheduling policy. The age of information (AoI) of source $l$ at time slot $t$ is given by 
\begin{align}\label{multi-source-Age}
\Delta_l(t)&=t-\max_i\{G_{l, i}: D_{l, i} \leq t\}. 
\end{align}
Our goal is to minimize the time-average weighted sum of the inference errors of the $m$ sources, which is formulated by
\begin{align}\label{Multi-scheduling_problem}
&\!\!\inf_{ \pi \in \Pi}  \limsup_{T\rightarrow \infty}\frac{1}{T} \sum_{l=1}^m w_l~\mathbb{E}_{\pi} \left[ \sum_{t=0}^{T-1} p_l(\Delta_l(t))\right], \\\label{Sceduling_constraint}
&\ \text{s.t.} \ \sum_{l=1}^m\sum_{b_l=0}^{B_l-1} c_{l, b_l}(t) \leq 1, ~~t=0, 1, 2, \ldots,
\end{align}
where $p_l(\Delta_l(t))$ is the inference error of source $l$ at time slot
$t$ and $w_l>0$ is the weight of source $l$. 

\subsection{Multiple-source Scheduling }
Problem \eqref{Multi-scheduling_problem} can be cast as a Restless Multi-arm Bandit (RMAB) problem by viewing the features stored in the source buffers as arms, where $(l, b_l)$ is an arm associated with the $(b_l+1)$-th freshest feature of the source $l$ and the state of the arm $(l, b_l)$ is the AoI $\Delta_{l}(t)$ in \eqref{multi-source-Age}. 
Finding the optimal solution for RMAB is generally PSPACE hard \cite{papadimitriou1994complexity}. Next, we develop a low-complexity scheduling policy by using both Gittins and Whittle indices. \ignore{According to \cite{weber1990index}, this policy is near-optimal for small number of arms and asymptotically optimal as the number of arms tends to infinite.}

\ignore{The following per-arm scheduling problem \eqref{decoupled_problem} is derived from \eqref{Multi-scheduling_problem}-\eqref{Sceduling_constraint} in two steps:} By relaxing the per-slot channel constraint \eqref{Sceduling_constraint} as the following time-average expected channel constraint
\begin{align}\label{Changed_constraint}
\limsup_{T \to \infty} \frac{1}{T} \sum_{t=0}^{T-1} \sum_{l=1}^m \sum_{b_l=0}^{B_l-1}\mathbb E[c_{l, b_l}(t)] \leq 1, 
\end{align}
and taking the Lagrangian dual decomposition of the relaxed scheduling problem \eqref{Multi-scheduling_problem} and \eqref{Changed_constraint}, we obtain following per-arm scheduling problem: 
\ignore{\begin{align}\label{decoupled_problem}
\!\!\!\!\bar J_{l}(\lambda)\!=\!\!\!\!\!\inf_{\substack{h_l \in \mathcal H_l\\ \pi_{l} \in \Pi_{l}}}  \limsup_{T\rightarrow \infty}\!\frac{1}{T} \!\mathbb{E}_{h_{l}, \pi_l}\!\left [ \sum_{t=0}^{T-1} \!w_l p_l(\Delta_l(t))\!+\!\lambda c_l(t)\right]\!,
\end{align}
where $h_l=(b_l(0), b_l(1), \ldots)$ denote the feature selection policy of the source $l$, $\pi_l=(d_l(0), d_l(1), \ldots)$ represents scheduling decision of source $l$, $\mathcal H_l \times \Pi_{l}$ is the set of all causal scheduling decisions for source $l$, where $h_l \in \mathcal H_l, \pi_l \in \Pi_l$, and $\lambda \geq 0$ is the Lagrangian dual variable associated to \eqref{Changed_constraint}. 

Now, by selecting feature selection policy $h_{l, b_l}=(b_l, b_l, \ldots) \in \mathcal H_l$, i.e., $b_l(t)=b_l$, we obtain following per-arm scheduling problem:} 
\begin{align}\label{decoupled_problem}
\!\!\!\!\inf_{\pi_{l, b_l} \in \Pi_{l, b_l}}  \limsup_{T\rightarrow \infty}\frac{1}{T} \mathbb{E}_{\pi_{l, b_l}}\left [ \sum_{t=0}^{T-1} \!w_l p_l(\Delta_l(t))\!+\!\lambda c_{l, b_l}(t)\right],\!
\end{align}
where $\Pi_{l, b_l}$ is the set of all causal scheduling policies of  arm $(l, b_l)$. 

\begin{definition}[\textbf{Indexability}]\cite{whittle1988restless}
Let $\Omega_{l, b_l}(\lambda)$ be the set of all AoI values $\delta$ such that if the channel is idle and $\Delta_l(t)=\delta$, the optimal action to \eqref{decoupled_problem} is $d_{l, b_l}(t)=0$. Then, the arm $(l, b_l)$ is \emph{indexable} if $\lambda_1 \leq \lambda_2$ implies $\Omega_{l, b_l}(\lambda_1) \subseteq \Omega_{l, b_l}(\lambda_2)$. 
\end{definition}

\begin{algorithm}[t]
\caption{Whittle Index Policy with Selection-from-Buffer}\label{alg:Whittle}
\begin{algorithmic}[1]
\State Do forever:
\State Update $\Delta_l(t)$ for all $l \in \{1, 2, \ldots m\}$.
\State Calculate the Whittle index $W_{l, b_l}(\Delta_l(t))$ for all $l \in \{1, 2, \ldots m\}$ and $b_l \in \{0, 1, \ldots, B_l-1\}$ using \eqref{Whittle_Index}-\eqref{waiting}.
\If{the channel is idle and $\max_{l, b_l} W_{l, b_l}(\Delta_l(t)) \geq 0$} 
    \State $(l^*, b_{l^*}^*) \gets  \arg\max_{l, b_l} W_{l, b_l}(\Delta_l(t))$.
    \State Source $l^*$ submits its feature $X_{l^*, t-b_{l^*}^*}$ to the channel.
\Else
   \State No source is scheduled, even if the channel is idle. 
\EndIf 
\end{algorithmic}
\end{algorithm}

\begin{theorem}\label{theorem7}
If $|p_l(\delta)| \leq M$ for all $\delta$ and the $T_{l, i}$'s are independent across the sources and i.i.d. among the features from the same source with a finite mean $\mathbb E[T_{l, i}]$, then all arms are indexable.
\end{theorem}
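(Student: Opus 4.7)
The plan is to mirror the proof of Theorem \ref{theorem5}, now applied to the per-arm problem \eqref{decoupled_problem}, and then exhibit the monotonicity in $\lambda$ that the definition of indexability requires. The key observation is that \eqref{decoupled_problem} has exactly the same renewal-reward structure as \eqref{sub_scheduling_problem}, except for the additional channel-occupation penalty $\lambda c_{l,b_l}(t)$, which contributes a deterministic cost $\lambda T_{l,i+1}$ per renewal cycle. This extra term plays the role of a Lagrange multiplier that trades off channel use against inference error, and crucially, it is paid \emph{only} during the active (transmission) phase.

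First, I would fix $\lambda \geq 0$ and solve \eqref{decoupled_problem} directly by the Bellman-equation argument used in the proof of Theorem \ref{theorem5}. This should yield that an optimal policy is a threshold policy on the per-arm Gittins index
\begin{align*}
\gamma_{l,b_l}(\delta) = \inf_{\tau \in \{1,2,\ldots\}} \frac{w_l}{\tau}\sum_{k=0}^{\tau-1} \mathbb{E}\!\left[p_l(\delta + k + T_{l,1})\right],
\end{align*}
namely: once the channel is idle, transmit the $(b_l+1)$-th freshest feature of source $l$ at the first time $t$ such that $\gamma_{l,b_l}(\Delta_l(t)) \geq \beta_{l,b_l}(\lambda)$, where the threshold $\beta_{l,b_l}(\lambda)$ is the unique root of the renewal-reward equation
\begin{align*}
\mathbb{E}\!\left[\sum_{t=D_{l,i}}^{D_{l,i+1}-1}\!\! w_l\, p_l(\Delta_l(t))\right] + \lambda\,\mathbb{E}[T_{l,i+1}] = \beta_{l,b_l}(\lambda)\,\mathbb{E}[D_{l,i+1}-D_{l,i}],
\end{align*}
and $\beta_{l,b_l}(\lambda)$ is itself the optimal long-run average cost of \eqref{decoupled_problem}. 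The boundedness $|p_l(\delta)|\le M$, together with $\mathbb{E}[T_{l,i}]<\infty$, ensures that the optimum is finite and the root exists and is unique, exactly as in the proof of Theorem \ref{theorem5}. Consequently, the passive set admits the closed-form description
\begin{align*}
\Omega_{l,b_l}(\lambda) = \{\delta: \gamma_{l,b_l}(\delta) < \beta_{l,b_l}(\lambda)\}.
\end{align*}

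Second, I would show that $\beta_{l,b_l}(\lambda)$ is non-decreasing in $\lambda$. Because $\beta_{l,b_l}(\lambda)$ equals the optimal value of \eqref{decoupled_problem}, and for each fixed policy $\pi_{l,b_l}$ the objective in \eqref{decoupled_problem} is an affine non-decreasing function of $\lambda$ (since $c_{l,b_l}(t)\ge 0$), the pointwise infimum $\beta_{l,b_l}(\lambda)$ is concave and non-decreasing in $\lambda$. Hence for $\lambda_1\le\lambda_2$ we have $\beta_{l,b_l}(\lambda_1)\le\beta_{l,b_l}(\lambda_2)$, which combined with the characterization of $\Omega_{l,b_l}(\lambda)$ above yields $\Omega_{l,b_l}(\lambda_1)\subseteq\Omega_{l,b_l}(\lambda_2)$. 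This is precisely the definition of indexability of arm $(l,b_l)$.

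The main obstacle is the first step: carefully transporting the Bellman-equation argument of Theorem \ref{theorem5} to the per-arm setting and verifying that the additional penalty $\lambda c_{l,b_l}(t)$ (which is incurred during the transmission interval rather than in the waiting phase) does not disturb the threshold structure on $\gamma_{l,b_l}(\delta)$. Once this is established, the monotonicity of the threshold in $\lambda$, and hence indexability, follows directly from the standard concavity/monotonicity of value functions under linear perturbations.
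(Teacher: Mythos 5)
Your proposal is correct and follows essentially the same route as the paper's proof: solve the per-arm problem \eqref{decoupled_problem} by the Theorem~\ref{theorem5} Bellman/renewal argument with the extra $\lambda\,\mathbb{E}[T_{l,i+1}]$ charge per cycle, characterize the passive set as $\{\delta: w_l\gamma_l(\delta)\le \bar p_{l,b_l}(\lambda)\}$ with the threshold equal to the optimal average cost given by the root of the same renewal-reward equation, and conclude indexability from monotonicity of that threshold in $\lambda$. The only (harmless) difference is in the last step: the paper gets strict monotonicity and continuity of $\bar p_{l,b_l}(\lambda)$ from the fixed-point equation $f(\beta)+\lambda\mathbb{E}[T_{l,i+1}]=0$ with $f$ continuous and strictly decreasing (properties it reuses for the Whittle-index formula in Theorem~\ref{theorem8}), whereas you invoke the generic fact that the optimal value is an infimum of objectives non-decreasing in $\lambda$ (your added concavity claim is unnecessary and not quite justified because of the $\limsup$, but only monotonicity is needed, so the argument stands).
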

\ifreport
\begin{proof}
See Appendix \ref{ptheorem7}.
\end{proof}
\else
\fi

Given indexability, the Whittle index $W_{l, b_l}(\delta)$ \cite{whittle1988restless} of the arm $(l, b_l)$ at state $\delta$ is $W_{l, b_l}(\delta) = \inf \{\lambda \in\mathbb R: \delta\in \Omega_{l, b_l}(\lambda)\}$.  

\begin{theorem}\label{theorem8}
If the conditions of Theorem \ref{theorem7} hold, then the Whittle index $W_{l, b_l}(\delta)$ is given by
\begin{align}\label{Whittle_Index}
W_{l, b_l}(\delta)\!=&\frac{w_l}{\mathbb E[T_{l,1}]}~\mathbb{E}\left[ z(T_{l, 1}, b_l, \delta)+T_{l, 2}\right]~\gamma_l(\delta) \nonumber\\
&-\frac{w_l}{\mathbb E[T_{l,1}]}~\mathbb{E}\left[\sum_{t=T_{l, 1}}^{T_{l, 1}+z(T_{l, 1}, b_l, \delta)+T_{l, 2}-1}  \!\!\!\!\!\!p_l(t+b_l)\right],
\end{align}
where $\gamma_l(\delta)$ is the Gittins index of an AoI bandit process for source $l$, determined by 
\begin{align}
\gamma_l (\delta)=\inf_{\tau \in \{1, 2, \ldots\}} \frac{1}{\tau} \sum_{k=0}^{\tau-1} \mathbb E \left [p_l(\delta+k+T_{l, 2}) \right],
\end{align}
and 
\begin{align}\label{waiting}
z(T_{l, 1}, b_l, \delta)=\inf_{z \in \mathbb Z}\{ z \geq 0: \gamma_l(T_{l, 1}+b_l+z) \geq \gamma_l(\delta)\}.
\end{align}
\end{theorem}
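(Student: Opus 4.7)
The plan is to reduce the Whittle index computation for each arm $(l, b_l)$ to the single-source scheduling result in Theorem~\ref{theorem5}, applied to the per-arm problem \eqref{decoupled_problem}. Observe that \eqref{decoupled_problem} has the same structure as \eqref{sub_scheduling_problem}, except that (i) the running cost is $w_l p_l(\Delta_l(t))$ instead of $p_l(\Delta_l(t))$, and (ii) there is an additional cost $\lambda c_{l,b_l}(t)$ that accrues exactly during the $T_{l,i+1}$ slots in which the channel is occupied by the next transmission. By the linearity of the Gittins index in the one-step cost, the Gittins index of the scaled cost $w_l p_l$ equals $w_l \gamma_l(\delta)$. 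Hence Theorem~\ref{theorem5} tells us that, for each fixed $\lambda$, the optimal policy for \eqref{decoupled_problem} is a threshold policy: transmit as soon as $w_l \gamma_l(\Delta_l(t)) \geq \beta_{l,b_l}(\lambda)$, where $\beta_{l,b_l}(\lambda)$ is both the threshold and the optimal time-average cost of \eqref{decoupled_problem}.

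Next, I would use the indexability established in Theorem~\ref{theorem7} to characterize the Whittle index as the critical $\lambda$ at which state $\delta$ lies on the boundary of the passive set $\Omega_{l,b_l}(\lambda)$. Since the Gittins index threshold determines this boundary, the critical condition becomes $w_l \gamma_l(\delta) = \beta_{l,b_l}(W_{l,b_l}(\delta))$. The remaining task is to compute $\beta_{l,b_l}(\lambda)$ at this critical $\lambda$ via the renewal-reward theorem applied to one inter-delivery cycle. A cycle begins at a delivery epoch $D_{l,i}$, at which $\Delta_l(D_{l,i}) = T_{l,i} + b_l$; the scheduler then waits until the Gittins index reaches the threshold $\gamma_l(\delta)$, which by the definition in \eqref{waiting} requires exactly $z(T_{l,i}, b_l, \delta)$ additional slots; the next transmission then occupies the channel for $T_{l,i+1}$ slots. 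So the cycle length is $z(T_{l,i}, b_l, \delta) + T_{l,i+1}$, the activation cost incurred during the cycle is $\lambda T_{l,i+1}$, and the AoI values visited are $T_{l,i} + b_l, T_{l,i} + b_l + 1, \ldots, T_{l,i} + b_l + z(T_{l,i}, b_l, \delta) + T_{l,i+1} - 1$.

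Collecting these pieces, the renewal-reward theorem yields
\begin{align*}
\beta_{l,b_l}(\lambda) = \frac{w_l\,\mathbb{E}\!\left[\sum_{t = T_{l,1}}^{T_{l,1} + z(T_{l,1}, b_l, \delta) + T_{l,2} - 1}\!\!\!\! p_l(t + b_l)\right] + \lambda\,\mathbb{E}[T_{l,2}]}{\mathbb{E}[z(T_{l,1}, b_l, \delta) + T_{l,2}]}.
\end{align*}
Setting this equal to $w_l \gamma_l(\delta)$ and solving for $\lambda = W_{l,b_l}(\delta)$, then using $\mathbb{E}[T_{l,1}] = \mathbb{E}[T_{l,2}]$ from the i.i.d. assumption, gives precisely the semi-analytical formula \eqref{Whittle_Index}.

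The main obstacle will be Step~2, namely justifying the critical equation $w_l \gamma_l(\delta) = \beta_{l,b_l}(W_{l,b_l}(\delta))$ rigorously. This requires showing that the passive set derived from Theorem~\ref{theorem5} has exactly the form $\Omega_{l,b_l}(\lambda) = \{\delta : w_l \gamma_l(\delta) < \beta_{l,b_l}(\lambda)\}$ (with possible tie-breaking at the boundary), and that $\beta_{l,b_l}(\lambda)$ is continuous and non-decreasing in $\lambda$, so the inverse image defining $W_{l,b_l}(\delta)$ is well-defined. A monotone-coupling argument on the Lagrangian \eqref{decoupled_problem}, combined with the optimality structure from Theorem~\ref{theorem5}, should deliver this. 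The renewal-reward computation in Step~3 is then essentially mechanical once the cycle structure is identified as above.
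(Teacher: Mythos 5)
Your proposal follows essentially the same route as the paper's proof: the per-arm problem \eqref{decoupled_problem} is solved with the Theorem~\ref{theorem5} machinery to get a threshold policy on $w_l\gamma_l(\delta)$, indexability gives the passive set $\Omega_{l,b_l}(\lambda)=\{\delta: w_l\gamma_l(\delta)\le \bar p_{l,b_l}(\lambda)\}$, the critical equation $w_l\gamma_l(\delta)=\bar p_{l,b_l}(W_{l,b_l}(\delta))$ follows from continuity and monotonicity of $\bar p_{l,b_l}(\lambda)$ in $\lambda$, and the renewal-reward evaluation of one delivery-to-delivery cycle (waiting time $z(T_{l,1},b_l,\delta)$, activation cost $\lambda\,\mathbb E[T_{l,2}]$) yields \eqref{Whittle_Index}. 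The only minor difference is that the step you flag as the main obstacle is handled in the paper not by a coupling argument but by the explicit root characterization $\bar p_{l,b_l}(\lambda)=\{\beta: f(\beta)+\lambda\,\mathbb E[T_{l,i+1}]=0\}$ with $f$ continuous and strictly decreasing, which immediately gives that $\bar p_{l,b_l}(\lambda)$ is continuous and strictly increasing in $\lambda$.
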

\ifreport
\begin{proof}
See Appendix \ref{ptheorem8}.
\end{proof}
\else
\fi

Finding a (semi-)analytical expression of the Whittle index for minimizing non-monotonic AoI functions is in a challenging task. In Theorem \ref{theorem8}, this challenge is resolved by using the Gittins index $\gamma_l(\delta)$ to solve \eqref{decoupled_problem}, where the solution techniques of \eqref{sub_scheduling_problem} are employed. 
The Whittle index scheduling policy for reducing the weighted-sum inference error is described in Algorithm \ref{alg:Whittle}, where all sources remain silent when the channel is idle, if $W_{l, b_l}(\Delta_l(t)) < 0$ for all $l$ and $b_l$. 

In the special case that (i) the AoI function $p(\cdot)$ is non-decreasing and (ii) the transmission time is fixed as $T_{l,i} = 1$, it holds that $\gamma_l(\delta) = p_l(\delta+1)$ and $z(T_{l, 1}, b_l,\delta) =\max\{\delta-b_l - 1, 0\}$. Hence, 
\begin{align}
W_{l, 0}(\delta) = w_l\left[\delta~p_l(\delta+1)-\sum_{t=1}^{\delta} p_l(t)\right]
\end{align} for  $\delta\geq1$ and $b_l=0$. By this, the Whittle index in Section IV of \cite[Equation (7)]{Tripathi2019} is recovered from Theorem \ref{theorem8}. 

\ignore{{\red \subsubsection{Whittle Index}\label{Sec-whittle}
Now, we need to establish the indexability of the problem \eqref{decoupled_problem}. We denote $S_{l, b_l}(\lambda)$ as the set of AoI state $\delta$, where the optimal action to \eqref{decoupled_problem} is "do not transmit", i.e., if $\Delta(t)=\delta$, then $d_l(t)=0$. If the set $S_{l,b_l}(\lambda)$ is increasing in the cost $\lambda$, then the problem \eqref{decoupled_problem} is said to be indexable \cite{whittle1988restless}. Given indexability, Whittle index $W_{l, b_l}(\delta)$ \cite{whittle1988restless} is the infimum transmission cost $\lambda$ for which $\delta \in S_{l,b_l}(\lambda)$, given as  
\begin{align}\label{def-whittle}
W_{l, b_l}(\delta)=\inf_{\lambda}\{\lambda : \delta \in S_{l, b_l}(\lambda)\}.
\end{align}

\ignore{In other words, Whittle index $W_{l, b_l}(\delta)$ is the greatest transmission cost one willing to pay for transmitting $(b_l+1)$-th freshest feature from source $l$ at AoI state $\Delta_l(t)=\delta$.} 
\begin{theorem}\label{theorem7}
If there is an $M \geq 0$ such that $|p_l(\delta)| \leq M$ for all $\delta$ and the random transmission times $T_{l, i}$ are i.i.d. for all $i$ with a finite mean $\mathbb E[T_{l, i}]$, then the following assertions are true:
\begin{itemize}
\item[(a)] The indexability property holds for the problem \eqref{decoupled_problem}.
\item[(b_l)] Whittle index $W_{l, b_l}(\delta)$ is 
\begin{align}\label{Whittle_Index}
W_{l, b_l}(\delta)=&\mathbb{E}\left[D_{l, i+1}\big(\gamma_l(\delta)\big)-D_{l, i}\big(\gamma_l(\delta)\big)\right] \gamma_l(\delta)~ \nonumber\\
&-\mathbb{E}\left[\sum_{t=D_{l, i}\big(\gamma_l(\delta)\big)}^{D_{l, i+1}\big(\gamma_l(\delta)\big)-1}  p_l\big(\Delta_l(t)\big)\right],
\end{align}
where Gittins index $\gamma_l(\delta)$ is determined by
\begin{align}
\gamma_(\delta)=\inf_{\tau \in \{1, 2, \ldots\}} \frac{1}{\tau} \sum_{k=0}^{\tau-1} \mathbb E \left [p_l(\delta+k+T_{l, 1}) \right],
\end{align}
$D_{l, i}(\gamma_l(\delta))=S_{l, i}(\gamma_l(\delta))+T_{l, i}$ is the $i$-th delivery time, $\Delta_l(t)=t-S_{l, i}(\gamma_l(\delta))+b_l$ is the AoI at time $t$, and $S_{l, i+1}(\gamma_l(\delta))$ is the $(i+1)$-th sending time, given by
\begin{align}
S_{l, i+1}(\gamma_l(\delta))=\inf_{t \in \mathbb Z}\{ t \geq D_{l, i}: \gamma_l(\Delta_l(t)) \geq \gamma_l(\delta)\}.
\end{align}
\end{itemize} 
\end{theorem}
\ifreport
\begin{proof}
See Appendix \ref{ptheorem7}.
\end{proof}
\else
\fi
In Theorem \ref{theorem7}, we establish the indexability of the problem \eqref{decoupled_problem} and provide a semi-analytical expression of Whittle index $W_{l, b_l}(\delta)$ in \eqref{Whittle_Index}. Because the transmission times $T_{l, i}$ are i.i.d. for all $i$, Whittle index $W_{l, b_l}(\delta)$ in \eqref{Whittle_Index} is a function of Gittins index $\gamma_l(\delta)$. Given $\gamma_l(\delta)$, Whittle index $W_{l, b_l}(\delta)$ can be computed by Monte Carlo simulations. We also can find closed form expression of some simpler settings using \eqref{Whittle_Index}.

The following result follows from Theorem \ref{theorem7} and it fits with the result obtained in \cite{Tripathi2019}.
\begin{corollary}
If $p(\cdot)$ is a non-decreasing function and the transmission times $T_{l, i}$ is one for all $l$ and $i$, then Whittle index $W_{l, 0}(\delta)$ is 
\begin{align}\label{Whittle_Index}
W_{l, 0}(\delta)=\delta p_l(\delta+1)-\sum_{k=1}^\delta p_l(\delta).
\end{align}
\end{corollary}

\subsubsection{Scheduling Policy} 
Intuitively, Whittle index $W_{l, b_l}(\delta)$ is the measure of potential gain if the scheduler transmits $(b_l+1)$-th freshest feature from source $l$. The scheduler determines $b_l(t)$ at time slot $t$ by solving the following equation: 
\begin{align}
b_l(t)=\arg\max_{b_l \in \{0, 1, B_l-1\}} W_{l, b_l}(\Delta_l(t)).
\end{align}
Define 
\begin{align}
W_l(\delta)=\max_{b_l \in \{0, 1, B_l-1\}} W_{l, b_l}(\delta).
\end{align}
The scheduler selects source $l$ with the largest and non-negative Whittle index $W_l(\Delta_l(t))$ at time slot $t$. If $W_l(\Delta_l(t)) < 0$ for all $l$, then no source will be selected for the transmission. 

This section generalizes the Whittle index based policy in \cite{sombabu2020whittle, Tripathi2019, Kadota2018, Kadota2019} that depend heavily on the assumptions: (i) The AoI penalty functions are non-decreasing and (ii) the transmission times $T_{l, i}$ is one. }}

\section{Data Driven Evaluations}
In this section, we illustrate the performance of our scheduling policies, where the inference error function $p(\delta)$ is collected from the data driven experiments in Section \ref{Experimentation}.
\subsection{Single-source Scheduling Policies}
We evaluate the following four single-source scheduling policies:
\begin{itemize}
\item[1.] Generate-at-will, zero wait: The $(i+1)$-th feature sending time $S_{i+1}$ is given by 
$S_{i+1}=D_i = S_i+T_i$ and the feature selection policy is $f=(0, 0, \ldots)$, i.e., $b_i=0$ for all $i$. 

\item[2.] Generate-at-will, optimal scheduling: The policy is given by Theorem \ref{theorem5} with $b_i=0$ for all $i$.

\item[3.] Selection-from-buffer, optimal scheduling: The policy is given by Theorem \ref{theorem6}.

\item[4.] Periodic feature updating: Features are generated periodically with a period $T_p$ and appended to a queue with buffer size $B$. When the buffer is full, no new feature is admitted to the buffer. Features in the buffer are sent over the channel in a first-come, first-served order.  
\end{itemize}

\begin{figure}[t]
\centering
\includegraphics[width=0.35\textwidth]{./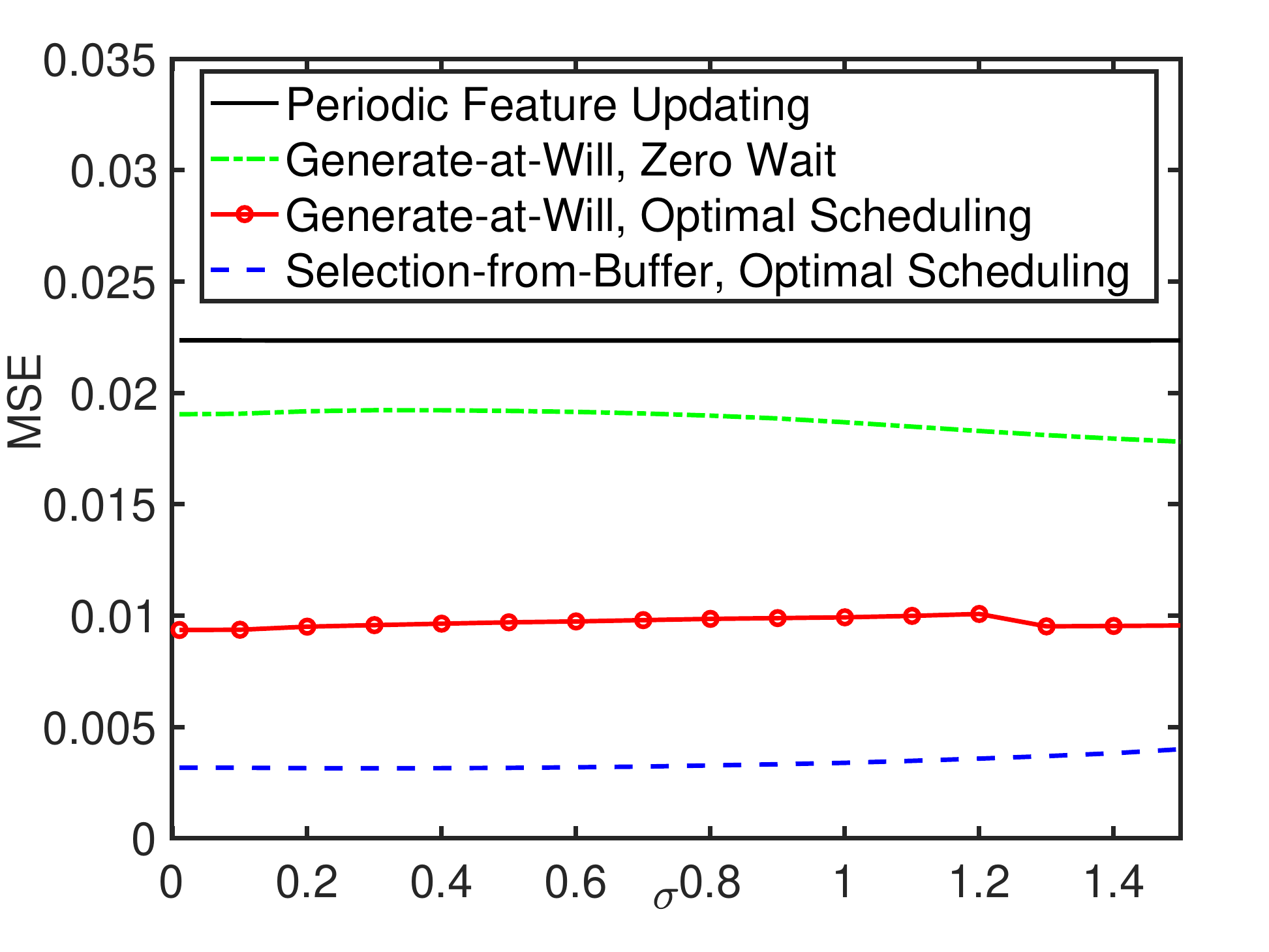}
\caption{\small Time average inference error (MSE) vs. the scale parameter $\sigma$ of discretized i.i.d. log-normal transmission time distribution for single-source scheduling (in robot state prediction task). \label{fig:singlesourcedifferentsigma}
}
\end{figure}

\ifreport

\begin{figure}[t]
\centering
\includegraphics[width=0.35\textwidth]{./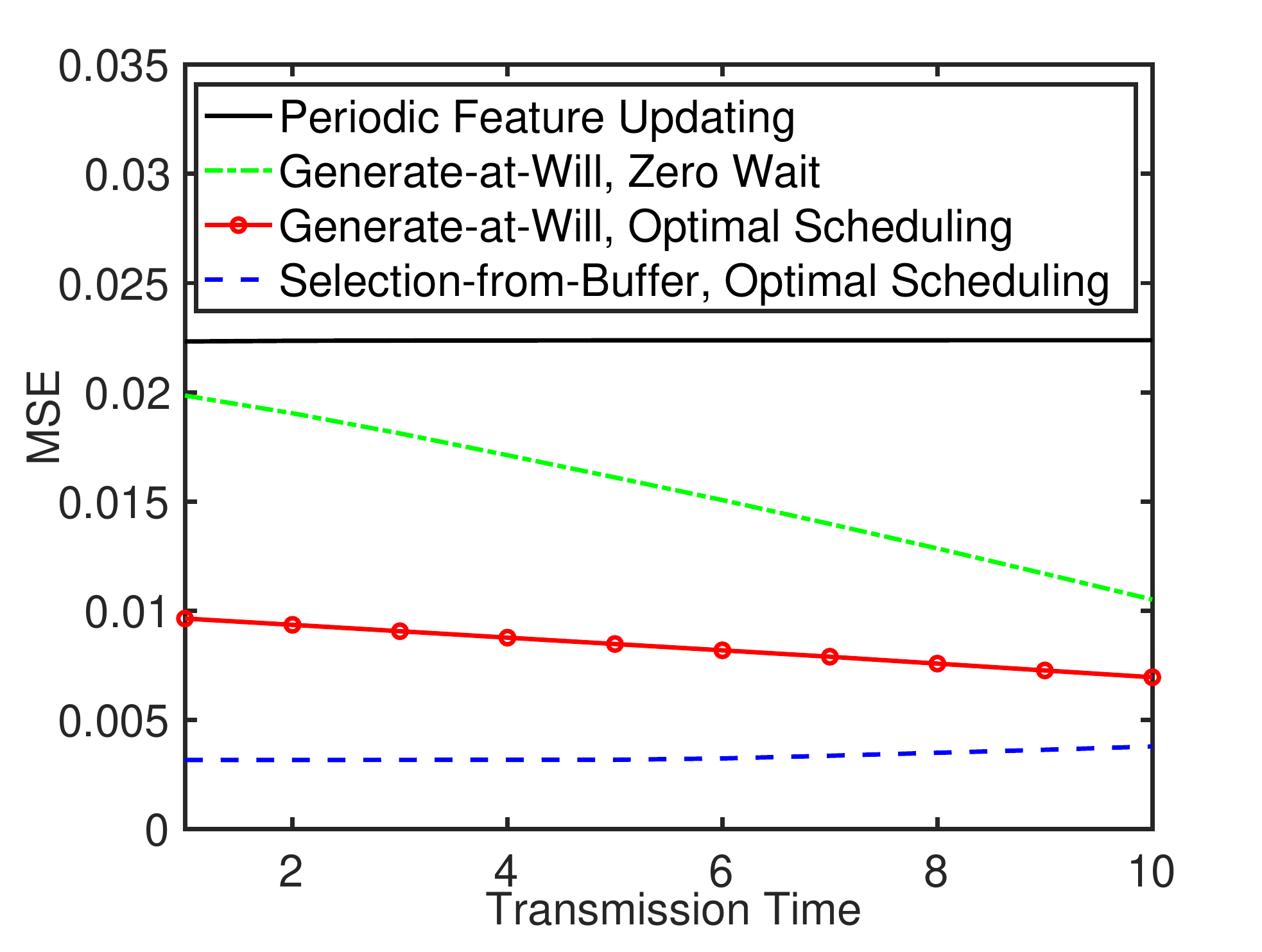}
\caption{\small Time average inference error (MSE) vs. Constant Transmission Time (in robot state prediction task). \label{fig:singlesourcedifferenttime}
}
\vspace{-3mm}
\end{figure}
\else
\fi

\begin{figure}[t]
\centering
\includegraphics[width=0.35\textwidth]{./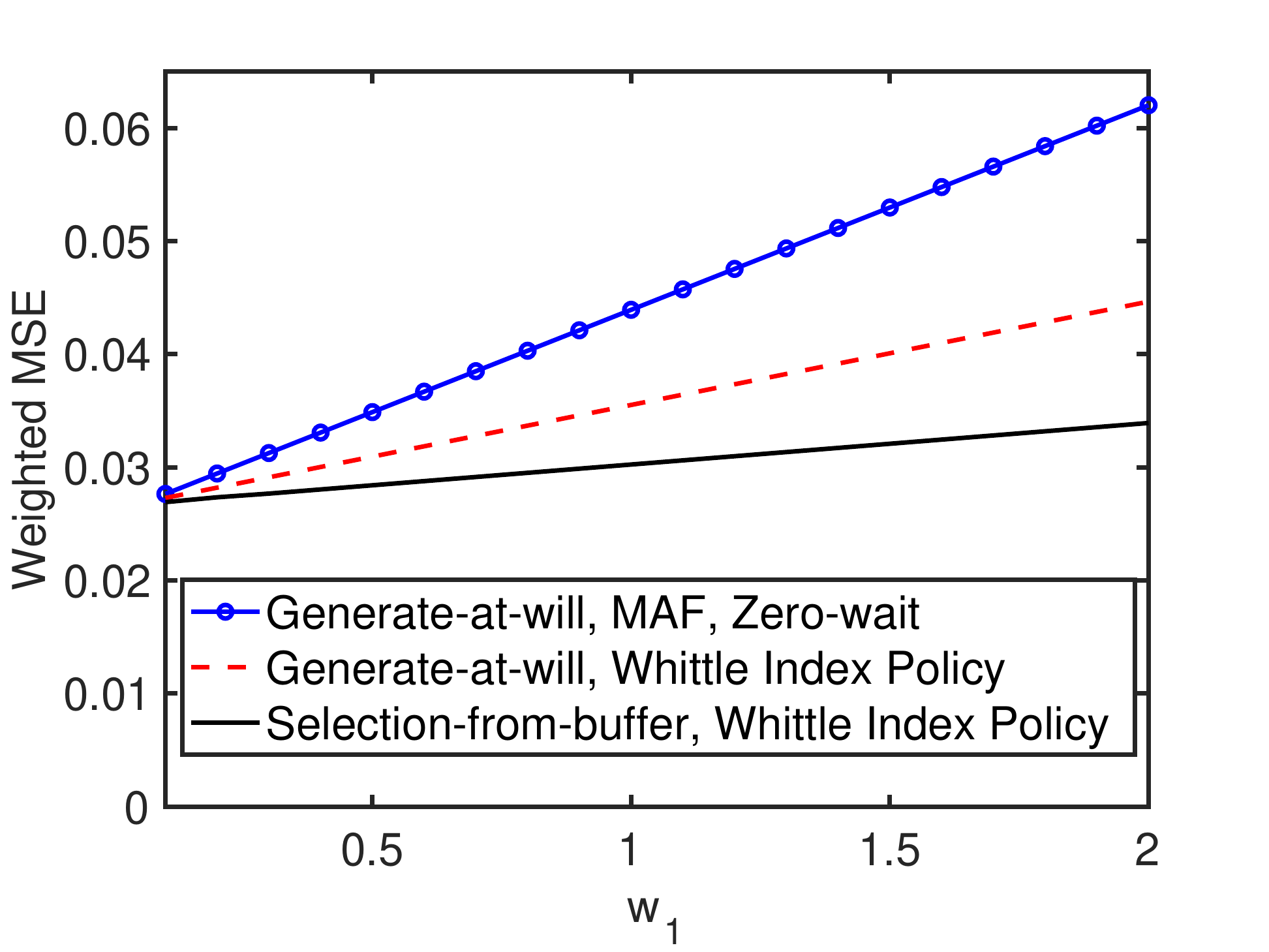}
\caption{\small Time-average weighted sum of the inference errors (Normalized MSE) vs. the weight $w_1$ of Source 1 for multi-source scheduling, where the number of sources is $m=2$ and the weight of Source 2 is $w_2=1$. \label{fig:multisourceweight}
}
\vspace{-3mm}
\end{figure}
\ifreport
\begin{figure}[t]
\centering
\includegraphics[width=0.35\textwidth]{./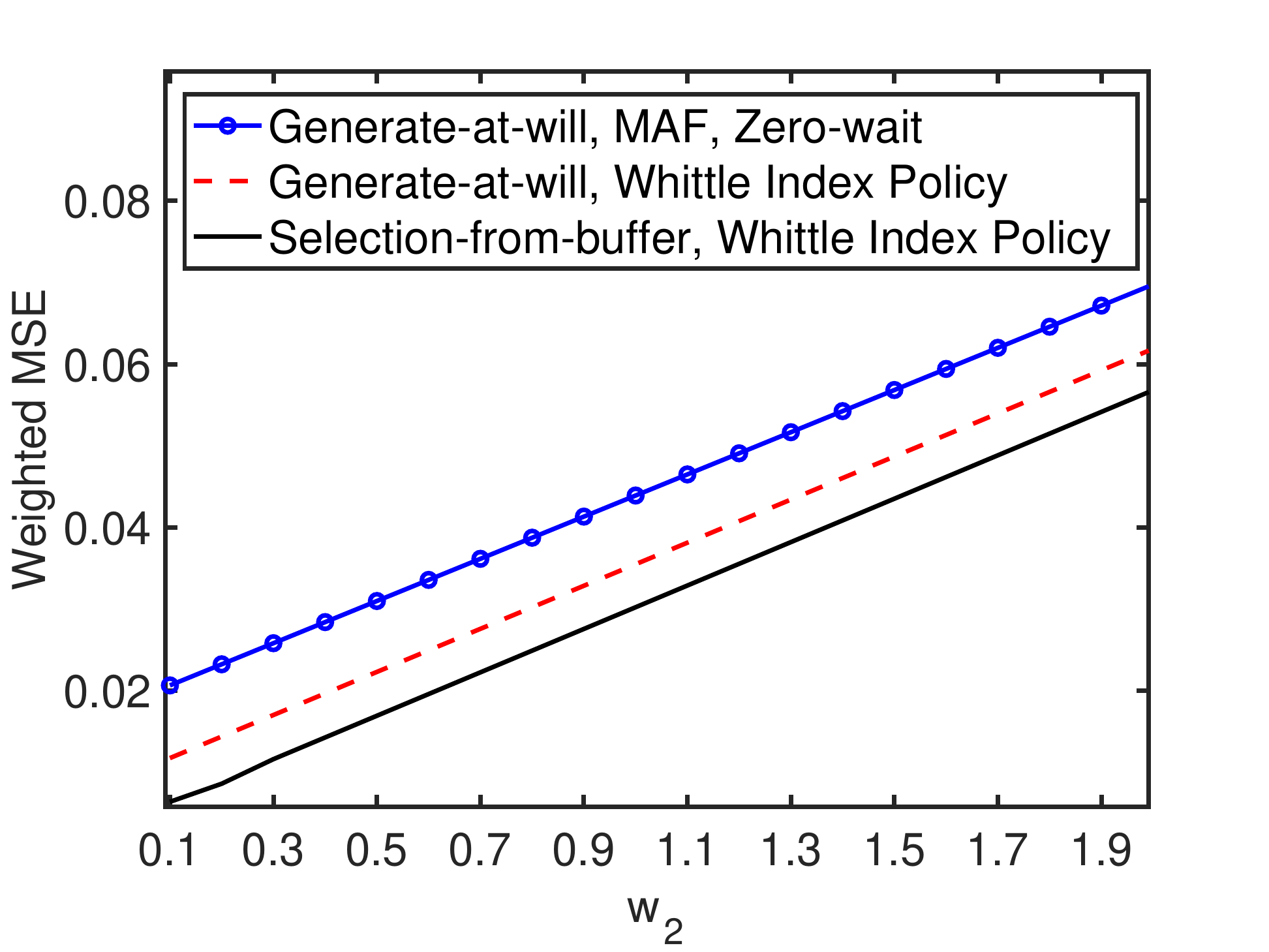}
\caption{\small Time-average weighted sum of the inference errors (MSE) vs. the weight $w_2$ of Source 2 for multi-source scheduling, where the number of sources is $m=2$ and the weight of Source 2 is $w_1=1$. \label{fig:multisourceweight1}
}
\vspace{-3mm}
\end{figure}

\begin{figure}[t]
\centering
\includegraphics[width=0.35\textwidth]{./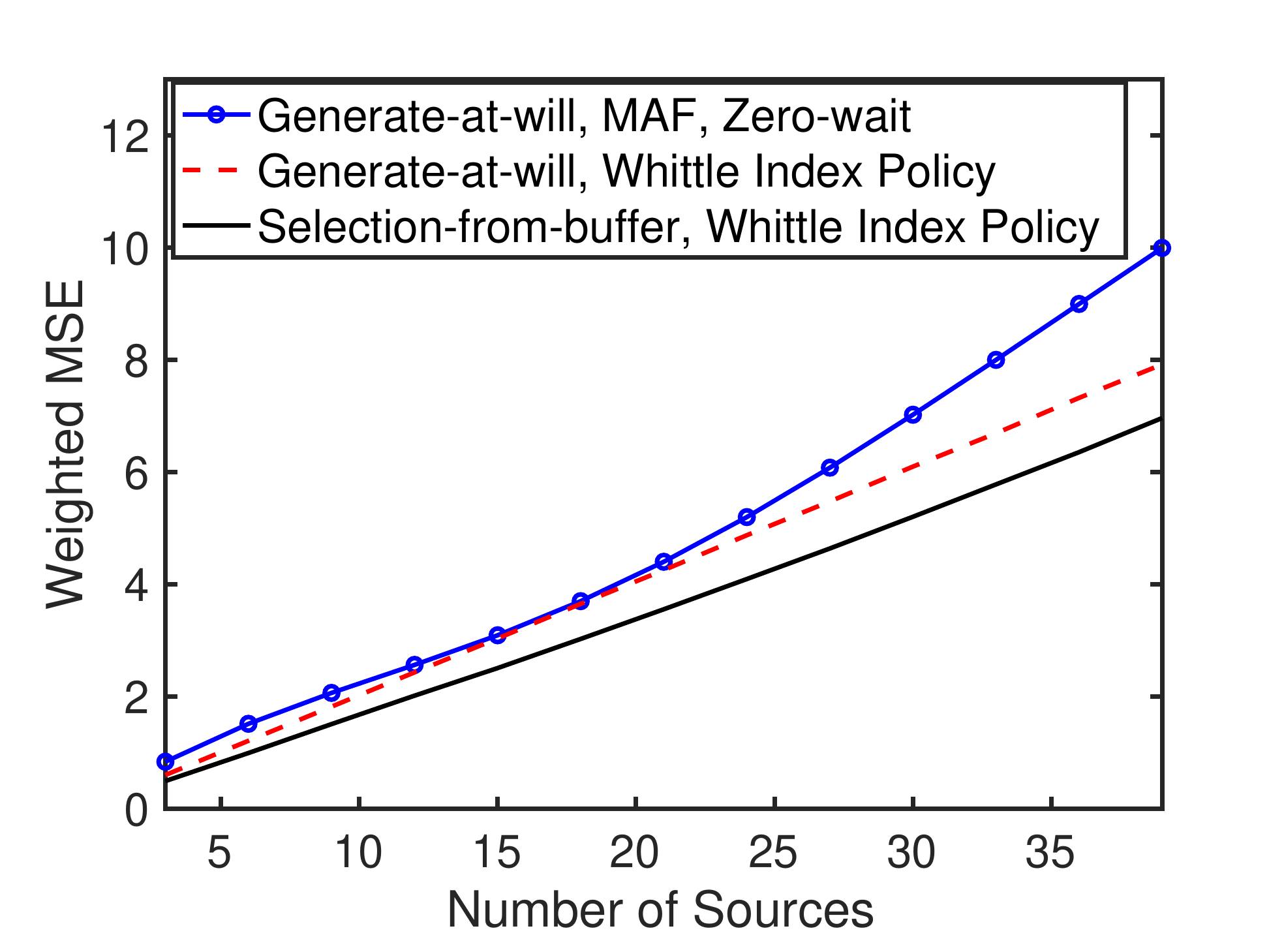}
\caption{\small Time-average weighted sum of the inference errors (MSE) vs. number of sources. \label{fig:multisourcediffnumber}
}
\vspace{-3mm}
\end{figure}

\else
\fi

Fig. \ref{fig:singlesourcedifferentsigma} illustrates the time-average inference error achieved by the four single-source scheduling policies defined above. The inference error function $p(\delta)$ used in this evaluation is illustrated in Fig. \ref{fig:DelayedNetworkedControlled}(c), which is generated by using the leader-follower robotic dataset and the trained neural network as explained in Section \ref{Experimentation}. The $i$-th feature transmission time $T_i$ is assumed to follow a discretized i.i.d. log-normal distribution. In particular, $T_i$ can be expressed as $T_i=\lceil \alpha e^{\sigma Z_i}/ \mathbb E[e^{\sigma Z_i}] \rceil$, where $Z_i$'s are i.i.d. Gaussian random variables with zero mean and unit variance. In Fig. \ref{fig:singlesourcedifferentsigma}, we plot the time average inference error versus the scale parameter $\sigma$ of discretized i.i.d. log-normal distribution, where $\alpha=1.2$, the buffer size is $B=30$, and the period of uniform sampling is $T_p=3$. The randomness of the transmission time increases with the growth of $\sigma$. Data-driven evaluations in Fig. \ref{fig:singlesourcedifferentsigma} show that ``selection-from-buffer” with optimal scheduler achieves $3$ times performance gain compared to ``generate-at-will,” and $8$ times performance gain compared to periodic feature updating.

\ifreport

Fig. \ref{fig:singlesourcedifferenttime} illustrates the performance of the four scheduling policies versus constant transmission time $T$. Similar to Fig. \ref{fig:singlesourcedifferentsigma}, the inference error function $p(\delta)$ is measured from leader-follower robotic dataset. This figure also shows that ``selection-from-buffer” with optimal scheduler can achieve $8$ time performance gain compared to periodic feature updating.

\else
\fi

\subsection{Multiple-source Scheduling Policies}

Now, we evaluate the following three multiple-source scheduling policies:
\begin{itemize}
 \item[1.] Generate-at-will, maximum age first (MAF), zero wait: At time slot $t$, if the channel is free, this policy will schedule the freshest generated feature from source $\arg\max_l \Delta_l(t)$; otherwise no source is scheduled. 
  \item[2.] Generate-at-will, Whittle index policy: Denote 
  \begin{align}
  W_0(t)=\max_l \!W_{l, 0}(\Delta_l(t)), ~  l_0^*=\arg\max_l \!W_{l, 0}(\Delta_l(t)). 
  \end{align}
If the channel is free and $W_0(t) \geq 0$, the freshest feature of the source $l_0^*$ is scheduled; otherwise no source is scheduled.  
 \item[3.] Selection-from-buffer, Whittle index policy: The policy is described in Algorithm \ref{alg:Whittle}.
\end{itemize}

In Fig. \ref{fig:multisourceweight}, we plot the time average weighted sum of inference errors versus weight $w_1$, where the number of sources is $m=2$ and weight $w_2=1$. The inference error function $p_1(\delta)$ is illustrated in Fig. \ref{fig:DelayedNetworkedControlled}(c). The inference error function $p_2(\delta)$ is illustrated in Fig. \ref{fig:learning}(c), which is generated by using the pre-trained neural network on ``BAIR" dataset from \cite{lee2018stochastic}. The transmission times for Source 1 and Source 2 are $T_{1, i}=1$ and $T_{2, i}=4$ for all $i$, respectively. The buffer sizes are $B_1=B_2=30$. The weight $w_1$ is associated with a non-monotonic AoI function. The performance gain of ``selection-from-buffer, Whittle index policy'' increases as $w_1$ grows. \ifreport  \else Due to limited space, more numerical results are provided in \cite{technical_report}.\fi

\ifreport
Fig. \ref{fig:multisourceweight1} shows the time average weighted sum of inference errors versus weight $w_2$, where the weight $w_1=1$. The other parameters are the same as in Fig. \ref{fig:multisourceweight}. The weight $w_2$ is associated with a monotonic AoI function. The difference among the average weighted sum of inference errors under policies ``Selection-from-buffer, Whittle index policy'', ``Generate-at-will, Whittle index policy'', and ``Generate-at-will, MAF, zero wait'' is fixed as $w_2$ grows, where ``Selection-from-buffer, Whittle index policy'' achieves the minimum inference errors.

Fig. \ref{fig:multisourcediffnumber} depicts the performance of the three scheduling policies as the number of sources $n$ increases. The number of sources is increased from $n=3$ to $n=39$. The number of sources $n$ increments by $3$ in which inference error functions are associated with Fig. \ref{fig:learning}(c), Fig. \ref{fig:TrainingCartVelocity}(c), and Fig. \ref{fig:DelayedNetworkedControlled}(c) with constant transmission times $4, 1,$ and $1$, respectively. From Fig. \ref{fig:multisourcediffnumber}, we observe that ``Selection-from-buffer, Whittle index policy'' achieves minimum inference error than the other two policies. 
\else
\fi

\section{Conclusions}
In this paper, we interpreted the impact of data freshness on the performance of real-time supervised learning. We showed that the training error and the inference error of real-time supervised learning could be non-monotonic AoI functions if the target and feature data sequence is far from a Markov model. Our experimental results suggested that the data sequence can be far from Markovian due to response delay, communication delay, and/or long-range dependence. To minimize the time-average inference error, we adopted a new feature transmission model called ``selection-from-buffer'' and designed an optimal single-source scheduling policy. The optimal single-source scheduling policy is found to be a threshold policy on the Gittins index. Moreover, we developed a Whittle index policy for multiple-source scheduling and provided a semi-analytical expression for the Whittle index. Our numerical results validated the efficacy of the proposed scheduling policies. 

\section*{Acknowledgement}
The authors are grateful to Vijay Subramanian for one suggestion, to John Hung for useful discussions on this work, and to Shaoyi Li for his help on Fig. \ref{fig:learning}(b)-(c).

\bibliographystyle{IEEEtran}
\bibliography{refshisher}
\section{Appendix}
\subsection{Relationship among $L$-divergence, Bregman divergence, and $f$-divergence} \label{InformationTheory2}
We provide a comparison among the $L$-divergence defined in \eqref{divergence}, the Bregman divergence \cite{dhillon2008matrix}, and the $f$-divergence \cite{csiszar2004information}.

Let $\mathcal{P^Y}$ denote the set of all probability distributions on the discrete set $\mathcal Y$. Any distribution $Q_Y \in \mathcal{P^Y}$ can be represented by a probability vector $\mathbf{q}_Y=(Q_Y(y_1), \ldots, Q_Y(y_{|\mathcal Y|}))^T$ that satisfies $\sum_{y\in \mathcal Y}Q_Y(y) = 1$ and $Q_Y(y) \geq 0$ for all $y \in \mathcal Y$. If $F: \mathcal {P^Y} \mapsto \mathbb R$ be a continuously differentiable and strictly convex function, then the Bregman divergence $B_F(P_Y||Q_Y)$ between two distributions $P_Y \in \mathcal {P^Y}$ and $Q_Y \in \mathcal {P^Y}$ associated with function $F$ is defined by \cite{Amari}
\begin{align}\label{Bregaman}
    B_{F}(P_Y||Q_Y)= F(\mathbf p_Y)-F(\mathbf q_Y)-\nabla F(\mathbf q_Y)^T (\mathbf p_Y-\mathbf q_Y),
\end{align}
where $\mathbf p_Y$ and $\mathbf q_Y$ are two probability vectors associated to the distributions $P_Y$ and $Q_Y$, respectively, and $\nabla F(\mathbf q_Y)$ is the gradient of function $F$ at $\mathbf q_Y$. Consider the loss function
\begin{align}\label{eq-L_F}
L_F(y, Q_Y)= - F(\mathbf q_Y)-\frac{\partial F(\mathbf q_Y)}{\partial P_Y(y)}+\nabla F(\mathbf q_Y)^T \mathbf q_Y,
\end{align}
where the action $a=Q_Y$ is a distribution in $\mathcal{P^Y}$.
\begin{lemma}\label{B-L}
Any Bregman divergence $B_{F}(P_Y||Q_Y)$ is an $L_F$-divergence $D_{L_F}(P_Y || Q_Y)$, where $L_F$ is defined in \eqref{eq-L_F}.  
\end{lemma}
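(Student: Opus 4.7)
The plan is to compute the $L_F$-divergence $D_{L_F}(P_Y \| Q_Y)$ directly from the definition \eqref{divergence} and show it equals the Bregman divergence $B_F(P_Y \| Q_Y)$ given in \eqref{Bregaman}. First I would compute the expected loss $\mathbb{E}_{Y \sim P_Y}[L_F(Y, Q_Y)]$ by substituting the definition of $L_F$ from \eqref{eq-L_F}. Since the first and third terms in $L_F(y, Q_Y)$ do not depend on $y$, they pass through the expectation unchanged, while the middle term yields $\sum_y P_Y(y)\, [\nabla F(\mathbf{q}_Y)]_y = \nabla F(\mathbf{q}_Y)^T \mathbf{p}_Y$. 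Collecting terms gives
\begin{align}
\mathbb{E}_{Y \sim P_Y}[L_F(Y, Q_Y)] = -F(\mathbf{q}_Y) - \nabla F(\mathbf{q}_Y)^T (\mathbf{p}_Y - \mathbf{q}_Y).
\end{align}

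Next I would identify the Bayes action $a_{P_Y}$ associated with $P_Y$. Setting $Q_Y = P_Y$ in the display above yields $\mathbb{E}_{Y \sim P_Y}[L_F(Y, P_Y)] = -F(\mathbf{p}_Y)$, and then the candidate difference
\begin{align}
\mathbb{E}_{Y \sim P_Y}[L_F(Y, Q_Y)] - \mathbb{E}_{Y \sim P_Y}[L_F(Y, P_Y)] = F(\mathbf{p}_Y) - F(\mathbf{q}_Y) - \nabla F(\mathbf{q}_Y)^T (\mathbf{p}_Y - \mathbf{q}_Y)
\end{align}
is exactly $B_F(P_Y \| Q_Y)$. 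Because $F$ is strictly convex and continuously differentiable, $B_F(P_Y \| Q_Y) \geq 0$ with equality iff $P_Y = Q_Y$. This establishes that $Q_Y = P_Y$ is the unique minimizer of $Q_Y \mapsto \mathbb{E}_{Y \sim P_Y}[L_F(Y, Q_Y)]$ over the probability simplex $\mathcal{P}^{\mathcal Y}$, hence $a_{P_Y} = P_Y$ in the sense of the Bayes action defined after \eqref{eq_Lentropy}.

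Finally, plugging $a_{P_Y} = P_Y$ and $a_{Q_Y} = Q_Y$ into the definition of the $L_F$-divergence in \eqref{divergence} gives
\begin{align}
D_{L_F}(P_Y \| Q_Y) = \mathbb{E}_{Y \sim P_Y}[L_F(Y, Q_Y)] - \mathbb{E}_{Y \sim P_Y}[L_F(Y, P_Y)] = B_F(P_Y \| Q_Y),
\end{align}
which is the desired identity.

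The proof is largely a direct calculation, so the only subtle point is justifying that the Bayes action for $P_Y$ under the loss $L_F$ is $P_Y$ itself. I would handle this by leveraging the well-known non-negativity of the Bregman divergence (a standard consequence of strict convexity), rather than by a Lagrangian first-order analysis on the simplex, which avoids technicalities at the boundary where some $Q_Y(y) = 0$ and where $\nabla F$ may blow up (e.g., for negative Shannon entropy). This is the step that most requires care; the remaining arithmetic is routine bookkeeping.
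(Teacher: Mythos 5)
Your proposal is correct and follows essentially the same route as the paper's proof: compute $\mathbb{E}_{Y\sim P_Y}[L_F(Y,Q_Y)]=-F(\mathbf q_Y)-\nabla F(\mathbf q_Y)^T(\mathbf p_Y-\mathbf q_Y)$, use (strict) convexity of $F$ to identify $Q_Y=P_Y$ as the unique Bayes action with $H_{L_F}(Y)=-F(\mathbf p_Y)$, and subtract to recover $B_F(P_Y\|Q_Y)$. Your appeal to the non-negativity of the Bregman divergence is just the paper's first-order convexity inequality stated in equivalent form, so there is no substantive difference.
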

\begin{proof}
The $L_F$-entropy associated with the loss function $L_F(y, Q_Y)$ in \eqref{eq-L_F} is 
\begin{align}\label{entropy_F}
H_{L_F}(Y)= \min_{Q_Y \in \mathcal{P^Y}} E_{Y \sim P_Y} \left [ L_F\left (Y, Q_Y\right)\right],
\end{align}
where $P_Y$ is the distribution of $Y$ and 
\begin{align}\label{expected_loss_F}
E_{Y \sim P_Y} \left [ L_F\left (Y, Q_Y\right)\right]=-F(\mathbf q_Y)-\nabla F(\mathbf q_Y)^T (\mathbf p_Y-\mathbf q_Y).
\end{align}
Because the function $F$ is convex, it follows from \eqref{expected_loss_F} that
\begin{align}
E_{Y \sim P_Y} \left [ L_F\left (Y, Q_Y\right)\right] \geq -F(\mathbf p_Y), \ \forall Q_Y \in \mathcal{P^Y}.
\end{align}
Moreover, if $Q_Y=P_Y$, then 
\begin{align}\label{lowerboundachieved}
E_{Y \sim P_Y} \left [ L_F\left (Y, P_Y\right)\right]=-F(\mathbf p_Y).
\end{align}
Combining \eqref{entropy_F}-\eqref{lowerboundachieved}, it follows that
\begin{align}
H_{L_F}(Y)=& E_{Y \sim P_Y} \left [ L_F\left (Y, P_Y\right)\right] ]=-F(\mathbf p_Y).
\end{align}
Due to the strict convexity of function $F$, $Q_Y=P_Y$ is the unique minimizer of \eqref{entropy_F}. Then, by the definition of $L$-divergence in \eqref{divergence}, we get
\begin{align}
&D_{L_F}(P_Y || Q_Y)\nonumber\\
=&E_{Y \sim P_Y} \left [ L_F\left (Y, Q_Y\right)\right]-H_{L_F}(Y) \nonumber\\
=&-F(\mathbf q_Y)-\nabla F(\mathbf q_Y)^T (\mathbf p_Y-\mathbf q_Y)+F(\mathbf p_Y),
\end{align}
which is equal to the Bregman divergence $B_{F}(P_Y||Q_Y)$ defined in \eqref{Bregaman}. This completes the proof.
\end{proof}
 

By Lemma \ref{B-L}, any Bregman divergence $B_{F}(P_Y||Q_Y)$ is an $L_F$-divergence $D_{L_F}(P_Y || Q_Y)$. However, the converse is not always true, which is explained below. If $H_{L}(Y)$ is strictly concave and continuously differentiable in $P_Y$, then the associated $L$-divergence $D_L(P_Y||Q_Y)$ can be expressed as \cite[Section 3.5.4]{Dawid2004}
\begin{align}\label{L-divergencetoBregman}
    &D_L(P_Y||Q_Y) \nonumber\\
    =&H_L(\mathbf q_Y)+\nabla H_L(\mathbf q_Y)^T (\mathbf p_Y-\mathbf q_Y)-H_L(\mathbf p_Y),
\end{align}
where the $L$-entropy $H_L(Y)$ is rewritten as $H_L(\mathbf q_Y)$ to emphasize that it is a function of vector $\mathbf q_Y$. By comparing \eqref{L-divergencetoBregman} with \eqref{Bregaman}, one can observe that the right hand side of \eqref{L-divergencetoBregman} is exactly the Bregman divergence $B_{-H_L} (P_Y||Q_Y)$ associated with function $-H_L$. If $H_{L}(Y)$ is not strictly concave or not continuously differentiable in $P_Y$, then the $L$-divergence $D_L(P_Y||Q_Y)$ may not be a Bregman divergence.

The $f$-divergence is defined by \cite{csiszar2004information} 
\begin{align}
   D_f(P_Y||Q_Y)=\sum_{y \in \mathcal Y} Q_Y(y) f\left(\frac{P_Y(y)}{Q(y)}\right),  
\end{align}
where $f:(0:\infty) \mapsto \mathbb R$ is a convex function satisfying $f(1)=0$. The $f$-mutual information can be expressed by using the $f$-divergence
\begin{align}
    I_f(Y; X)=\mathbb E_{X \sim P_X}[D_f(P_{Y|X}||P_Y)].
\end{align}
The $f$-mutual information is symmetric, i.e., $I_f(Y; X)=I_f(X;Y).$ However, the $L$-mutual information is non-symmetric in general, except for some special cases.  For example, Shannon's mutual information is defined by 
\begin{align}
 I_{\text{log}}(Y; X)=\mathbb E_{X \sim P_X}[D_{\text{log}}(P_{Y|X}||P_Y)],
\end{align}
where $D_{\text{log}}(P_{Y}||Q_Y)$ is the K-L divergence \cite{cover1999elements}. It is well-known that $I_{\log} (Y;X) = I_{\log} (X;Y)$. An $f$-divergence may not be $L$-divergence and an $L$ divergence may not be $f$-divergence. In fact, K-L divergence $D_{\text{log}}(P_{Y}||Q_Y)$ and its dual $D_{\text{log}}(Q_{Y}||P_Y)$ are unique divergences that belong to $f$-divergence and Bregman divergence \cite{Amari}. Hence, $D_{\text{log}}(P_{Y}||Q_Y)$ and $D_{\text{log}}(Q_{Y}||P_Y)$ are also the only divergences belonging to all the three classes of divergences. 

\subsection{Examples of Loss function $L$, $L$-entropy, and $L$-cross entropy}\label{InformationTheory1}
Several examples of the loss function $L$, $L$-entropy, and $L$-cross entropy are listed below. More examples can be found in \cite{Dawid2004, Dawid1998,farnia2016minimax}.
\subsubsection{Logarithmic~Loss (log-loss)} The log-loss function is given by $L_{\text{log}}(y,Q_Y) = -\log Q_Y(y)$, where the action $a=Q_Y$ is a distribution  in $\mathcal{P}^{\mathcal{Y}}$. The corresponding $L$-entropy is the well-known Shannon's entropy \cite{cover1999elements}, 
\begin{align}
H_{\text{log}}(Y) =& -\sum_{y \in \mathcal{Y}} P_Y(y)\ \text{log}\ P_Y(y), 
\end{align}
where $P_Y$ is the distribution of $Y$. The corresponding $L$-cross entropy is 
\begin{align}
    H_{\text{log}}(Y; \tilde Y)=&-\sum_{y \in \mathcal{Y}} P_{Y}(y)\ \text{log}\ P_{\tilde Y}(y).
\end{align}
The $L$-mutual information and $L$-divergence associated to the log-loss are Shannon's mutual information and the K-L divergence, respectively. 
\subsubsection{Brier~Loss} The Brier loss function is defined as $L_B(y, Q_Y)=\sum_{y' \in \mathcal{Y}} Q_Y(y')^2-2 \ Q_Y(y)+1$ \cite{Dawid2004}. The associated $L$-entropy is
\begin{align}
H_B(Y)=&1-\sum_{y \in \mathcal{Y}} P_Y(y)^2,
\end{align}
and the associated $L$-cross entropy is 
\begin{align}
    H_B(Y; \tilde Y)=&\sum_{y \in \mathcal Y} P_{\tilde Y}(y)^2-2 \sum_{y \in \mathcal Y}P_{\tilde Y}(y)P_{Y}(y)+1.
\end{align}
\subsubsection{0-1~Loss} The 0-1 loss function is given by $L_{\text{0-1}}(y, \hat{y}) = \mathbf{1}(y \neq \hat{y})$, where $\mathbf{1}(A)$ is the indicator function of event $A$. For this case, we have
\begin{align}
H_{\text{0-1}}(Y)~=&~ 1- \max_{y \in \mathcal{Y}} P_Y(y), \\
 H_{\text{0-1}}(Y; \tilde Y)~=&~ 1-P_{Y}\left(\arg\max_{y \in \mathcal{Y}} P_{\tilde Y}(y)\right).
\end{align}

\subsubsection{$\alpha$-Loss} The $\alpha$-loss function is defined by $L_{\alpha}(y, Q_Y)=\frac{\alpha}{\alpha-1} \left[1 - Q_Y(y)^{\frac{\alpha-1}{\alpha}}\right]$ for $\alpha>0$ and $\alpha \neq 1$  \cite[Eq. 14]{alpha}. It becomes the log-loss function at the limit $\alpha \rightarrow 1$ and the 0-1 loss function as the limit $\alpha \rightarrow \infty$. The $L$-entropy and $L$-cross entropy associated to the $\alpha$-loss function are given by 
\begin{align}
H_{\alpha \text{-loss}}(Y)= \frac{\alpha}{\alpha-1} \left[1 - \left(\sum_{y \in \mathcal{Y}} P_Y(y)^{\alpha}\right)^{\frac{1}{\alpha}}\right],
\end{align}
\begin{align}
   H_{\alpha \text{-loss}}(Y; \tilde Y)=\frac{\alpha}{\alpha-1} \left[1 - \left(\sum_{y \in \mathcal{Y}} P_{\tilde Y}(y)^{\alpha}\right)^{\frac{1}{\alpha}}\lambda\right],
\end{align}
where 
\begin{align}
    \lambda=\frac{\sum_{y \in \mathcal Y} \frac{P_{Y}(y)}{P_{\tilde Y}(y)}P_{\tilde Y}(y)^{\alpha}}{\sum_{y \in \mathcal Y} P_{\tilde Y}(y)^{\alpha}}.
\end{align}

\subsubsection{Quadratic~Loss} The quadratic loss function is  $L_2(y, \hat{y})= (y - \hat{y})^2$. The $L$-entropy function associated with the quadratic loss is the variance of $Y$, i.e., \begin{align}
H_2(Y)= \mathbb{E}[Y^2] - \mathbb{E} [Y]^2.
\end{align}
The corresponding $L$-cross entropy is  
\begin{align}
     H_{2}(Y;\tilde Y)=\mathbb E[ Y^2]-2\mathbb E[\tilde Y] \mathbb E[Y]+\mathbb E[\tilde Y]^2.
\end{align}

\subsection{Proof of Equation \eqref{freshness_aware_cond}}\label{pfreshness_aware_cond}
From \eqref{eq_TrainingErrorLB1}, we get 
\begin{align}\label{L_condEntropy1}
&H_L(\tilde Y_0| \tilde X_{-\Theta},\Theta)\nonumber\\
=&\sum_{\substack{x \in \mathcal X, \theta \in \mathcal D}} \!\!\!\! P_{\tilde X_{-\Theta}, \Theta}(x, \theta) H_L(\tilde Y_0|\tilde X_{-\Theta}=x, \Theta=\theta)\nonumber\\
\!\!\!\!=&\!\!\!\sum_{\theta \in \mathcal D} P_{\Theta}(\theta) \sum_{x \in \mathcal X} P_{\tilde X_{-\Theta}|\Theta=\theta}(x) H_L(\tilde Y_0|\tilde X_{-\Theta}=x, \Theta=\theta)\!\!\! \nonumber\\
\!\!\!=&\!\!\!\sum_{\theta \in \mathcal D} P_{\Theta}(\theta) \sum_{x \in \mathcal X} P_{\tilde X_{-\theta}|\Theta=\theta}(x) H_L(\tilde Y_0|\tilde X_{-\Theta}=x, \Theta=\theta).\!\!\!\!
\end{align}

Next, from \eqref{given_L_condentropy}, we obtain that for all $x \in \mathcal X$ and $\theta \in \mathcal D$
\begin{align}\label{given_L_condentropy2}
\!\!\!H_L(\tilde Y_0| \tilde X_{-\Theta}=x,\Theta=\theta)\!\!=& \min_{a\in\mathcal A}\! \mathbb E_{Y\sim P_{\tilde Y_0|\tilde X_{-\Theta}=x, \Theta=\theta}} [L(Y, a)]\!\!\! \nonumber\\
\!\!=& \min_{a\in\mathcal A}\! \mathbb E_{Y\sim P_{\tilde Y_0|\tilde X_{-\theta}=x, \Theta=\theta}} [L(Y, a)].\!\!\!
\end{align}
Because $\tilde Y_0$ and $\tilde X_{-\theta}$ are independent of $\Theta$, for all $x \in \mathcal X, y \in \mathcal Y$, and all $\theta =0, 1, \ldots$
\begin{align}\label{ind_X}
P_{\tilde Y_0|\tilde X_{-\theta}=x, \Theta=\theta}(y)=P_{\tilde Y_0| \tilde X_{-\theta}=x}(y).
\end{align}
Hence, \eqref{given_L_condentropy2} can be simplified as
\begin{align}\label{given_L_condentropy1}
H_L(\tilde Y_0| \tilde X_{-\Theta}=x,\Theta=\theta)=& \min_{a\in\mathcal A}\! \mathbb E_{Y\sim P_{\tilde Y_0|\tilde X_{-\theta}=x}} [L(Y, a)]\nonumber\\
=&H_L(\tilde Y_0| \tilde X_{-\theta}=x).
\end{align}
Substituting \eqref{given_L_condentropy1} and \eqref{ind_X} into \eqref{L_condEntropy1}, we observe that 
\begin{align}
H_L(\tilde Y_0| \tilde X_{-\Theta},\Theta)=&\sum_{\theta \in \mathcal D} P_{\Theta}(\theta) \sum_{x \in \mathcal X} P_{\tilde X_{-\theta}}(x) H_L(\tilde Y_0|\tilde X_{-\theta}=x)\nonumber\\
=&\sum_{\theta \in \mathcal D} P_{\Theta}(\theta)~H_L(\tilde Y_0| \tilde X_{-\theta}).
\end{align}

\subsection{Proof of Equation \eqref{eq_inferenceerror}}\label{peq_inferenceerror}


The expected inference error in time slots $0, 1, \ldots, T-1$ is expressed as 
\begin{align}\label{eq_inferenceerrorproof}
\!\!\!\!\!&\mathrm{err}_{\mathrm{inference}}(T) \nonumber\\
=&\frac{1}{T}\sum_{t=1}^T \!\mathbb{E}_{Y, X, \Delta \sim P_{Y_t, X_{t- \Delta(t)}, \Delta(t)}}\!\!\left[ \!L\left(Y,\phi^*_{P_{\tilde Y_0, \tilde X_{-\Theta},\Theta}}(X,\Delta)\right)\right], \nonumber\\
=&\frac{1}{T}\sum_{t=1}^T\bigg(\sum_{\delta \in \mathcal D} P(\Delta(t)=\delta)\nonumber\\
&\times \mathbb{E}_{Y, X \sim P_{Y_t, X_{t- \Delta(t)} | \Delta(t)=\delta}}\!\!\left[ \!L\left(Y,\phi^*_{P_{\tilde Y_0, \tilde X_{-\Theta},\Theta}}(X,\delta)\right)\right]\bigg),
\end{align} 
where the empirical distribution $P(\Delta(t)=\delta)$ is equal to $\mathbf 1(\Delta(t)=\delta)$, which is an indicator function.

Because $Y_t$ and $X_{t-\delta}$ are independent of $\Delta(t)$, for all $x \in \mathcal X, y \in \mathcal Y$, and all $\delta \in \mathcal D$, we have 
\begin{align}
P_{Y_t, X_{t- \Delta(t)} | \Delta(t)=\delta}(y, x)=&P_{Y_t, X_{t- \delta} | \Delta(t)=\delta}(y, x) \nonumber\\
=&P_{Y_t, X_{t- \delta}}(y, x).
\end{align}
Hence,
\begin{align}\label{e11-2}
&\mathbb{E}_{Y, X \sim P_{Y_t, X_{t- \Delta(t)} | \Delta(t)=\delta}}\!\!\left[ \!L\left(Y,\phi^*_{P_{\tilde Y_0, \tilde X_{-\Theta},\Theta}}(X,\delta)\right)\right]\nonumber\\
=&\mathbb{E}_{Y, X \sim P_{Y_t, X_{t- \delta} | \Delta(t)=\delta}}\!\!\left[ \!L\left(Y,\phi^*_{P_{\tilde Y_0, \tilde X_{-\Theta},\Theta}}(X,\delta)\right)\right]\nonumber\\
=&\mathbb{E}_{Y, X \sim P_{Y_t, X_{t- \delta}}}\!\!\left[ \!L\left(Y,\phi^*_{P_{\tilde Y_0, \tilde X_{-\Theta},\Theta}}(X,\delta)\right)\right].
\end{align}

Now, substituting \eqref{e11-2} into \eqref{eq_inferenceerrorproof}, we obtain \eqref{eq_inferenceerror}.


\ignore{\subsection{Proof of Equation \eqref{Decomposed_Cross_entropy}}\label{pDecomposed_Cross_entropy}
Because $Y_t$ and $X_{t-\delta}$ are independent of $\Theta$, from \eqref{L_condEntropy} we get
\begin{align}
\hat \phi_{P_{Y_t, X_{t-\Theta},\Theta}}(x,\delta)=&\arg\min_{\phi(x, \delta)\in\mathcal A}\!\! E_{Y\sim P_{Y_t|X_{t-\Theta}=x, \Theta=\delta}}[L(Y,\phi(x, \delta))] \nonumber\\
=&\arg\min_{a\in\mathcal A}\!\!~E_{Y\sim P_{Y_t|X_{t-\Theta}=x, \Theta=\delta}}[L(Y,a)] \nonumber\\
=&\arg\min_{a\in\mathcal A}\!\!~E_{Y\sim P_{Y_t|X_{t-\delta}=x}}[L(Y,a)].
\end{align}
From the above equation, we can say that $\hat \phi_{P_{Y_t, X_{t-\Theta},\Theta}}(x,\delta)$ is a Bayes action $a_{P_{Y_t|X_{t-\delta}=x}}$. 

Next, since $\tilde Y_t$ and $\tilde X_{t-\delta}$ are independent of $\Delta$, we obtain 
\begin{align}
&H_L(\tilde Y_{t}; Y_{t} | \tilde X_{t-\Delta}, \Delta)\nonumber\\
=&\sum_{y \in \mathcal Y, x \in \mathcal X, \delta \in \mathcal D} P_{\tilde Y_t, \tilde X_{t-\Delta}, \Delta}(y, x, \delta)~L\left(y, a_{P_{Y_t|X_{t-\delta}=x}}\right)\nonumber\\
=&\sum_{\delta \in \mathcal D} P_{\Delta}(\delta) \sum_{\substack{y \in \mathcal Y, x \in \mathcal X}}P_{\tilde Y_t, \tilde X_{t-\Delta}| \Delta=\delta}(y, x)~L\left(y, a_{P_{Y_t|X_{t-\delta}=x}}\right)\nonumber\\
=&\sum_{\delta \in \mathcal D} P_{\Delta}(\delta)\sum_{\substack{y \in \mathcal Y, x \in \mathcal X}}P_{\tilde Y_t, \tilde X_{t-\delta}}(y, x)~L\left(y, a_{P_{Y_t|X_{t-\delta}=x}}\right)\nonumber\\
=&\sum_{\delta \in \mathcal D} P_{\Delta}(\delta)~H_L(\tilde Y_{t}; Y_{t} | \tilde X_{t-\delta}).
\end{align}}
\subsection{Proof of Lemma \ref{Symmetric}}\label{PSymmetric}
By using the definition of $\chi^2$-conditional mutual information in \eqref{epsilon-Markov-def} and \cite{polyanskiy2014lecture}, we can get 
\begin{align}\label{proof3.21}
&I_{\chi^2}(Z;Y|X)\nonumber\\
&=\sum_{x \mathcal X, y \in \mathcal Y} P_{X, Y} (x, y) \sum_{z \in \mathcal Z} \frac{\left(P_{Z| X, Y}(z| x, y) - P_{Z|X}(z|x)\right)^2}{P_{Z|X}(z|x)}\nonumber\\
&=\!\!\!\!\!\!\!\!\sum_{x \in \mathcal X, y \in \mathcal Y, z \in \mathcal Z}\!\!\!\!\!P_{X}(x)\frac{\left(P_{Z, Y| X}(z, y|x) \!\!- \!P_{Z|X}(z|x) P_{Y|X}(y|x)\right)^2}{P_{Z|X}(z|x)P_{Y|X}(y|x)}\nonumber\\
&=\!\!\!\sum_{x \in \mathcal X, z \in \mathcal Z}\!\!\!P_{X, Z}(x, z)\sum_{y \in \mathcal Y} \frac{\left(P_{Y| X, Z}(y|x, z) - P_{Y|X}(y|x)\right)^2}{P_{Y|X}(y|x)}\nonumber\\
&=I_{\chi^2}(Y;Z|X).
\end{align}
If $Z \overset{\epsilon} \rightarrow X \overset{\epsilon} \rightarrow Y$, then by the definition of $\epsilon$-Markov chain in Section \ref{SecMinTrainingError} and \eqref{proof3.21}, 
\begin{align}
I_{\chi^2}(Z;Y|X)= I_{\chi^2}(Y;Z|X) \leq \epsilon^2.
\end{align}
Hence, we get $Y \overset{\epsilon}  \rightarrow X \overset{\epsilon} \rightarrow Z$. This completes the proof.

\subsection{Proof of Lemma \ref{Lemma_CMI}}\label{PLemma_CMI}
By the definition of $L$-conditional mutual information in \eqref{CMI}, we obtain
\begin{align}\label{cond_entropy_two}
    H_L(Y|X, Z)=&H_L(Y|X)-I_L(Y;Z|X) \nonumber\\
    =&H_L(Y|Z)-I_L(Y;X|Z).
\end{align}
From \eqref{CMI} and \eqref{cond_entropy_two}, we get
\begin{align}
    H_L(Y|X)=&H_L(Y|Z)+I_L(Y;Z|X)-I_L(Y;X|Z) \nonumber\\
    \leq& H_L(Y|Z)+I_L(Y;Z|X),
\end{align}
where the last inequality is due to $I_L(Y;X|Z) \geq 0$. Now, it remains to show that if $Y \overset{\epsilon}\leftrightarrow X \overset{\epsilon}\leftrightarrow  Z$, then
\begin{align}
  I_L(Y;Z|X)=O(\epsilon);
\end{align}
in addition, if $H_L(Y)$ is twice differentiable, then
\begin{align}
  I_L(Y;Z|X)=O(\epsilon^2).
\end{align}

By using the definition of $L$-conditional mutual information from \eqref{CMI}, we see that
\begin{align}\label{CMI1}
I_L(Y ; Z | X) =& \mathbb{E}_{X, Z} [D_L(P_{Y|X,Z} || P_{Y|X})].
\end{align}
If $Y \overset{\epsilon}\leftrightarrow X \overset{\epsilon}\leftrightarrow  Z$ is an $\epsilon$-Markov chain, then 
\begin{align}\label{eMarkovcondition}
  \sum_{(x,z) \in \mathcal X \times \mathcal Z} P_{X,Z}(x,z) D_{\chi^2}(P_{Y|X=x,Z=z} || P_{Y|X=x}) \leq \epsilon^2.
\end{align}
Because the left side of the above inequality is the summation of non-negative terms, the following holds
\begin{align}
  P_{X,Z}(x,z) D_{\chi^2}(P_{Y|X=x,Z=z} || P_{Y|X=x}) \leq \epsilon^2,
\end{align}
for all $(x, z) \in \mathcal X \times \mathcal Z$. 

If $P_{X, Z}(x,z) >0$, then
\begin{align}\label{Oepsilon}
  D_{\chi^2}(P_{Y|X=x,Z=z} || P_{Y|X=x}) \leq \dfrac{\epsilon^2}{P_{X,Z}(x,z)}.
\end{align}

Next, we need the following lemma.
\ignore{\begin{definition}[$\beta$-\textbf{neighborhood}]\cite{huang2019universal}
Given $\beta \geq 0$, the $\beta$- neighborhood of a reference distribution $Q_Y \in relint(\mathcal{P^Y})$ is the set of distributions that are in a Neyman's $\chi^2$-divergence ball of radius $\beta^2$ centered on $Q_Y,$ i.e.,
\begin{align}
\mathcal{N^Y_\beta}(Q_Y) = \left\{ P_Y \in \mathcal{P^Y} : D_{\chi^2}(P_Y || Q_Y) \leq \beta^2\right\}.
\end{align}
\end{definition}}
\begin{lemma}\label{divergenceL}
The following assertions are true:
\begin{itemize}
\item[(a)] If two distributions $Q_Y, P_Y \in \mathcal{P^Y}$ satisfy 
\begin{align}\label{condition_divergenceL}
D_{\chi^2}(P_Y || Q_Y) \leq \beta^2,
\end{align}
then 
\begin{align}
D_L(P_Y || Q_Y)=O(\beta).
\end{align}
\item[(b)] If, in addition, $H_L(Y)$ is twice differentiable in $P_Y$, then
\begin{align}
D_L(P_Y || Q_Y)=O(\beta^2).
\end{align}
\end{itemize}
\end{lemma}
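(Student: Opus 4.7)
\textbf{Proof proposal for Lemma \ref{divergenceL}.} The plan is to combine local information-geometric techniques with a Taylor expansion of $H_L$ around $Q_Y$. First, I would rewrite the $L$-divergence, using the definition of Bayes action and that $\mathbb{E}_{Y\sim Q_Y}[L(Y,a_{Q_Y})] = H_L(Q_Y)$, as
\begin{align}
D_L(P_Y\|Q_Y) = \sum_{y\in\mathcal Y} (P_Y(y)-Q_Y(y))\, L(y,a_{Q_Y}) + H_L(Q_Y) - H_L(P_Y).
\end{align}
The $\chi^2$-bound translates directly into an $\ell_2$-bound on $\mathbf{p}_Y-\mathbf{q}_Y$: since $D_{\chi^2}(P_Y\|Q_Y) = \sum_y (P_Y(y)-Q_Y(y))^2/Q_Y(y) \leq \beta^2$, we have $\|\mathbf{p}_Y-\mathbf{q}_Y\|_2^2 \leq (\max_y Q_Y(y))\,\beta^2 \leq \beta^2$, and similarly a Cauchy--Schwarz argument gives $\|\mathbf{p}_Y-\mathbf{q}_Y\|_1 \leq \sqrt{|\mathcal Y|}\,\beta$.

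For part (a), I would bound the first term above by Cauchy--Schwarz, which gives $|\sum_y (P_Y(y)-Q_Y(y))L(y,a_{Q_Y})| \leq \|L(\cdot,a_{Q_Y})\|_2 \cdot \|\mathbf{p}_Y-\mathbf{q}_Y\|_2 = O(\beta)$, using the fact that the loss is bounded in $y$ (as in the finite-alphabet setting used throughout the paper). For the second term, $|H_L(Q_Y)-H_L(P_Y)| = O(\beta)$ follows from the fact that $H_L$ is concave (as a pointwise minimum of linear functionals of $P_Y$) and thus locally Lipschitz on the relative interior of $\mathcal{P}^{\mathcal Y}$, provided $Q_Y$ lies in the relative interior.

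For part (b), I would invoke the Bregman-divergence representation of $D_L$ from \eqref{L-divergencetoBregman}, which applies since $H_L$ is assumed twice (hence continuously) differentiable and strictly concave:
\begin{align}
D_L(P_Y\|Q_Y) = -\bigl[H_L(\mathbf{p}_Y) - H_L(\mathbf{q}_Y) - \nabla H_L(\mathbf{q}_Y)^T(\mathbf{p}_Y-\mathbf{q}_Y)\bigr].
\end{align}
A second-order Taylor expansion of $H_L$ at $\mathbf{q}_Y$ yields
\begin{align}
D_L(P_Y\|Q_Y) = -\tfrac{1}{2}(\mathbf{p}_Y-\mathbf{q}_Y)^T \nabla^2 H_L(\xi)(\mathbf{p}_Y-\mathbf{q}_Y),
\end{align}
for some $\xi$ on the segment joining $\mathbf{q}_Y$ and $\mathbf{p}_Y$. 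Since $\nabla^2 H_L$ is continuous, it is bounded in operator norm on a small neighborhood of $\mathbf{q}_Y$, giving $D_L(P_Y\|Q_Y) \leq C\,\|\mathbf{p}_Y-\mathbf{q}_Y\|_2^2 = O(\beta^2)$.

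The main obstacle is part (a): establishing a clean $O(\beta)$ bound on $|H_L(Q_Y) - H_L(P_Y)|$ without any smoothness assumption on $H_L$. Local Lipschitz continuity of the concave function $H_L$ on the interior of the simplex handles this, but one must be careful when $Q_Y$ is near the boundary, in which case the Lipschitz constant and thus the hidden constant in $O(\beta)$ may blow up. The statement is therefore implicitly understood for $Q_Y$ in the relative interior, consistent with the convention adopted earlier in the paper for $\chi^2$-divergence.
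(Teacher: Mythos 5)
Your proposal is correct in substance and follows essentially the same local-expansion strategy as the paper: the paper works with the convex gap function $g(\mathbf z)=\sum_y z_y L(y,a_{Q_Y})-\min_{a\in\mathcal A}\sum_y z_y L(y,a)$, notes $g(\mathbf q_Y)=0$ and $g(\mathbf p_Y)=D_L(P_Y\|Q_Y)$, and bounds $g(\mathbf p_Y)$ by a bounded-subgradient (first-order) mean-value estimate for part (a) and a second-order Taylor expansion at $\mathbf q_Y$ for part (b); your decomposition of $D_L$ into the linear term $\sum_y(P_Y(y)-Q_Y(y))L(y,a_{Q_Y})$ plus $H_L(Q_Y)-H_L(P_Y)$ is exactly this $g$, and your $\ell_1/\ell_2$ bounds from the $\chi^2$ condition match the paper's \eqref{L1}--\eqref{L2}. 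Two steps in part (b) need repair, though neither breaks the argument. First, you invoke the Bregman representation \eqref{L-divergencetoBregman} by asserting strict concavity of $H_L$, which is not among the lemma's hypotheses; the paper avoids this by observing that $g\geq 0$ with $g(\mathbf q_Y)=0$, so first-order optimality yields $\nabla g(\mathbf q_Y)=0$, i.e., the gradient identity $\nabla H_L(\mathbf q_Y)^T(\mathbf p_Y-\mathbf q_Y)=\sum_y(P_Y(y)-Q_Y(y))L(y,a_{Q_Y})$ follows from differentiability alone (the loss vector $L(\cdot,a_{Q_Y})$ is a supergradient of the concave $H_L$ at $\mathbf q_Y$, hence equals the gradient when $H_L$ is differentiable there). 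Second, your Lagrange-remainder step bounds the Hessian at an intermediate point $\xi$ by appealing to continuity of $\nabla^2 H_L$, which twice differentiability does not supply; the paper instead uses the Peano-form expansion at the fixed point $\mathbf q_Y$, so only the fixed Hessian at $\mathbf q_Y$ enters and no continuity is needed. With these adjustments your proof coincides with the paper's; the caveat you raise in part (a) (constants depending on how close $Q_Y$ is to the boundary of the simplex, and on the boundedness of $L(\cdot,a_{Q_Y})$) is equally implicit in the paper's bounded-subgradient step and is harmless for the way the lemma is used, since it is applied with fixed conditional distributions on finite alphabets.
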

\begin{proof}
See in Appendix \ref{PdivergenceL}.
\end{proof}

Define set $\hat{\mathcal X} \times \hat{\mathcal Z} =\{(x, z) \in \mathcal X \times \mathcal Z : P_{X, Z}(x, z) >0\}$. Then, using \eqref{Oepsilon} and Lemma \ref{divergenceL}(a) in \eqref{CMI1}, we obtain
\begin{align}\label{lemma1result1}
   &I_L(Y;Z|X) \nonumber\\
   =&\sum_{(x,z) \in \mathcal X \times \mathcal Z} P_{X,Z}(x,z)~D_{L}(P_{Y|X=x,Z=z} || P_{Y|X=x}) \nonumber\\
   =&\sum_{(x,z) \in \hat{\mathcal X} \times \hat{\mathcal Z}} P_{X,Z}(x,z)~D_{L}(P_{Y|X=x,Z=z} || P_{Y|X=x}) \nonumber\\
   =&\sum_{(x,z) \in \hat{\mathcal X} \times \hat{\mathcal Z}} P_{X, Z}(x,z)~O\left(\frac{\epsilon}{\sqrt{P_{X, Z}(x,z)}}\right) \nonumber\\
   =&~O(\epsilon).
 \end{align}
Similarly, when $H_L(Y)$ is differentiable in $P_Y$, by using Lemma \ref{divergenceL}(b) we obtain 
\begin{align}\label{lemma1result2}
    I_L(Y;Z|X)=O(\epsilon^2).
\end{align}
This completes the proof of Lemma \ref{Lemma_CMI}.

\subsection{Proof of Theorem \ref{theorem1}}\label{Ptheorem1}
By using the definition of the $L$-conditional mutual information \eqref{CMI}, we can show that
\begin{align}
&H_L(\tilde Y_0 | \tilde X_{-k}, \tilde X_{-k-1}) \nonumber\\
=&H_L(\tilde Y_0 | \tilde X_{-k-1})-I_L( \tilde Y_0; \tilde X_{-k} | \tilde X_{-k-1}) \nonumber\\
=&H_L(\tilde Y_0 | \tilde X_{-k})-I_L( \tilde Y_0; \tilde X_{-k-1} | \tilde X_{-k}).
\end{align}
We can expand $H_L(\tilde Y_0 | \tilde X_{-k})$ as 
\begin{align}
H_L(\tilde Y_0 | \tilde X_{-k})=& H_L(\tilde Y_0 | \tilde X_{-k-1})+ I_L(\tilde Y_0; \tilde X_{-k-1} | \tilde X_{-k})\nonumber\\
&-I_L(\tilde Y_0; \tilde X_{-k} | \tilde X_{-k-1}).
\end{align}
As the above equation is valid for all values of $k$, taking the summation of $H_L(\tilde Y_0 | \tilde X_{-k})$ from $k=0$ to $\theta-1$ yields
\begin{align} \label{PMarkovC1}
H_L(\tilde Y_0 | \tilde X_{-\theta})=& H_L(\tilde Y_0 | \tilde X_0) + \sum_{k=0}^{\theta-1} I_L(\tilde Y_0; \tilde X_{-k} | \tilde X_{-k-1})\nonumber\\
&- \sum_{k=0}^{\theta-1} I_L(\tilde Y_0; \tilde X_{-k-1} | \tilde X_{-k}).
\end{align}
Thus, we can write $H_L(\tilde Y_0 | \tilde X_{-\theta})$ as a function of $\theta$ as in \eqref{eMarkov} and \eqref{g12function}.
Moreover, the functions $g_1(\theta)$ and $g_2(\theta)$ defined in \eqref{g12function} are non-decreasing of $\theta$ as $I_L(\tilde Y_0; \tilde X_{-k} | \tilde X_{-k-1}) \geq 0$ and $I_L(\tilde Y_0; \tilde X_{-k-1} | \tilde X_{-k}) \geq 0$ for all values of $k$. 

To prove the next part, we use Lemma \ref{Lemma_CMI}. Because for every $\mu, \nu \geq 0$, $\tilde Y_{0} \overset{\epsilon} \leftrightarrow \tilde X_{-\mu} \overset{\epsilon} \leftrightarrow \tilde X_{-\mu-\nu}$ is an $\epsilon$-Markov chain, we can write
\begin{align}
I_L(\tilde Y_0; \tilde X_{-k-1} | \tilde X_{-k})=O(\epsilon).
\end{align}
This implies
\begin{align}
g_2(\theta)=\sum_{k=0}^{\theta-1} O(\epsilon)= O(\epsilon).
\end{align}
The last equality is due to the summation property of big-O-notation. 
\ignore{Similarly, if $H_L(\tilde Y)$ is twice differentiable in $P_Y$, we obtain
\begin{align}
g_2(\theta)=\sum_{k=0}^{\theta-1} O(\epsilon^2)=O(\epsilon^2).
\end{align}
This completes the proof.}

\subsection{Proof of Theorem \ref{theorem2}}\label{Ptheorem2}
Using \eqref{freshness_aware_cond} and Theorem \ref{theorem1}, we obtain
\begin{align}\label{Th2_1}
\!\!\!\!\!\!&H_L(\tilde Y_0 | \tilde X_{-\Theta}, \Theta) \nonumber\\
=& \sum_{\theta \in \mathcal D} \!P_{\Theta}(\theta) \big(H_L(\tilde Y_0 | \tilde X_0)+\!\hat g_1(\theta)-g_2(\theta)\big)\nonumber\\
=&H_L(\tilde Y_0 | \tilde X_0)\!+\mathbb E_{\Theta \sim P_{\Theta}}\! [\hat g_1(\Theta)]\!-\mathbb E_{\Theta \sim P_{\Theta}}\! [g_2(\Theta)],
\end{align}
where 
\begin{align}
\hat g_1(\theta)&=g_1(\theta)-H_L(\tilde Y_0 | \tilde X_0)\nonumber\\
&=\sum_{k=0}^{\theta-1} I_L(\tilde Y_0; \tilde X_{-k} | \tilde X_{-k-1}). 
\end{align}
Because mutual information $I_L(\tilde Y_0; \tilde X_{-k} | \tilde X_{-k-1})$ is non-negative,
\begin{align} 
\hat g_1(\theta)=\sum_{k=0}^{\theta-1} I_L(\tilde Y_0; \tilde X_{-k} | \tilde X_{-k-1})\geq 0.
\end{align}
Because $\hat g_1(\theta)$ is non-negative for all $\theta$, the function $\hat g_1(\cdot)$ is Lebesgue integrable with respect to all probability measure $P_\Theta$ \cite{durrett2019probability}. Hence, the expectation $\mathbb E_{\Theta \sim P_{\Theta}} [\hat g_1(\Theta)]$ exist. Note that $E_{\Theta \sim P_{\Theta}} [\hat g_1(\Theta)]$ can be infinite ($+\infty$). By using the same argument, we get that $E_{\Theta \sim P_{\Theta}} [g_2(\Theta)]$ exists, but can be infinite. Moreover, the function $\hat g_1(\theta)$ and $g_2(\theta)$ is non-decreasing in $\theta$.

Because (i) the function $\hat g_1(\theta)$ is non-decreasing in $\theta$, (ii) the expectation $\mathbb E_{\Theta \sim P_{\Theta}} [\hat g_1(\Theta)]$ exist, and (iii) $\Theta_1 \leq_{st} \Theta_2$, we get \cite{stochasticOrder}
\begin{align}\label{Sorder}
\mathbb E_{\Theta \sim P_{\Theta_1}} [\hat g_1(\Theta)] \leq \mathbb E_{\Theta \sim P_{\Theta_2}} [\hat g_1(\Theta)].
\end{align}

Next, by using \eqref{Th2_1}, \eqref{Sorder}, and Theorem \ref{theorem1}(b), we obtain \eqref{dynamicsoln}:
\begin{align}
&H_L(\tilde Y_0 | \tilde X_{-\Theta_1}, \Theta_1) \nonumber\\
=& H_L(\tilde Y_0 | \tilde X_0)+\mathbb E_{\Theta \sim P_{\Theta_1}} [\hat g_1(\Theta)]-\mathbb E_{\Theta \sim P_{\Theta_1}} [g_2(\Theta)] \nonumber\\
\leq& H_L(\tilde Y_0 | \tilde X_0)+\mathbb E_{\Theta \sim P_{\Theta_2}} [\hat g_1(\Theta)]-\mathbb E_{\Theta \sim P_{\Theta_1}} [g_2(\Theta)]\nonumber\\
=&H_L(\tilde Y_0 | \tilde X_{-\Theta_2}, \Theta_2) \nonumber\\
&+\mathbb E_{\Theta \sim P_{\Theta_2}} [g_2(\Theta)] -\mathbb E_{\Theta \sim P_{\Theta_ 1}} [g_2(\Theta)] \nonumber\\
=&H_L(\tilde Y_0 | \tilde X_{-\Theta_2}, \Theta_2)+O(\epsilon).
\end{align} 

\ignore{Similarly, if $H_L(Y)$ is twice differentiable, then \eqref{dynamic_eqn} holds.
This completes the proof.}

\subsection{Proof of Lemma \ref{lemma_inference}}\label{Plemma_inference}
By using condition \eqref{T3condition2} and Lemma \ref{divergenceL}(a), we obtain for all $x \in \mathcal X$
\begin{align}\label{E1}
D_L\left(P_{Y_{t}| X_{t-\delta}=x} || P_{\tilde Y_{0}| \tilde X_{-\delta}=x}\right)=O(\beta).
\end{align}
Next, by using \eqref{L-CondCrossEntropy} and \eqref{E1}, we get 
\begin{align}
&H_L(Y_{t}; \tilde Y_{0} | X_{t-\delta})\nonumber\\
=& H_L(Y_{t} | X_{t-\delta}) \nonumber\\
&+\sum_{x \in \mathcal X} P_{X_{t-\delta}}(x)~D_L\left(P_{Y_{t}| X_{t-\delta}=x} || P_{\tilde Y_{0}| \tilde X_{-\delta}=x}\right)\nonumber\\
=& H_L(Y_{t} | X_{t-\delta})+O(\beta).
\end{align}  

\subsection{Proof of Theorem \ref{theorem3}}\label{Ptheorem3}
Part (a): 
By the definition of $L$-conditional cross entropy \eqref{L-CondCrossEntropy}, we get 
\begin{align}\label{T3_1}
\!\!\!\!\!&H_L(Y_{t}; \tilde Y_0 | X_{t-\delta}) \nonumber\\
=&\!\!\!\sum_{x \in \mathcal X} \!\!P_{X_{t-\delta}}(x)\mathbb{E}_{Y \sim P_{Y_t| X_{t- \delta}=x}}\!\!\left[ \!L\left(Y,a_{\tilde Y_0|\tilde X_{-\delta}=x}\right)\right] ,\!\!
\end{align} 
where the Bayes predictor $a_{\tilde Y_0|\tilde X_{-\delta}=x}$ is fixed in the inference phase for every time slot $t$. Because $\{(\tilde Y_t, \tilde X_t),t \in \mathbb Z\}$ is a stationary process, \eqref{T3_1} is a function of the inference AoI $\delta$.

Part (b): By using Lemma \ref{lemma_inference} and Theorem \ref{theorem1}, we get 
\begin{align}
H_L(Y_{t}; \tilde Y_{0} | X_{t-\delta_1})=& H_L(Y_{t} | X_{t-\delta_1})+O(\beta) \nonumber\\
\leq& H_L( Y_{t} | X_{t-\delta_2})+O(\epsilon)+O(\beta) \nonumber\\
=&H_L(Y_{t}; \tilde Y_{0} | X_{t-\delta_2})+O(\beta)+O(\epsilon)+O(\beta) \nonumber\\
=&H_L(Y_{t}; \tilde Y_{0} | X_{t-\delta_2})+O(\max\{\epsilon, \beta\}).
\end{align}
This completes the proof.

\ignore{\subsection{Proof of Theorem \ref{theorem4}}\label{Ptheorem4}
By using \eqref{Decomposed_Cross_entropy}, Lemma \ref{lemma_inference}, and Theorem \ref{theorem2}, we obtain
\begin{align}
H_L(\tilde Y_{t}; Y_{t} | \tilde X_{t-\Delta_1}, \Delta_1)=&\sum_{\delta \in \mathcal D} P_{\Delta_1}(\delta)H_L(\tilde Y_{t}; Y_{t} | \tilde X_{t-\delta}) \nonumber\\
=& \sum_{\delta \in \mathcal D} P_{\Delta_1}(\delta)~H_L(\tilde Y_{t} | \tilde X_{t-\delta})+O(\beta) \nonumber\\
=& H_L(\tilde Y_{t} | \tilde X_{t-\Delta_1}, \Delta_1)+O(\beta) \nonumber\\
\leq& H_L(\tilde Y_{t} | \tilde X_{t-\Delta_2}, \Delta_2)+O(\epsilon)+O(\beta) \nonumber\\
=&H_L(\tilde Y_{t}; Y_{t} | \tilde X_{t-\Delta_2}, \Delta_2)+O(\beta)+O(\epsilon)+O(\beta) \nonumber\\
=&H_L(\tilde Y_{t}; Y_{t} | \tilde X_{t-\Delta_2}, \Delta_2)+O(\max\{\epsilon, \beta\}).
\end{align}}
\subsection{Simplification of the Gittins Index in \eqref{gittins} }\label{GittinsDerivation}
For the bandit process $\{\Delta(t) : t \geq 0\}$ in \eqref{Bandit}, define the $\sigma$-field 
\begin{align}\label{sigma-field}
\mathcal F_s^t= \sigma(\Delta(t+k): k \in \{0, 1, \ldots, s\}),
\end{align} which is the set of events whose occurrence are determined by the realization of the process $\{\Delta(t+k) :  k \in \{0, 1, \ldots, s\}\}$ from time slot $t$ up to time slot $t+s$. Then, $\{ \mathcal F_k^t, k \in \{0, 1, \ldots\}\}$ is the filtration of the time shifted process $\{\Delta(t+k) : k \in \{0, 1, \ldots\} \}$. We define $\mathfrak M$ as the set of all stopping times by 
\begin{align}
\mathfrak M=\{ \nu \geq 0 : \{\nu=k\} \in \mathcal F_k^t, k \in \{0, 1, 2, \ldots\}\}.
\end{align}

The Gittins index $\gamma(\delta)$ \cite{gittins2011multi} is the value of reward $r$ for which the \textsc{stop} and \textsc{continue} actions are equally profitable at state $\Delta(t)=\delta$. Hence, $\gamma(\delta)$ is the root of the following equation of $r$:  
\begin{align}\label{gittins_subtract}
&\sup_{\nu \in \mathfrak M, \nu \neq 0} \mathbb E\left[ \sum_{k=0}^{\nu+T_1-1} [r-p(\Delta(t+k))]\bigg| \Delta(t)=\delta \right]\nonumber\\
&~~~~~~=\mathbb E\left[ \sum_{k=0}^{T_1-1} [r-p(\Delta(t+k))]\bigg| \Delta(t)=\delta \right].
\end{align}
where the left hand side of \eqref{gittins_subtract} is the maximum total expected profit under \textsc{continue} action and the right hand side of \eqref{gittins_subtract} is the total expected profit under \textsc{stop} action. By re-arranging \eqref{gittins_subtract}, it can be expressed as
\begin{align}\label{gittins_subtract1}
&\sup_{\nu \in \mathfrak M, \nu \neq 0} \mathbb E\left[ \sum_{k=0}^{\nu-1} [r-p(\Delta(t+k+T_1))]\bigg| \Delta(t)=\delta \right]=0.
\end{align}
Because the left hand side of \eqref{gittins_subtract1} is the supremum of strictly increasing and linear functions of $r$, it is convex, continuous, and strictly increasing in $r$. Thus, the fixed-point equation \eqref{gittins_subtract1} has a unique root. The root can also be expressed as
\begin{align}\label{gittins1}
&\gamma(\delta) \nonumber\\
=&\bigg\{r:\!\!\!\!\sup_{\nu \in \mathfrak M, \nu \neq 0}\!\!\!\! \mathbb E\left[ \sum_{k=0}^{\nu-1} [r-p(\Delta(t+k+T_1))]\bigg| \Delta(t)\!=\!\delta \right]\!\!=\!0\bigg\}.
\end{align}
Let $\nu^* \in \mathfrak M$ be the optimal stopping time that solves \eqref{gittins1}. Because of \eqref{Bandit} and $T_1 \geq 1$, for any $k \leq \nu^*$, $\Delta(t+k)=\Delta(t)+k$. Hence, $\{\Delta(t+k): 1 \leq k \leq \nu^*\}$ is completely determined by the initial value $\Delta(t)$ and for all $k \leq \nu^*$, the $\sigma$-field $\mathcal F_k^t$ can be simplified as $\mathcal F_k^t=\sigma(\Delta(t))$. Thus, any stopping time in $\mathfrak M$ is a deterministic time, given $\Delta(t)=\delta$. By this, \eqref{gittins1} can be simplified as
\begin{align}\label{gittins2}
\!\!&\gamma(\delta)\nonumber\\
=&\bigg\{r:\!\!\!\!\! \sup_{\tau \in \{1, 2, \ldots\}}\!\!\!\!\! \mathbb E\left[ \sum_{k=0}^{\tau-1} [r-p(\Delta(t+k+T_1))]\bigg| \Delta(t)\!=\!\delta \right]\!\!=\!0\bigg\}\nonumber\\
=&\bigg\{r:\!\!\!\!\! \inf_{\tau \in \{1, 2, \ldots\}} \mathbb E\left[ \sum_{k=0}^{\tau-1} [p(\Delta(t+k+T_1))-r]\bigg| \Delta(t)\!=\!\delta \right]\!\!=\!0\bigg\}\nonumber\\
=&\!\bigg\{r: \!\!\!\!\! \inf_{\tau \in \{1, 2, \ldots\}} \sum_{k=0}^{\tau-1}\!\mathbb E\left[p(\Delta(t+k+T_1))-r\bigg| \!\Delta(t)\!=\delta \right]\!\!=\!0\bigg\},
\end{align}
where $\tau$ is a deterministic positive integer.

Define 
\begin{align}
f(r)=\inf_{\tau \in \{1, 2, \ldots\}} \sum_{k=0}^{\tau-1}\mathbb E\left[p(\delta+k+T_1)-r\right].
\end{align}
Similar to Lemma 7 in \cite{orneeTON2021}, the following lemma holds.
\begin{lemma}\label{fraction_programming}
$f(r) \lesseqgtr 0$ if and only if
\begin{align}
\inf_{\tau \in \{1, 2, \ldots\}} \frac{1}{\tau} \sum_{k=0}^{\tau-1} \mathbb E \left [p(\delta+k+T_{1}) \right] \lesseqgtr r.
\end{align}
\end{lemma}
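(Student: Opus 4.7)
The plan is to exploit the simple identity
\begin{equation*}
g(\tau) - \tau r \;=\; \tau\!\left(\frac{g(\tau)}{\tau} - r\right), \qquad g(\tau) \;:=\; \sum_{k=0}^{\tau-1}\mathbb E\bigl[p(\delta+k+T_{1})\bigr],
\end{equation*}
so that for every positive integer $\tau$ the sign of $g(\tau) - \tau r$ matches the sign of $g(\tau)/\tau - r$. Taking infima over $\tau$ then transfers the three-way comparison between $f(r)$ and $0$ into the analogous comparison between $\inf_{\tau}g(\tau)/\tau$ and $r$.

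I would treat the three cases in turn. For ``$<$'': if $f(r) < 0$, there is a $\tau^*$ with $g(\tau^*) - \tau^* r < 0$; dividing by $\tau^* \geq 1$ gives $g(\tau^*)/\tau^* < r$, so $\inf_\tau g(\tau)/\tau < r$; the converse multiplies a witnessing inequality by $\tau^* \geq 1$. For ``$>$'': if $\inf_\tau g(\tau)/\tau > r$, pick $r' \in (r,\inf_\tau g(\tau)/\tau)$; then $g(\tau) - \tau r \geq \tau(r'-r) \geq r'-r > 0$ for every $\tau \geq 1$, yielding $f(r) \geq r'-r > 0$. The ``$=$'' case drops out by elimination once the two strict cases are settled, since the trichotomy partitions the real line on both sides.

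The main obstacle is the reverse direction of the ``$>$'' case: the inequality $f(r) > 0$ a priori only gives $g(\tau)/\tau > r$ for every $\tau$, which yields only $\inf_\tau g(\tau)/\tau \geq r$, not the strict inequality one needs. Closing this gap uses the boundedness hypothesis $|p(\delta)|\leq M$, which forces $g(\tau)/\tau\in[-M,M]$ and keeps $f(r)$ finite in a neighborhood of its root, together with a separating-value argument that rules out the degenerate scenario where $g(\tau)/\tau$ approaches $r$ only in the limit along some subsequence $\tau_n\to\infty$. I would follow the strategy used in the analogous Lemma~7 of \cite{orneeTON2021} to execute this step cleanly.
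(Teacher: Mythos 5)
Your identity $g(\tau)-\tau r=\tau\bigl(g(\tau)/\tau-r\bigr)$ and the resulting case analysis is the right skeleton, and it matches the fractional-programming route the paper has in mind (the paper gives no proof of its own, merely pointing to Lemma~7 of \cite{orneeTON2021}). Your ``$<$'' equivalence and the implication $\inf_\tau g(\tau)/\tau>r\Rightarrow f(r)>0$ are correct. The problem is the direction you yourself single out, $f(r)>0\Rightarrow\inf_\tau g(\tau)/\tau>r$: this is where the entire content of the lemma lies, and the patch you sketch cannot close it. Boundedness of $p$ does not rule out the ``degenerate scenario'' you describe. Take $T_1\equiv 1$, $p(\delta+1)=1$, $p(\delta+k+1)=0$ for all $k\ge 1$, and $r=0$: then $g(\tau)=1$ for every $\tau$, so $f(0)=1>0$, while $\inf_\tau g(\tau)/\tau=\inf_\tau 1/\tau=0=r$. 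Here $|p|\le 1$ and the $T_i$'s are i.i.d.\ with finite mean, so every hypothesis you invoke holds; hence no argument from $|p(\delta)|\le M$ alone, and no choice of separating value, can produce the strict inequality. The obstruction is that the infimum of the running averages need not be attained: along a minimizing sequence $\tau_n\to\infty$ the products $\tau_n\bigl(g(\tau_n)/\tau_n-r\bigr)$ can stay bounded away from zero, which is exactly what happens in the example.

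Consequently your proposal does not prove the strict and equality cases, and the ``$=$ by elimination'' step inherits the same problem (in the example $\inf_\tau g(\tau)/\tau=r$ but $f(r)=1\neq 0$). What your one-line manipulations do establish, unconditionally, is the non-strict version: $f(r)\ge 0$ if and only if $\inf_\tau g(\tau)/\tau\ge r$, and $f(r)<0$ if and only if $\inf_\tau g(\tau)/\tau<r$; this weaker statement is precisely what is used to pass from \eqref{eq_fractional} to \eqref{eq_fractional1} in the proof of Theorem \ref{theorem5}. To obtain the full trichotomy as stated you need an additional ingredient, e.g.\ that the infimum defining the Gittins index in \eqref{gittins} is attained at some finite $\tau$; with attainment, a minimizer $\tau^*$ gives $g(\tau^*)-\tau^*r\le 0$ whenever $\inf_\tau g(\tau)/\tau\le r$, and your elimination argument then goes through verbatim. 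Deferring the step to Lemma~7 of \cite{orneeTON2021} does not discharge it: whatever structural property makes the analogous step work there has to be identified and verified in this discrete-$\tau$ setting, and, as the example shows, it is not implied by $|p(\delta)|\le M$.
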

By \eqref{gittins1}, \eqref{gittins2}, and Lemma \ref{fraction_programming}, the root of equation $f(r) = 0$ is given by \eqref{gittins}. This completes the proof. 

\subsection{Proof of Theorem \ref{theorem5}}\label{Ptheorem5}
The infinite horizon average AoI penalty problem \eqref{sub_scheduling_problem} can be cast as a Markov decision problem (MDP). \ignore{We solve the problem by using dynamic programming (DP) \cite{bertsekas2012dynamic}.} To describe the MDP, we define the action, state, state transition, and penalty function.  

\begin{itemize}
\item Action: If the channel server is idle, the possible actions taken by the scheduler at time slot $t$ are $d_b(t)=0$ \textsc{``do not send''} and $d_b(t)=1$ \textsc{``send''}. Then, $S_{i+1}$ is determined by
\begin{align}\label{modified}
S_{i+1} = \inf_{t \in \mathbb Z}\{t \geq D_i : d_b(t)=1\}.
\end{align}
Hence, one can also use $g=(d_b(0), d_b(1), d_b(2), \ldots)$ to represent a policy in $\mathcal G$. 

\item AoI Penalty: The penalty at every time slot $t$ is $p(\Delta(t))$.

\item State: \ignore{Because the transmission times are not affected by the adopted scheduling policies in $\mathcal F \times \mathcal G$ and transmission time $T_i$'s are i.i.d., the only available information useful for the scheduling decision is the current age value $\Delta(t)$. Hence,} The state of the MDP is the age value $\Delta(t)$. \ignore{{\red Consider the $i$-th feature is the latest feature delivered before time slot $t$. The information available at the scheduler are the previous AoI values and actions $\{\big(\Delta(s), d_b(s)\big): 0 \leq s < t\}$, previous transmission times $\{T_j: j=0, \ldots, i \}\}$, and the current AoI $\Delta(t)$. The AoI penalty $p(\Delta(t+k))$ for $k > 0$ depends on the current AoI $\Delta(t)$, future transmission times $\{T_{i+k}: k=1, 2, \ldots\}$, and the actions $\{d_b(t+k): k=0, 1, \ldots\}$. Because (i) the future transmission times $\{T_{i+k}: k=1, 2, \ldots\}$ are not affected by the previous actions $\{d_b(s): s=0, 1, \ldots, t-1\}$ and (ii) the transmission times $T_i$ are mutually independent, the AoI penalty $p(\Delta(t+k))$ for $k > 0$ is independent of the available information at the scheduler given the current AoI $\Delta(t)$. Hence, the only relevant information available at the scheduler to make decision $d_b(t)$ is the current age value $\Delta(t)$. So, the state of the MDP is the current age value $\Delta(t)$.}}

\item State Transition: Because $b_i=b$ for all $i$, the state $\Delta(t)$ evolves as follows
\begin{align}\label{AoIProcess}
\Delta(t)=
  \begin{cases}
        T_{i}+b, &\text{if}~t=D_{i}, i=0, 1, \ldots, \\
        \Delta(t-1)+1, &\text{otherwise}.
    \end{cases}
    \end{align}
\end{itemize}

\ignore{In order to solve this MDP, we consider an associated $\alpha$-discounted MDP for $0< \alpha <1$. If $\Delta(t)=\delta$, the Bellman optimality equation of the $\alpha$-discounted MDP is given by
\begin{align}\label{alpha-discounted-problem}
J_{\alpha}(\delta)=\min_{d_b(t) \in \{0, 1\}} Q^{\alpha}_b(\delta, d_b(t)),
\end{align}
where 
\begin{align}\label{alpha-discounted-problem1}
Q^{\alpha}_b(\delta, 1)&= \mathbb E \left [ \sum_{k=0}^{T_{i+1}-1} \alpha^k p(\delta+k)\right]+ \mathbb E[\alpha^{T_{i+1}} J_{\alpha}(T_{i+1}+b)], \nonumber\\
Q^{\alpha}_b(\delta, 0)&=\inf_{\tau \in \mathfrak M} \mathbb E \left [ \sum_{k=0}^{\tau+T_{i+1}-1} \alpha^k p(\delta+k)\right]+ \mathbb E[\alpha^{T_{i+1}} J_{\alpha}(T_{i+1}+b)].
\end{align}

Without loss of generality, we can add $\min_{\delta} p(\delta)$ to $p(\delta)$ and get
\begin{align}
\tilde p(\delta)&=p(\delta)+\min_{\delta} p(\delta), \text{for all}~\delta=0, 1, \ldots.
\end{align}
Because the AoI penalty function $p(\delta)$ satisfies 
\begin{align}
|p(\delta)| \leq M, \text{\normalfont{for all}}~ \delta \in \mathcal D,
\end{align}
$\tilde p(\delta)$ is non-negative and satisfies
\begin{align}
0 \leq \tilde p(\delta) \leq \tilde M=M+\min_{\delta} p(\delta), \text{for all}~\delta=0, 1, \ldots.
\end{align}
By replacing $\tilde p(\delta)$ with $p(\delta)$, \eqref{alpha-discounted-problem} and \eqref{alpha-discounted-problem1} become
\begin{align}\label{alpha-discounted-problem2}
\tilde J_{\alpha}(\delta)=\min_{d_b(t) \in \{0, 1\}} \tilde Q^{\alpha}_b(\delta, d_b(t)),
\end{align}
where 
\begin{align}\label{alpha-discounted-problem3}
\tilde Q^{\alpha}_b(\delta, 1)&= \mathbb E \left [ \sum_{k=0}^{T_{i+1}-1} \alpha^k \tilde p(\delta+k)\right]+ \mathbb E[\alpha^{T_{i+1}} \tilde J_{\alpha}(T_{i+1}+b)], \nonumber\\
\tilde Q^{\alpha}_b(\delta, 0)&=\inf_{\tau \in \mathfrak M} \mathbb E \left [ \sum_{k=0}^{\tau+T_{i+1}-1} \alpha^k \tilde p(\delta+k)\right]+ \mathbb E[\alpha^{T_{i+1}} \tilde J_{\alpha}(T_{i+1}+b)].
\end{align}
and the action space is finite, similar to \cite[Proposition 5.6.4] {bertsekas2012dynamic}, it is proven in Appendix \ref{pbounded_alpha} that
\begin{align}\label{bounded_alpha}
\bigg| \tilde J_{\alpha}(\delta)-\mathbb E[\tilde J_{\alpha}(T_{i+1}+b)]\bigg| \leq \tilde M~\mathbb E[T_{i+1}],
\end{align}
where $\mathbb E[T_{i+1}]$ is finite. 
Now, we consider 
\begin{align}
V_{\alpha, b}(\delta)=\tilde J_{\alpha}(\delta)-\mathbb E[\tilde J_{\alpha}(T_{i+1}+b)]-(1-\alpha) \mathbb E[\tilde J_{\alpha}(T_{i+1}+b)]
\end{align}
By taking $\lim_{\alpha \to 1} V_{\alpha, b}(\delta)$, we obtain the Bellman optimality equation for the average cost MDP \eqref{sub_scheduling_problem}:
\begin{align}
\lim_{\alpha \to 1} V_{\alpha, b}(\delta)=V_{b}(\delta)
\end{align}
with
\begin{align}
\lim_{\alpha \to 1} (1-\alpha) \mathbb E[\tilde J_{\alpha}(T_{i+1}+b)]=\tilde p_b+\min_{\delta} p(\delta).
\end{align}}
Because the state space is countable, action space is finite, and $p(\delta)$ is bounded,  \ignore{it is proven in Appendix \ref{pbounded_alpha} that} if $\Delta(t)=\delta$, then the optimal decision $d_b(t)$ in time slot $t$ satisfies the following Bellman optimality equation \cite[Section 5.6.3] {bertsekas2012dynamic}
\begin{align}\label{Bellman_optimality}
V_b(\delta)=\min_{d_b(t) \in \{0, 1\}} Q_b(\delta, d_b(t)),
\end{align}
where the function $V_b$ is the relative value function, the relative value $V_b(\delta)$ is the expected total cost relative to the optimal average cost $\bar p_b$ of the problem \eqref{sub_scheduling_problem} when starting from state $\delta$ and following an optimal policy, and $Q_b(\delta, d_b(t))$ is given by
\begin{align}
&Q_b(\delta, 1)\nonumber\\
&= \mathbb E \left [ \sum_{k=0}^{T_{i+1}-1} [p(\delta+k)-\bar p_b]\right]+\mathbb E[V_b(T_{i+1}+b)], \\
&Q_b(\delta, 0)\nonumber\\
&=\inf_{\nu \in \mathfrak M, \nu \neq 0} \mathbb E \left [ \sum_{k=0}^{\nu+T_{i+1}-1} [p(\delta+k)-\bar p_b]\right]+\mathbb E[V_b(T_{i+1}+b)] \nonumber\\
&\overset{(a)}{=}\inf_{\tau \in \{1, 2, \ldots\}} \mathbb E \left [ \sum_{k=0}^{\tau+T_{i+1}-1} [p(\delta+k)-\bar p_b]\right]+\mathbb E[V_b(T_{i+1}+b)] \nonumber\\
&\overset{(b)}{=}\inf_{\tau \in \{1, 2, \ldots\}} \mathbb E \left [ \sum_{k=0}^{\tau-1} [p(\delta+k+T_{i+1})-\bar p_b]\right] \nonumber\\
&+\mathbb E \left [ \sum_{k=0}^{T_{i+1}-1} [p(\delta+k)-\bar p_b]\right]+\mathbb E[V_b(T_{i+1}+b)].
\end{align}
where $\mathfrak M$ is the set of all stopping times defined in Appendix \ref{GittinsDerivation}.  As deduced in Appendix \ref{GittinsDerivation}, given $\Delta(t)=\delta$, the set of all stopping times is $\mathfrak M=\{0, 1, 2, \ldots\}$. We obtain equality (a) because $\mathfrak M$ is $\{0, 1, 2, \ldots\}$ given $\Delta(t)=\delta$. By re-arranging (a), we get equality (b). 

By \eqref{Bellman_optimality}, $d_b(t)=1$ is optimal, if 
\begin{align}\label{Bellman-inequality}
Q_b(\delta, 0) -Q_b(\delta, 1) \geq 0.
\end{align}
The inequality \eqref{Bellman-inequality} can also be expressed as 
\begin{align}\label{eq_fractional}
\inf_{\tau \in \{1, 2, \ldots\}} \mathbb E \left [ \sum_{k=0}^{\tau-1} [p(\delta+k+T_{i+1})-\bar p_b]\right] \geq 0.
\end{align} 
Next, Lemma \ref{fraction_programming} implies that  the inequality \eqref{eq_fractional} holds if and only if 
\begin{align}\label{eq_fractional1}
\inf_{\tau \in \{1, 2, \ldots\}} \frac{1}{\tau} \sum_{k=0}^{\tau-1} \mathbb E \left [p(\delta+k+T_{i+1}) \right] \geq \bar p_b.
\end{align}
Now, using \eqref{modified} and \eqref{Bellman-inequality}-\eqref{eq_fractional1}, we get \eqref{OptimalPolicy_Sub}. 

Since the transmission times $T_i$ are i.i.d., similar to \cite[Appendix F]{SunNonlinear2019}, we get that the optimal objective value to \eqref{sub_scheduling_problem} is 
\begin{align}\label{getoptimalvalue}
\bar p_b=\frac{\mathbb{E}\left[\sum_{t=D_i(\bar p_b)}^{D_{i+1}(\bar p_b)-1}  p\big(\Delta(t)\big)\right]}{\mathbb{E}\left[D_{i+1}(\bar p_b)-D_i(\bar p_b)\right]}.
\end{align}
Hence, $\bar p_b$ is equal to the root of \eqref{bisection}. The left hand side of \eqref{bisection} is concave, continuous, and strictly decreasing in $\beta_b$ \cite{orneeTON2021}. Hence, the root of \eqref{bisection} is unique. This completes the proof. 

\subsection{Proof of Theorem \ref{theorem6}}\label{Ptheorem6}
\ignore{Theorem \ref{theorem6} consists of part (a) and part (b). If part (a) holds, then part (b) is directly obtained using Theorem \ref{theorem5}. Hence, It is enough to prove part (a).}

The problem \eqref{scheduling_problem} can be cast as an MDP problem. The State and the penalty of the MDP are the same as the MDP discussed in Appendix \ref{Ptheorem5}. The action space is different: If the channel is idle, the scheduler sends $(b+1)$-th freshest feature or does not send any feature. Let $d(t)=\textsc{idle}$ means the scheduler does not send feature at time $t$ and $d(t)=b$ means the scheduler sends $(b+1)$-th freshest feature at time $t$. Then, $S_{i+1}$ and $b_{i+1}$ are determined by 
\begin{align}\label{sendingtimecal}
S_{i+1}&=\inf_{t \in \mathbb Z} \{t \geq S_i+T_i: d(t)\neq \textsc{idle}\}, \\
b_{i+1}&=d(S_{i+1}).
\end{align}

If the channel is idle and $\Delta(t)=\delta$, then the optimal decision $d(t)$ in time slot $t$ satisfies the following Bellman optimality equation 
\begin{align}\label{Bellman_optimality1}
V(\delta)=\min_{d(t) \in \{\textsc{idle}, 0, 1, \ldots, B-1\}} Q(\delta, d(t)),
\end{align}
where the function $V$ is the relative value function and $Q(\delta, d(t))$ is given by
\begin{align}
&Q(\delta, b) \nonumber\\
&= \mathbb E \left [ \sum_{k=0}^{T_{i+1}-1} [p(\delta+k)-\bar p_{\text{opt}}]\right]+\mathbb E[V(T_{i+1}+b)], \\
&Q(\delta, \textsc{idle}) \nonumber\\
&=\inf_{\tau \in \{1, 2, \ldots\}} \mathbb E \left [ \sum_{k=0}^{\tau-1} [p(\delta+k+T_{i+1})-\bar p_{\text{opt}}]\right] \nonumber\\
&+\mathbb E \left [ \sum_{k=0}^{T_{i+1}-1} [p(\delta+k)-\bar p_{\text{opt}}]\right]+\min_{b \in \mathcal B} \mathbb E[V(T_{i+1}+b)].
\end{align}
where $\mathcal B=\{0, 1, \ldots, B-1\}$ and $\bar p_{\text{opt}}$ is the optimal value of \eqref{scheduling_problem}.

By \eqref{Bellman_optimality1}, $d(t)= \textsc{idle}$ is not an optimal choice if
\begin{align}\label{proofinequality}
Q(\delta, \textsc{idle})-\min_{b \in \mathcal B} Q(\delta, b) \geq 0.
\end{align}
By using similar steps \eqref{Bellman-inequality}-\eqref{eq_fractional1}, we can get that the inequality \eqref{proofinequality} holds if and only if 
\begin{align}\label{proofinequality1}
\gamma(\delta) \geq \bar p_{\text{opt}}.
\end{align}
Then, by using \eqref{sendingtimecal}, \eqref{proofinequality}, and \eqref{proofinequality1}, we get the optimal sending time $S^*_{i+1}$ in \eqref{Optimal_Scheduler}. 

Next, we need to get the optimal $b^*_{i+1}$ and $\bar p_{\text{opt}}$. When $\Delta(S^*_{i+1})=\delta$, $b^*_{i+1}=b^*$ is optimal if
\begin{align}
b^*=&\arg\min_{b \in \mathcal B} Q(\delta, b) \nonumber\\
=&\arg\min_{b \in \mathcal B}\mathbb E \left [ \sum_{k=0}^{T_{i+1}-1} [p(\delta+k)-\bar p_{\text{opt}}]\right]+\mathbb E[V(T_{i+1}+b)] \nonumber\\
=&\arg\min_{b \in \mathcal B}\mathbb E[V(T_{i+1}+b)].
\end{align}
Observe that the optimal $b^*$ is independent of the state $\Delta(S_{i+1})=\delta$. Moreover, because $T_i$ is identically distributed, 
\begin{align}
\mathbb E[V(T_{0}+b)]=\mathbb E[V(T_{1}+b)]=\ldots= \mathbb E[V(T_{i}+b)], \forall i.
\end{align}
Thus, the optimal $b^*$ is constant for all $i$. 
If $b_{i+1}=b$ for all $i$, then $\bar p_b$ is the average inference error. Hence, the optimal choice $b^*$ satisfies 
\begin{align}
b^*=\arg\min_{b \in \mathcal B} \bar p_b,
\end{align}
where $\bar p_b$ is $\beta_b$, which is the root of \eqref{bisection}. The optimal value is 
\begin{align}
\bar p_{\text{opt}}=\min_{b \in \mathcal B} \bar p_b.
\end{align}

\subsection{Proof of Theorem \ref{theorem7}}\label{ptheorem7}
\ignore{Similar to Section \ref{Ptheorem5}, the problem \eqref{decoupled_problem} can be cast as an MDP and we solve the problem by using dynamic programming. Then, we prove the indexability of the arm $(l, b_l)$.}

If the channel is idle and $\Delta_l(t)=\delta$, the optimal decision $d_{l, b_l}(t)$ for the problem \eqref{decoupled_problem} in time slot $t$ satisfies the following Bellman optimality equation 
\begin{align}\label{Belmandecoupled}
V_{l, b_l}(\delta)=\min_{d_{l, b_l}(t) \in \{0, 1\}} Q_{l, b_l}(\delta, d_{l, b_l} (t)),
\end{align}
where the function $V_{l, b_l}$ is the relative value function and $ Q_{l, b_l}(\delta, d_{l, b_l} (t))$ is given by
\begin{align}
&Q_{l, b_l}(\delta, 1)\nonumber\\
&= \mathbb E \left [ \sum_{k=0}^{T_{l, i+1}-1} [w_l p_l(\delta+k)-\bar p_{l, b_l}(\lambda)]\right]\! \\
&+\!\mathbb E[V_{l, b_l}(T_{l, i+1}+b_l)]\!+\!\lambda, \nonumber\\
&Q_{l, b_l}(\delta, 0)\nonumber\\
&=\inf_{\tau \in \{1, 2, \ldots\}} \mathbb E \left [ \sum_{k=0}^{\tau-1} [w_l p_l(\delta+k+T_{l, i+1})-\bar p_{l, b_l}(\lambda)]\right] \nonumber\\
&+\mathbb E \left [ \sum_{k=0}^{T_{l, i+1}-1} [w_l p_l(\delta+k)-\bar p_{l, b_l}(\lambda)]\right]\!\nonumber\\
&+\! \mathbb E[V_{l, b_l}(T_{l, i+1}+b_l)]\!+\! \lambda.
\end{align}
where $\bar p_{l, b_l}(\lambda)$ is optimal objective value of the problem \eqref{decoupled_problem}. 

Similar to the proof of \eqref{eq_fractional1} and \eqref{getoptimalvalue} in Section \ref{Ptheorem5}, by solving \eqref{Belmandecoupled}, $d_{l, b_l}(t)=0$ is optimal if 
\begin{align}\label{solutiondecoupled}
w_l~\gamma_l(\delta) \leq \bar p_{l, b_l}(\lambda), 
\end{align} 
otherwise $d_{l, b_l}(t)=1$ is optimal, where $\bar p_{l, b_l}(\lambda)$ is given by
\begin{align}\label{getoptimalvaluearm}
\bar p_{l, b_l}(\lambda)=\frac{C^l_{i, i+1}(\lambda)}{N^l_{i, i+1}(\lambda)},
\end{align}
where $C^l_{i, i+1}(\lambda)$ is the expected penalty of source $l$ starting from $i$-th delivery time to $(i+1)$-th delivery time and $N^l_{i, i+1}(\lambda)$ is the expected number of time slots from $i$-th delivery time to $(i+1)$-th delivery time, given by
\begin{align}\label{C}
\!\!C^l_{i, i+1}(\lambda)\!\!=\!\mathbb{E}\left[\sum_{t=D_{l, i}(\bar p_{l, b_l}(\lambda))}^{D_{l, i+1}(\bar p_{l, b_l}(\lambda))-1}  \!\!\!\!\!w_l p_l\big(\Delta_l(t)\big)\right]+\lambda \mathbb E[T_{l, i+1}],
\end{align}
\begin{align}\label{N}
N^l_{i, i+1}(\lambda)=\mathbb{E}\left[D_{l, i+1}(\bar p_{l, b_l}(\lambda))-D_{l, i}(\bar p_{l, b_l}(\lambda))\right],
\end{align}
the $(i+1)$-th feature delivery time $D_{l, i+1}(\bar p_{l, b_l}(\lambda))$ from source $l$ is  
\begin{align}\label{deliverytimearm}
D_{l, i+1}(\bar p_{l, b_l}(\lambda))=S_{l, i+1}(\bar p_{l, b_l}(\lambda))+T_{l, i+1}
\end{align}
and the $(i+1)$-th sending time $S_{l, i+1}(\bar p_{l, b_l}(\lambda))$ is 
\begin{align}\label{sendingtimearm}
S_{l, i+1}(\bar p_{l, b_l}(\lambda))=\inf_{t \in \mathbb Z}\{t \geq D_{l, i}: w_l~\gamma_l(\delta) \leq \bar p_{l, b_l}(\lambda)\}.
\end{align}
The sending time $S_{l, i+1}(\bar p_{l, b_l}(\lambda))$ can also be expressed as 
\begin{align}\label{revisedsendingtime}
S_{l, i+1}(\bar p_{l, b_l}(\lambda))=D_{l, i}(\bar p_{l, b_l}(\lambda))+z(T_{l, i}, b_l, \bar p_{l, b_l}(\lambda)),
\end{align}
the waiting time $z(T_{l, i}, b_l, \bar p_{l, b_l}(\lambda))$ after the delivery time $D_{l, i}(\bar p_{l, b_l}(\lambda))$ is 
\begin{align}\label{waitingformulation}
&z(T_{l, i}, b_l, \bar p_{l, b_l}(\lambda))\nonumber\\
=&\inf_{z \in \mathbb Z}\{ z \geq 0: w_l~\gamma_l(\Delta_l(D_{l, i}(\bar p_{l, b_l}(\lambda)))+z) \geq \bar p_{l, b_l}(\lambda)\} \nonumber\\
=&\inf_{z \in \mathbb Z}\{ z \geq 0: w_l~\gamma_l(T_{l, i}+b_l+z) \geq \bar p_{l, b_l}(\lambda)\},
\end{align}
where the last equality holds because from \eqref{multi-source-Age}, we get
\begin{align}\label{MultiAoIProcess}
\Delta_l(t)=
  \begin{cases}
        T_{l, i}+b_l, &\text{if}~t=D_{l, i}(\bar p_{l, b_l}(\lambda)), i=0, 1, \ldots, \\
        \Delta_l(t-1)+1, &\text{otherwise}.
    \end{cases}
    \end{align}
By using \eqref{deliverytimearm}-\eqref{MultiAoIProcess}, \eqref{C} and \eqref{N} reduce to 
\begin{align}
C^l_{i, i+1}(\lambda)=&\mathbb{E}\left[\sum_{k=T_{l, i}}^{T_{l, i}+z(T_{l, i}, b_l, \bar p_{l, b_l}(\lambda))+T_{l, i+1}-1}  \!\!\!\!\!\!w_l~p_l(k+b_l)\right]\nonumber\\
&+\lambda \mathbb E[T_{l, i+1}],
\end{align}
\begin{align}
N^l_{i, i+1}(\lambda)=\mathbb{E}\left[ z(T_{l, i}, b_l, \bar p_{l, b_l}(\lambda))+T_{l, i+1}\right].
\end{align}
Thus, the optimal objective value $\bar p_{l, b_l}(\lambda)$ is exactly equal to 
\begin{align}\label{bisectionrevised}
\bar p_{l, b_l}(\lambda)=\{\beta: f(\beta)+\lambda \mathbb E[T_{l, i+1}]=0\},
\end{align}
\begin{align}\label{f-beta-lambda}
f(\beta)=&\mathbb{E}\left[\sum_{k=T_{l, i}}^{T_{l, i}+z(T_{l, i}, b_l, \beta)+T_{l, i+1}-1}  \!\!\!\!\!\!w_l~p_l(t+b_l)\right] \nonumber\\
&- \beta~ \mathbb{E}\left[ z(T_{l, i}, b_l, \beta)+T_{l, i+1}\right],
\end{align}

Now, we prove the indexability of the arm $(l, b_l)$ by using \eqref{solutiondecoupled} and \eqref{bisectionrevised}. Because $f(\beta)$ is continuous and strictly decreasing in $\beta$ \cite{orneeTON2021}, $\bar p_{l, b_l}(\lambda)$ defined in \eqref{bisectionrevised} is unique and continuous in $\lambda$. From \eqref{bisectionrevised}, we get 
\begin{align}\label{eq_f1}
f(\bar p_{l, b_l}(\lambda_1)) = -\lambda_1 \mathbb E[T_{l, i+1}],
f(\bar p_{l, b_l}(\lambda_2)) = -\lambda_2 \mathbb E[T_{l, i+1}],
\end{align}
Since $f(\beta)$ is continuous and strictly decreasing in $\beta$, if $\lambda_2 > \lambda_1$, then \eqref{eq_f1} yields
\begin{align}\label{eq_f2}
\bar p_{l, b_l}(\lambda_2) > \bar p_{l, b_l}(\lambda_1),
\end{align}
 i.e., $\bar p_{l, b_l}(\lambda)$ is continuous and strictly increasing function of $\lambda$. By using the definition of the set $\Omega_{l, b_l}(\lambda)$ in Section \ref{Multi-scheduling} and \eqref{solutiondecoupled}, we obtain 
\begin{align}\label{passive-set}
\Omega_{l, b_l}(\lambda)=\{\delta: \bar p_{l, b_l}(\lambda) \geq w_l~\gamma_l (\delta)\}.
\end{align}
\ignore{From \eqref{passive-set}, we observe that $\delta \in \Omega_{l, b_l}(\lambda)$ if $\beta_{b_l}(\lambda)$ is greater or equal to the value of weighted Gittins index $w_l~ \gamma(\delta)$.} For a given $\delta$, if $\delta \in \Omega_{l, b_l}(\lambda_1)$, then
\begin{align}\label{eq_f3}
\bar p_{l, b_l}(\lambda_1) \geq w_l~\gamma_l (\delta).
\end{align}
From \eqref{eq_f2}, \eqref{passive-set}, and \eqref{eq_f3}, we obtain $\delta \in \Omega_{l, b_l}(\lambda_2)$. Hence, we get $\Omega_{l, {b_l}}(\lambda_1) \subseteq \Omega_{l, b_l}(\lambda_2)$. Thus, by the definition of indexablity in Section \ref{Multi-scheduling}, the arm $(l, b_l)$ is indexable for all values of $l$ and $b_l$. This concludes the proof. 

\subsection{Proof of Theorem \ref{theorem8}}\label{ptheorem8}
By using the definition of Whittle index in Section \ref{Multi-scheduling}, the Whittle index $W_{l, b_l}(\delta)$ is
\begin{align}\label{def-whittle}
W_{l, b_l}(\delta) = \inf \{\lambda \in\mathbb R: \delta\in \Omega_{l, b_l}(\lambda)\}
\end{align}
Now, substituing \eqref{passive-set} into \eqref{def-whittle}, we obtain
\begin{align}\label{eq_whittle_1}
W_{l, b_l}(\delta)=\inf\{\lambda: w_l~\gamma_l(\delta) \leq \bar p_{l, b_l}(\lambda)\}.
\end{align}
Because $\bar p_{l, b_l}(\lambda)$ is continuous and strictly increasing function of $\lambda$, \eqref{eq_whittle_1} implies that the Whittle index $W_{l, b_l}(\delta)$ is unique and satisfies 
\begin{align}\label{eq_whittle_2}
w_l~\gamma_l(\delta) = \bar p_{l, b_l}(W_{l, b_l}(\delta)),
\end{align}
where 
\begin{align}\label{beta-lambda-2}
\bar p_{l, b_l}(\lambda)=\frac{C^l_{i, i+1}(\lambda)}{N^l_{i, i+1}(\lambda)}.
\end{align}
Substituting \eqref{eq_whittle_2} into \eqref{beta-lambda-2}, we obtain 
\begin{align}\label{beta-lambda-3}
w_l \gamma(\delta)=\frac{C^l_{i, i+1}(W_{l, b_l}(\delta))}{N^l_{i, i+1}(W_{l, b_l}(\delta))}.
\end{align}
Because $T_{l, i}$'s are i.i.d. for all $i$, we can write 
\begin{align}\label{C2}
&C^l_{i, i+1}(W_{l, b_l}(\delta))\nonumber\\
=&w_l~\mathbb{E}\left[\sum_{k=T_{l, 1}}^{T_{l, 1}+z(T_{l, 1}, b_l, \bar p_{l, b_l}(W_{l, b_l}(\delta))+T_{l, 2}-1}p_l(k+b_l)\right] \nonumber\\
&+W_{l, b_l}(\delta)  \mathbb E[T_{l, 1}],
\end{align}
\begin{align}\label{N2}
N^l_{i, i+1}(W_{l, b_l}(\delta))=\mathbb{E}\left[ z(T_{l, 1}, b_l, \bar p_{l, b_l}(W_{l, b_l}(\delta))+T_{l, 2}\right].
\end{align}

Equations \eqref{beta-lambda-3}-\eqref{N2} yield \eqref{Whittle_Index}. In the statement of Theorem \ref{theorem8}, we denoted the waiting time $z(T_{l, 1}, b_l, w_l \gamma_l(\delta))$ as $z(T_{l, 1}, b_l, \delta)$. This concludes the proof.

\subsection{Proof of Lemma \ref{divergenceL}}\label{PdivergenceL}
To prove Lemma \ref{divergenceL}, we will use the sub-gradient mean value theorem \cite{bertsekas2003convex}. When $H_L(Y)$ is twice differentiable in $P_Y$, we can use second order Taylor series expansion. 

By condition \eqref{condition_divergenceL}, we get 
\begin{align}
\sum_{y \in \mathcal Y} \dfrac{(P_Y(y)-Q_Y(y))^2}{Q_Y(y)} \leq \beta^2. 
\end{align}
The above condition can be expressed equivalently as for all $y \in \mathcal Y$,
\begin{align}\label{L1}
 \dfrac{P_Y(y)-Q_y(y)}{\sqrt{Q_Y(y)}} =& ~\beta~\psi(y),
 \end{align}
 where 
\begin{align}
\sum_{y \in \mathcal{Y}} \psi^2(y) \leq 1.
\end{align}
This yields
\begin{align}\label{L2}
 \sum_{y \in \mathcal{Y}} \left(P_Y(y)-Q_Y(y)\right)^2 = O(\beta^2).
\end{align}

Define a convex function $g: \mathbb R^{|\mathcal Y|} \mapsto \mathbb R$ as
\begin{align}\label{gfunction}
g(\mathbf z)=\sum_{i=1}^{|\mathcal Y|} z_i~L(y_i, a_{Q_Y})-\min_{a\in\mathcal A} \sum_{i=1}^{|\mathcal Y|}  z_i~ L(y_i, a),
\end{align}
where $a_{Q_Y}$ is a Bayes action associated with distribution $Q_Y$.

Because $g(\mathbf z)$ is a convex function and the set of sub-gradients of $g(\mathbf z)$ is bounded \cite[Proposition 4.2.3]{bertsekas2003convex}, by using sub-gradient mean value theorem \cite{bertsekas2003convex}, \eqref{divergence}, and \eqref{L1}, we get 
\begin{align}\label{Lemma3_e1}
    g(\mathbf{p}_Y)=&D_L(P_Y||Q_Y)\nonumber\\
    =&g(\mathbf{q}_Y)+O\left(\sum_{y \in \mathcal Y} \left|P_Y(y)-Q_Y(y)\right|\right) \nonumber\\
 =&D_L(Q_Y||Q_Y)+O\left(\sum_{y \in \mathcal Y} \left|P_Y(y)-Q_Y(y)\right|\right)\nonumber\\
 =&O\left(\sum_{y \in \mathcal Y} \left|\beta~\psi(y) \sqrt{Q_Y(y)}\right|\right) \nonumber\\
 =&O(\beta).
\end{align}

Now, we moved to the case that $H_L(Y)$ is assumed to be twice differentiable in $P_Y$. The function $g(\mathbf{p}_Y)$ can also be expressed by $H_L(Y)$ as
\begin{align}\label{gH}
g(\mathbf{p}_Y)=&\sum_{i=1}^{|\mathcal Y|} P_Y(y_i)~L(y_i, a_{Q_Y})-\min_{a\in\mathcal A} \sum_{i=1}^{|\mathcal Y|}  P_Y(y_i)~ L(y_i, a) \nonumber\\
=&\sum_{i=1}^{|\mathcal Y|} P_Y(y_i)~L(y_i, a_{Q_Y})-H_L(Y).
\end{align}
Because $H_L(Y)$ is assumed to be twice differentiable in $P_Y$, from \eqref{gH}, we get that $g(\mathbf{p}_Y)$ is twice differentiable in $\mathbf{p}_Y$. Moreover, 
\begin{align}
g(\mathbf{p}_Y) \geq 0, \forall \mathbf{p}_Y \in \mathbb R^{|\mathcal Y|}
\end{align} and 
\begin{align}\label{dqq}
g(\mathbf{q}_Y)=D_L(Q_Y||Q_Y)=0.
\end{align}
By using the first-order necessary condition for optimality, the gradient of $g(\mathbf{p}_Y)$ at point $\mathbf{p}_Y=\mathbf{q}_Y$ is zero, i.e.,
\begin{align}\label{first_order_optimality}
\nabla g(\mathbf{q}_Y)=0.
\end{align}
Next, by \eqref{dqq} and \eqref{first_order_optimality}, the second-order Taylor series expansion of $g(\mathbf p_Y)$ at $\mathbf p_Y=\mathbf q_Y$ is
\begin{align}\label{Taylor}
g(\mathbf{p}_Y)
=&g(\mathbf{q}_Y)+(\mathbf{p}_Y-\mathbf{q}_Y)^T \nabla g(\mathbf{q}_Y)\nonumber\\
&+ \frac{1}{2} (\mathbf{p}_Y-\mathbf{q}_Y)^T \mathcal H(\mathbf{q}_Y) (\mathbf{p}_Y-\mathbf{q}_Y)\nonumber\\
&+ o\left(\sum_{y \in \mathcal Y} (P_Y(y)-Q_Y(y))^2\right) \nonumber\\
=&\frac{1}{2}\! (\mathbf{p}_Y\!-\mathbf{q}_Y)^T\! \mathcal H(\mathbf{q}_Y) (\mathbf{p}_Y-\mathbf{q}_Y)\! \nonumber\\
&+ o\left(\sum_{y \in \mathcal Y} \!(P_Y(y)\!-Q_Y(y))^2\right),
\end{align}
where $\mathcal H(\mathbf{q}_Y)$ is the Hessian matrix of $g(\mathbf{p}_Y)$ at point $\mathbf{p}_Y=\mathbf{q}_Y$. 

Because $g(\mathbf{p}_Y)$ is a convex function, $$(\mathbf{p}_Y-\mathbf{q}_Y)^T \mathcal H(\mathbf{q}_Y) (\mathbf{p}_Y-\mathbf{q}_Y) \geq 0.$$ Moreover, we can write 
\begin{align}\label{big_O_H}
&\frac{1}{2}(\mathbf{p}_Y\!-\mathbf{q}_Y)^T\! \mathcal H(\mathbf{q}_Y) (\mathbf{p}_Y-\mathbf{q}_Y)\! \nonumber\\
&=\frac{1}{2}\sum_{y, y'} (P_{Y}(y)-Q_Y(y))\mathcal H(\mathbf{q}_Y)_{y, y'}(P_{Y}(y')-Q_Y(y'))\nonumber\\
&=O\left(\sum_{y, y'}\! (P_Y(y)-Q_Y(y))(P_{Y}(y')-Q_Y(y'))\right).
\end{align} 
Now, using \eqref{big_O_H} in \eqref{Taylor}, we get
\begin{align}\label{last}
g(\mathbf{p}_Y)=&O\left(\sum_{y, y'}\! (P_Y(y)-Q_Y(y))(P_{Y}(y')-Q_Y(y'))\right)\nonumber\\
&+ o\left(\sum_{y \in \mathcal Y} \!(P_Y(y)\!-Q_Y(y))^2\right).
\end{align}
Substituting \eqref{L1} and \eqref{L2} into \eqref{last}, we obtain
\begin{align}
g(\mathbf{p}_Y)=D_L(P_Y||Q_Y)=O(\beta^2)+o(\beta^2)=O(\beta^2).
\end{align}

This completes the proof.

\ignore{\subsection{Proof of Equation \eqref{bounded_alpha}}\label{pbounded_alpha}
By using \eqref{alpha-discounted-problem}, we get 
\begin{align}\label{bound1}
J_{\alpha}(\delta) &\leq Q^{\alpha}_b(\delta, 1) \nonumber\\
&=\mathbb E \left [ \sum_{k=0}^{T_{i+1}-1} \alpha^k p(\delta+k)\right]+ \mathbb E[\alpha^{T_{i+1}} J_{\alpha}(T_{i+1}+b)]\nonumber\\
& \leq M~\mathbb E \left [ \sum_{k=0}^{T_{i+1}-1} \alpha^k \right]+\mathbb E[ J_{\alpha}(T_{i+1}+b)] \nonumber\\
& \leq M~\mathbb E[ T_{i+1}]+\mathbb E[ J_{\alpha}(T_{i+1}+b)].
\end{align}

Moreover, by using \eqref{alpha-discounted-problem}, we obtain
\begin{align}
J_{\alpha}(\delta) \geq \mathbb E[\alpha^{T_{i+1}} J_{\alpha}(T_{i+1}+b)],
\end{align}
which yields
\begin{align}\label{bound2}
\mathbb E[J_{\alpha}(T_{i+1}+b)]-J_{\alpha}(\delta) & \leq  \mathbb E[(1-\alpha^{T_{i+1}}) J_{\alpha}(T_{i+1}+b)] \nonumber\\
& \leq \mathbb E[(1-\alpha^{T_{i+1}})] \frac{M}{1-\alpha} \nonumber\\
& \leq (1-\alpha^{\mathbb E[T_{i+1}]}) \frac{M}{1-\alpha}\nonumber\\
&=(1+\alpha+\cdots+\alpha^{\mathbb E[T_{i+1}]-1})~M \nonumber\\
&\leq M~\mathbb E[T_{i+1}].
\end{align}
The third inequality is due to the convexity of $\alpha^{k}$ as a function of $k$ and Jensen's inequality. 

Equations \eqref{bound1} and \eqref{bound2} implies \eqref{bounded_alpha}. This concludes the proof.}

\subsection{Toy Example}\label{ToyExample}
Let $X_t$ is a Markov chain and $Y_t = f(X_{t-d})$. One can view $X_t$ as the input of a causal system with delay $d \geq 0$, and $Y_t$ as the system output. We need to predict $Y_t$ based on the observation $X_{t-\delta}$. Then, we have the following lemma. 
 
\begin{lemma}\label{ToyExampleLemma1}
If $Y_t = f(X_{t-d})$, $X_t$ is a Markov chain, and the training and inference datasets have similar empirical distributions, then $H_L( \tilde Y_0 | \tilde X_{\delta})$ and $H_L(Y_t;\tilde Y_0 | X_{t-\delta})$ decrease with $\delta$ when $0 \leq \delta \leq d$ and increase with $\delta$ when $\delta \geq d$.
\end{lemma}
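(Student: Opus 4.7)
\subsection*{Proof proposal for Lemma \ref{ToyExampleLemma1}}

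The plan is to first establish the monotonicity of $H_L(\tilde Y_0 \mid \tilde X_{-\delta})$ using the data processing inequality for $L$-conditional entropy (the same tool invoked in Section \ref{SecMinTrainingError}, from \cite[Lemma 12.1]{Dawid1998}), and then transfer the result to the cross-entropy $H_L(Y_t;\tilde Y_0 \mid X_{t-\delta})$ via Lemma \ref{lemma_inference}. The key observation is that the assumption $Y_t=f(X_{t-d})$ pins down the informative lag exactly at $\delta=d$: since $\tilde Y_0=f(\tilde X_{-d})$, we have $H_L(\tilde Y_0 \mid \tilde X_{-d})=H_L(f(\tilde X_{-d})\mid \tilde X_{-d})=0$ for any loss $L$ that vanishes on degenerate distributions, so $\delta=d$ gives the global minimum, and monotonicity on either side remains to be checked.

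For the case $0\le \delta_1\le \delta_2\le d$, note that $-d\le -\delta_2\le -\delta_1\le 0$, so by the Markov property of $\{X_t\}$, $\tilde X_{-d}\to \tilde X_{-\delta_2}\to \tilde X_{-\delta_1}$ is a Markov chain. Because $\tilde Y_0$ is a deterministic function of $\tilde X_{-d}$, it is conditionally independent of everything else given $\tilde X_{-d}$, so we can extend this to $\tilde Y_0\to \tilde X_{-d}\to \tilde X_{-\delta_2}\to \tilde X_{-\delta_1}$. The data processing inequality then yields $H_L(\tilde Y_0\mid \tilde X_{-\delta_2})\le H_L(\tilde Y_0\mid \tilde X_{-\delta_1})$, i.e., $H_L(\tilde Y_0\mid \tilde X_{-\delta})$ is non-increasing in $\delta$ on $[0,d]$. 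For $d\le \delta_1\le \delta_2$, the Markov property gives $\tilde X_{-\delta_2}\to \tilde X_{-\delta_1}\to \tilde X_{-d}$, and appending $\tilde Y_0=f(\tilde X_{-d})$ produces the chain $\tilde X_{-\delta_2}\to \tilde X_{-\delta_1}\to \tilde X_{-d}\to \tilde Y_0$. The data processing inequality now gives the reverse inequality $H_L(\tilde Y_0\mid \tilde X_{-\delta_1})\le H_L(\tilde Y_0\mid \tilde X_{-\delta_2})$, establishing monotone increase on $[d,\infty)$.

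To handle $H_L(Y_t;\tilde Y_0\mid X_{t-\delta})$, the ``similar empirical distributions'' hypothesis should be translated into the quantitative closeness condition \eqref{T3condition2} with some small $\beta$. Lemma \ref{lemma_inference} then gives $H_L(Y_t;\tilde Y_0 \mid X_{t-\delta})=H_L(Y_t\mid X_{t-\delta})+O(\beta)$, and by stationarity of $\{(Y_t,X_t)\}$ together with the identical functional relationship $Y_t=f(X_{t-d})$, the analysis of the preceding paragraph applies verbatim to $H_L(Y_t\mid X_{t-\delta})$, yielding the same piecewise-monotonic shape (decreasing for $\delta\le d$, increasing for $\delta\ge d$) up to the $O(\beta)$ correction that vanishes when the two distributions coincide.

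The main obstacle is a mildly delicate one: verifying that the data processing inequality of \cite[Lemma 12.1]{Dawid1998} applies to a four-node Markov chain in the form we need, which may require a short argument chaining two three-node applications, and confirming that the ``similar empirical distributions'' assumption indeed implies the pointwise $\chi^2$-bound needed by Lemma \ref{lemma_inference} (as opposed to only its averaged form in \eqref{T3condition2_diff}). Neither is technically deep, but the precise statement of the lemma should indicate whether the conclusion is exact ($=$) under perfect distributional match or only approximate ($+O(\beta)$) under the stated hypothesis.
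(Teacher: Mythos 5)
Your proposal is correct and follows essentially the same route as the paper's proof: use the data processing inequality for $L$-conditional entropy, exploiting that $Y_t=f(X_{t-d})$ and the Markov property of $\{X_t\}$ make $\tilde Y_0 \leftrightarrow \tilde X_{-\delta} \leftrightarrow \tilde X_{-\delta'}$ a Markov chain on each side of $\delta=d$, and then invoke Lemma \ref{lemma_inference} (with the distribution-matching hypothesis) to identify $H_L(Y_t;\tilde Y_0\mid X_{t-\delta})$ and $H_L(\tilde Y_0\mid \tilde X_{-\delta})$ with $H_L(Y_t\mid X_{t-\delta})$. The only cosmetic differences are that the paper compares adjacent lags $\delta$ and $\delta\pm 1$ via three-node chains rather than your four-node chain with general $\delta_1\le\delta_2$, and it treats "similar empirical distributions" as giving exact equality rather than an $O(\beta)$ correction.
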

\begin{proof}
If the training and inference datasets have similar empirical distributions, by using Lemma \ref{lemma_inference} and definition of $L$-conditional entropy \eqref{eq_cond_entropy1}, we can show 
\begin{align}\label{Toy1}
H_L(Y_{t}; \tilde Y_0 | X_{-\delta})&= H_L(Y_{t} | X_{t-\delta}). \\ \label{Toy2}
H_L( \tilde Y_0 | \tilde X_{\delta})&= H_L(Y_{t} | X_{t-\delta}).
\end{align}
Now, we only need to prove that $H_L(Y_{t} | X_{t-\delta})$ decreases with $\delta$ when $0 \leq \delta \leq d$ and increases with $\delta$ when $\delta \geq d$.

Because $Y_t = f(X_{t-d})$ and $X_t$ is a Markov chain, $Y_t \leftrightarrow X_{t-\delta} \leftrightarrow X_{t-(\delta-1)}$ is a Markov chain for all $0 \leq \delta \leq d$. By the data processing inequality for $L$-conditional entropy \cite[Lemma 12.1] {Dawid1998}, one can show that  for all $0 \leq \delta \leq d$, 
\begin{align}
 H_L(Y_{t} | X_{t-\delta}) \leq H_L(Y_{t} | X_{t-(\delta-1)})
\end{align}
This proves that $H_L(Y_{t} | X_{t-\delta})$ decreases with $\delta$ when $0 \leq \delta \leq d$. 

Next, since $Y_t = f(X_{t-d})$ and $X_t$ is a Markov chain, $Y_t \leftrightarrow X_{t-\delta} \leftrightarrow X_{t-(\delta+1)}$ is a Markov chain for all $\delta \geq d$.  By the data processing inequality \cite[Lemma 12.1] {Dawid1998}, one can show that for all $ \delta \geq d$, 
\begin{align}
H_L(Y_{t} | X_{t-\delta}) \leq H_L(Y_{t} | X_{t-(\delta+1)}).
\end{align}
This proves that $H_L(Y_{t} | X_{t-\delta})$ increases with $\delta$ when $\delta \geq d$. 
\end{proof}

\end{document}